\documentclass[aps,prd,preprint,nofootinbib,superscriptaddress]{revtex4-2}

\usepackage{amsmath,amssymb,mathtools,bm}
\usepackage{booktabs}
\usepackage{hyperref}

\newcommand{\C}{\mathbb C}
\newcommand{\R}{\mathbb R}
\newcommand{\Tr}{\operatorname{Tr}}
\newcommand{\rank}{\operatorname{rank}}
\newcommand{\Span}{\operatorname{span}}
\newcommand{\ii}{\mathrm{i}}
\newcommand{\Ga}{\mathbb G_{\mathrm a}}
\newcommand{\Tf}{T_{\mathrm f}}
\newcommand{\cI}{\mathcal I}
\newcommand{\cU}{\mathcal U}
\newcommand{\cD}{\mathcal D}
\newcommand{\cR}{\mathcal R}

\newtheorem{theorem}{Theorem}[section]
\newtheorem{proposition}[theorem]{Proposition}
\newtheorem{corollary}[theorem]{Corollary}

\begin{document}

\title{Invariant Algebras of Low-Rank Neutrino Matter Flows:\\
Krylov--Pl\"ucker Invariants, the $3+1$ Toric Ring, and Controlled Deformations}

\author{Jianlong Lu}
\email{jianlong@nus.edu.sg}
\affiliation{
Department of Mathematics, National University of Singapore,
Singapore 119076
}

\date{\today}

\begin{abstract}
Matter changes the eigenvalues and mixing matrix of the neutrino propagation Hamiltonian while entering the flavor-basis Hamiltonian through a low-dimensional additive deformation.  We develop a unified algebraic description of the quantities that are insensitive to this deformation.  The invariant algebra is formulated as the joint kernel of commuting matter-translation derivations inside the flavor-rephasing invariant ring.  For diagonal matter potentials, this gives a general decomposition into unaffected diagonal directions and an off-diagonal cycle algebra, together with an exact dimension formula for an arbitrary number of flavors and independent matter parameters.  For matter spurions with common low-rank support, block-Krylov rows generate exterior-power covariants whose maximal minors are matter independent.  Their mass-basis form is a Pl\"ucker expansion: rank one produces the familiar Vandermonde factorization, whereas higher rank generically produces a sum of independent Pl\"ucker terms.  The standard $3+1$ system provides our principal nontrivial example.  Its global off-diagonal algebra is the toric cycle ring of the complete four-vertex flavor graph, generated by edge moduli, triangle cycles, and quadrangle cycles; the commonly used eleven invariants constitute a generic local coordinate system rather than a global polynomial generating set.  We further formulate independently varying matter compositions and off-diagonal nonstandard interactions as, respectively, commuting exact flows and controlled deformations of the invariant algebra.  This framework separates the existence of exact invariants from the stronger and exceptional property of single-monomial spectral factorization.
\end{abstract}

\makeatletter
\@booleanfalse\preprintsty@sw
\maketitle
\@booleantrue\preprintsty@sw
\makeatother

\newpage

\section{Introduction}
\label{sec:introduction}

Neutrino oscillations in matter are governed by a simple tension between two descriptions of the same Hamiltonian.  In the instantaneous mass basis, the eigenvalues and mixing vectors depend nonlinearly on the matter density and exhibit resonant behavior.  In the flavor basis, however, the standard matter effect is an additive deformation of a vacuum Hamiltonian.  For an ultrarelativistic neutrino of energy $E$, it is convenient to write
\begin{equation}
 M(\bm a) \equiv 2E H(\bm a)
 =M_0+\sum_{A=1}^{k}a_A P_A ,
 \label{eq:intro-master-flow}
\end{equation}
where the Hermitian matrices $P_A$ specify the flavor structure of the medium and the real parameters $a_A$ contain its density and energy dependence.  The mismatch between the complicated spectral response and the elementary flavor-basis translation is the source of exact matter identities.

Throughout this work the flavor-universal part is quotiented out.  Explicitly, we set
\begin{equation}
 \overline M=M-\frac{\Tr M}{n}\bm 1,
 \qquad
 \overline P_A=P_A-\frac{\Tr P_A}{n}\bm 1,
 \label{eq:intro-traceless-representatives}
\end{equation}
and use $M$ and $P_A$ for these traceless representatives unless an identity component is restored explicitly.  This removes the unobservable common oscillation phase.

Several important instances have long been known.  The Toshev relation and the Naumov--Harrison--Scott identity isolate combinations of three-flavor mixing parameters and mass splittings that are unchanged by the standard electron potential~\cite{Toshev1991,Naumov1992,HarrisonScott2000}.  Hamiltonian and Vandermonde methods also lead to the exact KTY relations~\cite{KimuraTakamuraYokomakura2002}.  Harrison and Scott identified a complete set of five three-flavor matter-invariant observables~\cite{HarrisonScott2002}.  Evolution equations with the induced matter mass as the independent variable
recast the matter problem as a flow of effective masses and rephasing-invariant
mixing variables~\cite{ChiuKuoLiu2010,ChiuKuo2018}. Their covariance under
mass-eigenstate relabelings follows from permutation symmetry
~\cite{KuoChiu2020}, while the continuous Lie symmetries of the flow equations
were classified in Ref.~\cite{Zhou2022}; the RGE treatment has also been extended to a light sterile neutrino~\cite{ZengXu2022}.  Eigenvector--eigenvalue and adjugate methods provide complementary exact descriptions~\cite{DentonParkeTaoZhang2022,AbdullahiParke2022}; in particular, generalized CP-even and Naumov--Harrison--Scott identities are available for four or more flavors.  Four-neutrino mass and CP sum rules were obtained from commutator identities and direct mixing relations~\cite{Xing2001,Zhang2006}.  More recently, the matter-flow, permutation-covariant, and spurion viewpoints have been combined to organize five three-flavor first integrals and an eleven-dimensional generic local transcendence basis for a standard $3+1$ flow~\cite{WangZhou2026}.

These results leave three structural questions open.  First, the observed contrast between rank-one and rank-two matter potentials suggests a general low-rank theorem, but parameter counting or cancellation of individual spectral poles does not by itself identify the higher-rank invariants.  Second, a set of algebraically independent local coordinates is not the same object as a global polynomial invariant ring: rational reconstruction may divide by quantities that vanish on physically relevant strata.  Third, realistic extensions introduce several independently varying diagonal potentials, small flavor-nonuniversal radiative contributions, or off-diagonal nonstandard interactions (NSI).  A systematic treatment should determine which old invariants survive, how a new translation direction changes the generators, and when a correction ceases to exist within a precisely declared factorized or bounded-degree class.

The purpose of this work is to address these questions in one framework.  The central principle is that matter invariants are joint invariants of two actions: flavor rephasing and additive translation along the matter-spurion directions.  This viewpoint separates three notions that are often conflated:
\begin{enumerate}
 \item a \emph{first integral}, which is constant along a specified matter trajectory;
 \item a \emph{global polynomial invariant}, which is regular on the full algebraic quotient;
 \item a \emph{factorized spectral invariant}, which has the stronger form of a single product of mixing and eigenvalue factors.
\end{enumerate}
The third property is substantially more restrictive than the first two.

\subsection{Translation--rephasing invariant algebra}

Let
\begin{equation}
 \Tf \equiv (\C^\times)^n/\C^\times
 \label{eq:intro-rephasing-torus}
\end{equation}
denote the complexified flavor-rephasing torus.  It acts by simultaneous diagonal conjugation on $M$ and on all matter spurions.  Treating the $P_A$ as spurions is essential when off-diagonal NSI are present, because it makes the construction basis covariant.  On the polynomial coordinate algebra we introduce the commuting translation derivations
\begin{equation}
 \cD_A
 =\sum_{\alpha,\beta}(P_A)_{\alpha\beta}
 \frac{\partial}{\partial M_{\alpha\beta}} .
 \label{eq:intro-derivations}
\end{equation}
Before a physical matter potential is fixed, the universal spurion-covariant algebra is
\begin{equation}
 \cU_{n,k}
 =
 \C[M,P_1,\ldots,P_k]^{\Tf}
 \cap\bigcap_{A=1}^{k}\ker\cD_A .
 \label{eq:intro-master-algebra}
\end{equation}
This ring includes invariants formed entirely from the spurions.  Those quantities are external coefficient data rather than neutrino observables.  For a fixed diagonal tuple $\bm P=(P_1,\ldots,P_k)$, which is preserved by $\Tf$, the relevant fiber is instead
\begin{equation}
 \cI_{n,k}(\bm P)
 =\C[M]^{\Tf}\cap\bigcap_{A=1}^{k}\ker\cD_A .
 \label{eq:intro-fixed-fiber}
\end{equation}
The physical Hermitian algebra is the real form defined by $M_{\beta\alpha}=M_{\alpha\beta}^*$.  All dimensions below refer to the fixed-spurion fiber~\eqref{eq:intro-fixed-fiber}.  For off-diagonal NSI, a fixed numerical spurion is not preserved by the full rephasing torus; the universal relative formulation~\eqref{eq:intro-master-algebra} is retained until after invariant formation and only then specialized.

For $k$ independent traceless diagonal matter directions, we show that the invariant algebra decomposes into an unaffected diagonal polynomial algebra and an off-diagonal cycle algebra,
\begin{equation}
 \cI^{\mathrm{diag}}_{n,k}
 \simeq
 \C[d_1,\ldots,d_{n-1-k}]
 \otimes
 \C[z_{\alpha\beta}\,|\,\alpha\ne\beta]^{\Tf},
 \qquad z_{\alpha\beta}=M_{\alpha\beta} .
 \label{eq:intro-diagonal-factorization}
\end{equation}
The generic dimension is therefore
\begin{equation}
 \dim \cI^{\mathrm{diag}}_{n,k}
 =(n-1-k)+(n-1)^2
 =n(n-1)-k .
 \label{eq:intro-dimension}
\end{equation}
This result covers fixed-composition matter, independently varying charged- and neutral-current densities, and arbitrary diagonal flavor potentials within a single statement.

\subsection{Low-rank Krylov invariants}

Oscillation observables are insensitive to an additive multiple of the identity.  We therefore define the effective rank of one matter direction by
\begin{equation}
 r_{\mathrm{eff}}(P)
 \equiv
 \min_{c\in\R}\rank(P-c\bm 1),
 \label{eq:intro-effective-rank}
\end{equation}
and write, at a minimizing value of $c$,
\begin{equation}
 P=c\bm 1+SDS^\dagger,
 \label{eq:intro-low-rank-spurion}
\end{equation}
where $S$ has $r=r_{\mathrm{eff}}(P)$ columns and $D$ is nonsingular.  More generally, several matter directions have common effective support of dimension $r$ when, after independent scalar shifts, they admit a simultaneous representation
\begin{equation}
 P_A=c_A\bm 1+SD_AS^\dagger,
 \qquad A=1,\ldots,k,
 \label{eq:intro-common-support}
\end{equation}
with one fixed $n\times r$ matrix $S$.  We take $r$ minimal among such representations.  Introduce the block rows
\begin{equation}
 R_p=S^\dagger M^p,
 \qquad
 K_q=
 \begin{pmatrix}
 R_0\\R_1\\ \vdots\\R_{q-1}
 \end{pmatrix}.
 \label{eq:intro-block-krylov}
\end{equation}
We prove for every matter direction and $p\geq1$ that
\begin{equation}
 \frac{\partial R_p}{\partial a_A}
 =pc_A R_{p-1}
 +\sum_{j=0}^{p-1}
 (S^\dagger M^jS)D_A R_{p-1-j},
 \label{eq:intro-krylov-flow}
\end{equation}
with $\partial R_0/\partial a_A=0$, so that
$\partial K_q/\partial a_A=L_{A,q}K_q$ with $L_{A,q}$ strictly
block-lower triangular.  It follows that the exterior product of all rows of
$K_q$ is constant along every flow.  When $qr\le n$, the maximal column
minors of $K_q$ are components of a matter-constant exterior covariant; when
$qr=n$, $\det K_q$ is matter independent in a fixed frame.  These components
are not generally flavor-rephasing scalars.  Physical polynomial invariants
are obtained from norm squares, more general rephasing-balanced products, or
invariant contractions.  The triangular row-span identity is dual to the
classical control-theory invariance of observability under output
injection~\cite{KarcaniasLimantsevaHalikias2021}; the results developed here
concern its exact neutrino-matter covariants, spectral Pl\"ucker representation,
and factorization consequences.

For $(n,r,q)=(3,1,3)$, this construction gives the determinant underlying the
electron-row KTY identity.  For the standard $3+1$ potential, with
$P_\alpha=|\alpha\rangle\langle\alpha|$,
\begin{equation}
 P=P_e+\eta P_s,
 \label{eq:intro-standard-31-spurion}
\end{equation}
the effective rank is two for $\eta\ne0$ (with a nonunique minimizing support
at $\eta=1$), whereas it drops to one at $\eta=0$.  Choosing the
electron--sterile support, $(n,r,q)=(4,2,2)$ gives, up to ordering conventions,
\begin{equation}
 \det K_2
 =\pm
 \det\begin{pmatrix}
 M_{e\mu}&M_{e\tau}\\
 M_{s\mu}&M_{s\tau}
 \end{pmatrix}.
 \label{eq:intro-rank-two-minor}
\end{equation}
Its rephasing-invariant norm is already a polynomial in the cycle notation introduced below,
\begin{equation}
 |\det K_2|^2
 =Q_{e\mu}Q_{s\tau}+Q_{e\tau}Q_{s\mu}
 -2\operatorname{Re}C_{e\mu s\tau}.
 \label{eq:intro-krylov-cycle-bridge}
\end{equation}
Thus rank two does not eliminate compact matter invariants, and the Krylov construction does not add an independent observable to the global ring.  It supplies a distinguished determinantal representative whose spectral structure differs from rank one.

Indeed, if $M=V\Lambda V^\dagger$ and $X=S^\dagger V$, the square
Krylov determinant is a block alternant built from $X\Lambda^p$.  Its Laplace
expansion is a sum over ordered partitions into $r$-element subsets, with
coefficients given by Pl\"ucker coordinates of $X$.  At rank one the alternant
resums and factorizes into the product of one mixing entry from every column
and the ordinary spectral Vandermonde.  At generic higher rank, several
independent Pl\"ucker terms remain, and equality of two eigenvalues need not
make the corresponding block columns proportional.  For the square case
$n=qr$ with $r,q>1$, we prove that there is no universal nonconstant spectral
divisor and no rank-one-type separation into a Pl\"ucker-dependent factor
times a spectral polynomial; in particular, no pairwise eigenvalue difference
is a generic divisor.  We then give a complete spectral reducibility
classification at fixed Pl\"ucker point for the physically central
$4\times4$, rank-two case.  At general $(n,r)$ we identify mechanisms forcing
vanishing or special factorization, including rank reduction, matroid
basis-packing failure, parallel classes, invariant subspaces, and
high-multiplicity spectral degeneracy.  Lower-multiplicity degeneracies are
analyzed separately and need not force vanishing.  We do not claim
unrestricted irreducibility for all $(n,r)$ or an exhaustive factorization
classification on every matroid stratum.

\subsection{The global standard $3+1$ ring}

The off-diagonal part of Eq.~\eqref{eq:intro-diagonal-factorization} is a toric cycle ring.  Under flavor rephasing,
\begin{equation}
 z_{\alpha\beta}\longmapsto
 t_\alpha t_\beta^{-1}z_{\alpha\beta}.
 \label{eq:intro-edge-weight}
\end{equation}
An invariant monomial is therefore a nonnegative integral circulation on the directed flavor graph.  Such a circulation decomposes into directed simple cycles.  For four flavors, the complex Hilbert basis contains six two-cycles, eight oriented triangles, and six oriented Hamiltonian quadrangles.  On the Hermitian real form these become
\begin{align}
 Q_{\alpha\beta}&=|M_{\alpha\beta}|^2,
 &&6\ \text{real generators},
 \label{eq:intro-edge-generator}\\
 T_{\alpha\beta\gamma}
 &=M_{\alpha\beta}M_{\beta\gamma}M_{\gamma\alpha},
 &&4\ \text{complex generators},
 \label{eq:intro-triangle-generator}\\
 C_{\alpha\beta\gamma\delta}
 &=M_{\alpha\beta}M_{\beta\gamma}
 M_{\gamma\delta}M_{\delta\alpha},
 &&3\ \text{complex generators}.
 \label{eq:intro-quadrangle-generator}
\end{align}
Together with two diagonal invariants for the one-parameter standard flow, these give 22 real global generators of an eleven-dimensional invariant algebra.  The surplus generators are tied by binomial cycle relations and their real and imaginary consequences.  Representative identities are
\begin{align}
 T_{\alpha\beta\gamma}T_{\alpha\gamma\beta}
 &=Q_{\alpha\beta}Q_{\beta\gamma}Q_{\gamma\alpha},
 \label{eq:intro-triangle-relation}\\
 T_{\alpha\beta\gamma}T_{\alpha\gamma\delta}
 &=Q_{\alpha\gamma}C_{\alpha\beta\gamma\delta}.
 \label{eq:intro-quadrangle-relation}
\end{align}
Equation~\eqref{eq:intro-quadrangle-relation} makes the global-versus-local distinction explicit: a quadrangle can be reconstructed rationally from two triangles only on the open set $Q_{\alpha\gamma}\ne0$.  We determine the toric ideal and its multigraded Hilbert series, translate the result into CP-even and CP-odd real generators, and stratify the quotient by the graph of nonzero edges.  Loss of cycles can reduce the number of physical loop phases even while the support remains connected; the rephasing stabilizer is enhanced only when the nonzero support graph becomes disconnected.  The CP-conserving locus and the spectral-discriminant locus are treated separately: the former belongs to the real cycle algebra, whereas the latter marks failure of spectral coordinates and need not coincide with a toric singularity.  The use of cycle generators is consistent with general quiver invariant theory~\cite{CrawYamagishi2023}, while Hilbert-series and graph algorithms have proved effective in substantially larger flavor-invariant problems~\cite{GrojeanKleyLeflotYao2024}.

\subsection{Independent matter directions and controlled breaking}

For independently varying diagonal densities, the derivations commute and the common exact algebra is given by Eq.~\eqref{eq:intro-diagonal-factorization}.  In particular, allowing the electron and sterile or neutron contributions in a $3+1$ system to vary independently reduces the number of invariant diagonal directions from two to one, while leaving the nine-dimensional off-diagonal cycle algebra unchanged.

Off-diagonal NSI are qualitatively different: they translate edge variables and hence deform the cycle algebra itself.  Let
\begin{equation}
 P_{\mathrm{eff}}=P_0+\epsilon Q,
 \qquad
 \cD_{\mathrm{eff}}=\cD_0+\epsilon\cD_Q .
 \label{eq:intro-deformed-flow}
\end{equation}
For $I\in\ker\cD_0$, a first-order corrected invariant
\begin{equation}
 I_\epsilon=I+\epsilon J+\mathcal O(\epsilon^2)
 \label{eq:intro-deformed-invariant}
\end{equation}
must satisfy
\begin{equation}
 \cD_0J=-\cD_QI
 \label{eq:intro-cohomological-equation}
\end{equation}
at first order.  On the unrestricted polynomial algebra a standard diagonal translation admits a rephasing-invariant slice, so Eq.~\eqref{eq:intro-cohomological-equation} has a polynomial solution.  Nontrivial obstructions arise only after the admissible correction space is specified---for example, the old edge-only ring, a fixed edge multidegree, a prescribed Laurent-monomial span, or a factorized spectral ansatz.  We therefore distinguish generators that remain exactly unchanged, generators that require a correction, and generators that cannot be corrected within a declared restricted class; we do not claim a cohomological obstruction in the unrestricted polynomial ring or under a total $M$-degree bound that contains the explicit slice solution.  Off-diagonal NSI are already known to induce matter dependence in otherwise invariant combinations~\cite{AbdullahiParke2022}.

Electroweak radiative violation of neutral-current flavor universality provides a different, controlled Standard Model example.  The corresponding matter correction remains diagonal, so the off-diagonal edge, triangle, and quadrangle invariants remain exact.  For a fixed-composition corrected potential the flow is still one dimensional: the number of invariant diagonal directions is unchanged, although the surviving diagonal combinations are rotated and the effective support rank can increase.  The diagonal invariant count decreases only if the radiative contribution is allowed to vary as an independent matter direction.  The rank lift changes the special spectral factorization underlying
parametrization-dependent identities such as the Toshev relation
~\cite{Toshev1991}. The flavor-nonuniversal one-loop matter potential
responsible for this lift was derived in Ref.~\cite{BotellaLimMarciano1987},
its consequences for effective mixing in three-flavor and $3+1$ schemes were
analyzed in Ref.~\cite{Zhu2020}, and the resulting modification of the Toshev
relation was worked out explicitly in Ref.~\cite{XingZhu2022}.  This separation between an unchanged cycle algebra and a changed spectral factorization is one of the applications of the present framework.

\subsection{Organization and scope}

Section II develops the joint translation--rephasing algebra and fixes the complex and real forms used throughout.  Section III proves the structure theorem for multiple diagonal matter directions.  Sections IV and V establish the block-Krylov construction and its Pl\"ucker expansion, including the precisely delimited factorization theorem.  Section VI gives the complete standard $3+1$ toric presentation, and Sec. VII analyzes its syzygies, CP structure, and exceptional strata.  Section VIII treats exact additional matter directions and controlled NSI and radiative deformations.  Section IX reports independent symbolic and numerical checks, and Sec. X summarizes the results.

The paper is restricted to finite-dimensional Hermitian propagation Hamiltonians with unitary instantaneous diagonalization.  General density profiles affect the path-ordered evolution operator, but not the local algebraic statements when the relevant matter directions are included among the $P_A$.  Non-Hermitian decay, decoherence, collective neutrino self-interactions, and full experimental sensitivity forecasts lie outside the present scope.  This restriction keeps the distinction between an invariant of the local Hamiltonian family and a property of the complete propagation history explicit.

\section{Translation--rephasing invariant algebra}
\label{sec:translation-algebra}

The additive form of the matter Hamiltonian suggests taking a quotient by
translations.  Physical quantities, however, must also be independent of the
arbitrary phases assigned to flavor states.  This section puts these two
requirements into a single algebraic action.  We work over $\C$ to use the
coordinate ring of an affine space and return to the Hermitian real form at the
end.  No diagonalization is used, so all statements remain regular at spectral
degeneracies.

\subsection{Trace quotient and complexified Hamiltonian space}
\label{subsec:trace-complexification}

Let
\begin{equation}
 \mathfrak h_n^0
 =\{H\in\operatorname{Mat}_n(\C):H^\dagger=H,\ \Tr H=0\}
 \label{eq:hermitian-traceless-space}
\end{equation}
be the real vector space of traceless Hermitian matrices.  Quotienting a
Hermitian Hamiltonian by the unobservable identity direction is canonically
implemented by
\begin{equation}
 \pi_0(X)=X-\frac{\Tr X}{n}\bm 1.
 \label{eq:trace-projection}
\end{equation}
Thus a physical matter direction $P_A^{\rm phys}$ induces
$P_A=\pi_0(P_A^{\rm phys})$; a flavor-universal direction becomes zero.  The
complexification of Eq.~\eqref{eq:hermitian-traceless-space} is
\begin{equation}
 V_n\equiv\mathfrak h_n^0\otimes_{\R}\C
 \simeq\mathfrak{sl}_n(\C).
 \label{eq:complexified-hamiltonian-space}
\end{equation}
Its matrix coordinates are algebraically independent apart from the trace
relation.  For one Hamiltonian and $k$ universal spurions, we therefore use
\begin{align}
 \cR_{n,k}
 &\equiv \C[V_n\oplus V_n^{\oplus k}] \nonumber\\
 &\simeq
 \frac{\C[m_{\alpha\beta},p_{A,\alpha\beta}\;|\;
          1\leq\alpha,\beta\leq n,\ 1\leq A\leq k]}
 {\left(\sum_\alpha m_{\alpha\alpha},
         \sum_\alpha p_{A,\alpha\alpha}\ (1\leq A\leq k)\right)}.
 \label{eq:universal-coordinate-ring}
\end{align}
The variables $m_{\alpha\beta}$ and $p_{A,\alpha\beta}$ are independent in
this complex coordinate ring.  Hermiticity is not imposed by treating one as
the complex conjugate of another until the real form is selected below.

There are two distinct uses of scalar shifts in this paper.  Equation
\eqref{eq:trace-projection} selects the unique traceless representative used by
the translation algebra.  In Sec.~\ref{sec:krylov-exterior} we temporarily
restore a scalar representative $P_A-c_A\bm 1$ in order to minimize its matrix
rank.  That later operation measures a property of the scalar-shift equivalence
class and does not undo the physical trace quotient.

\subsection{Flavor rephasing and the weight lattice}
\label{subsec:rephasing-weights}

Diagonal conjugation by $(\C^\times)^n$ has a one-dimensional kernel, giving
the effective flavor torus
\begin{equation}
 \Tf=(\C^\times)^n/\C^\times_{\rm diag},
 \qquad \dim\Tf=n-1.
 \label{eq:flavor-torus-section2}
\end{equation}
It acts simultaneously on the Hamiltonian and every spurion:
\begin{equation}
 (t\mathbin{\cdot}M)_{\alpha\beta}
 =t_\alpha t_\beta^{-1}m_{\alpha\beta},
 \qquad
 (t\mathbin{\cdot}P_A)_{\alpha\beta}
 =t_\alpha t_\beta^{-1}p_{A,\alpha\beta}.
 \label{eq:simultaneous-torus-action}
\end{equation}
We assign weights according to this transformation of the matrix entries.  A
pullback convention for coordinate functions reverses every displayed weight
and leaves all zero-weight conclusions unchanged.  The character lattice is
\begin{equation}
 X^*(\Tf)=\left\{q\in\mathbb Z^n:\sum_{\alpha=1}^n q_\alpha=0\right\},
 \label{eq:torus-character-lattice}
\end{equation}
and an off-diagonal entry has weight $e_\alpha-e_\beta$, whereas every
diagonal entry has weight zero.

For a monomial in any collection of variables with the same directed-edge
weights, let $N_{\alpha\beta}$ be its total exponent on the edge
$\alpha\to\beta$.  Its weight is
\begin{equation}
 \operatorname{wt}(x^N)
 =\sum_{\alpha=1}^n
 \left(\sum_{\beta\ne\alpha}N_{\alpha\beta}
       -\sum_{\beta\ne\alpha}N_{\beta\alpha}\right)e_\alpha.
 \label{eq:monomial-flow-weight}
\end{equation}
It is rephasing invariant precisely when the exponent flow is conserved at
every flavor vertex.  This elementary weight-balance condition is the origin
of the directed-cycle algebra used in Secs.~VI and VII.  More generally,
torus invariants and their orbit
structure are standard topics in computational invariant theory
\cite{DerksenKemper2015,BurgisserDoganMakamWalterWigderson2021}.

The physical rephasing group is the compact real torus
$K_{\rm f}=U(1)^n/U(1)$.  It is Zariski dense in $\Tf$.  Consequently a complex
polynomial is invariant under physical flavor rephasings if and only if it is
invariant under the complexified torus.  Complexification therefore introduces
no additional polynomial invariance condition.

\subsection{Matter translations and their infinitesimal generators}
\label{subsec:additive-matter-action}

On the universal space define an additive algebraic action by
\begin{equation}
 \tau_{\bm a}(M,P_1,\ldots,P_k)
 =\left(M+\sum_{A=1}^k a_AP_A,P_1,\ldots,P_k\right),
 \qquad \bm a\in\Ga^k.
 \label{eq:universal-translation-action}
\end{equation}
Its infinitesimal generators are the derivations
\begin{equation}
 \cD_A(m_{\alpha\beta})=p_{A,\alpha\beta},
 \qquad
 \cD_A(p_{B,\alpha\beta})=0,
 \label{eq:derivation-on-generators}
\end{equation}
or, in ambient matrix coordinates,
\begin{equation}
 \cD_A=\sum_{\alpha,\beta}p_{A,\alpha\beta}
 \frac{\partial}{\partial m_{\alpha\beta}}.
 \label{eq:universal-derivation-section2}
\end{equation}
This formula descends to the quotient~\eqref{eq:universal-coordinate-ring}
because $\cD_A(\Tr M)=\Tr P_A=0$.

\begin{theorem}[Translation--rephasing joint kernel]
\label{thm:translation-rephasing-kernel}
The $\cD_A$ are commuting locally nilpotent derivations of $\cR_{n,k}$.
Their exponentials generate Eq.~\eqref{eq:universal-translation-action}, the
$\Ga^k$ and $\Tf$ actions commute, and
\begin{equation}
 \cU_{n,k}
 =\cR_{n,k}^{\Tf\times\Ga^k}
 =\cR_{n,k}^{\Tf}\cap\bigcap_{A=1}^k\ker\cD_A .
 \label{eq:joint-kernel-theorem}
\end{equation}
\end{theorem}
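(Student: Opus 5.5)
The plan is to verify the derivation properties directly on generators and then transfer invariance between the group actions and their infinitesimal versions.

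\emph{Well-definedness and commutation.} First I will check that each $\cD_A$ descends to $\cR_{n,k}$. It suffices that $\cD_A$ maps the generators of the defining ideal in Eq.~\eqref{eq:universal-coordinate-ring} into that ideal. This holds because $\cD_A(\sum_\alpha m_{\alpha\alpha})=\sum_\alpha p_{A,\alpha\alpha}$ lies in the ideal and $\cD_A(\sum_\alpha p_{B,\alpha\alpha})=0$. Commutation $[\cD_A,\cD_B]=0$ is checked on generators: both composites kill every $m_{\alpha\beta}$ and $p_{B,\alpha\beta}$, since $\cD_A$ sends generators into the subring $\C[p]$, which every $\cD_B$ annihilates. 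A commutator of derivations is a derivation, so vanishing on generators gives vanishing everywhere.

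\emph{Local nilpotence.} Consider a polynomial of total $m$-degree $d$. Each $\cD_A$ lowers the $m$-degree by one and kills $\C[p]$, so $\cD_A^{d+1}$ annihilates it. The $m$-grading is compatible with the quotient, because the ideal is generated by homogeneous elements of $m$-degree $1$ and $0$.

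\emph{Exponentials, commuting actions, and the invariant ring.} Local nilpotence makes $\exp(\sum_A a_A\cD_A)$ a finite sum, and commutation makes it factor as $\prod_A\exp(a_A\cD_A)$. Since the exponential of a derivation is an algebra automorphism, it is enough to evaluate it on generators. There it sends $m_{\alpha\beta}\mapsto m_{\alpha\beta}+\sum_A a_Ap_{A,\alpha\beta}$ and fixes each $p$, which is the comorphism of $\tau_{\bm a}$ in Eq.~\eqref{eq:universal-translation-action}. So $\bm a\mapsto\exp(\sum a_A\cD_A)$ is a rational $\Ga^k$-action realizing the translations.

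For commutation of the two actions, I will use Eq.~\eqref{eq:simultaneous-torus-action}. Conjugation by $t$ is linear and acts identically on $M$ and every $P_A$, so $t\cdot(M+\sum a_AP_A)=t\cdot M+\sum a_A\,t\cdot P_A$, that is, $t\circ\tau_{\bm a}=\tau_{\bm a}\circ t$. Hence $\Tf\times\Ga^k$ acts, and its invariant ring is $\cR_{n,k}^{\Tf}\cap\cR_{n,k}^{\Ga^k}$.

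It remains to show $\cR^{\Ga^k}=\bigcap_A\ker\cD_A$.
\begin{itemize}
\item If $\cD_Af=0$ for all $A$, then $\exp(\sum a_A\cD_A)f=f$, so $f$ is invariant.
\item Conversely, suppose $f$ is invariant. Then $\exp(a\cD_A)f=f+a\,\cD_Af+O(a^2)$ is a polynomial in $a$ that is constant, so its linear coefficient $\cD_Af$ vanishes.
\end{itemize}

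This last step is where I expect the only subtlety. The identification of invariants with the kernel relies on the expansion being a polynomial identity in $a$ with coefficients in $\cR_{n,k}$, so that coefficient extraction is legitimate. That in turn rests on local nilpotence and on working over characteristic zero, where the exponential series with factorials is defined. Everything else is a generator check.
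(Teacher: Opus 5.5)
Your proposal is correct and follows essentially the same route as the paper. The shared steps are: local nilpotence from the $M$-degree, commutation because each $\cD_A$ kills the spurions, and the finite exponential realizing $\tau_{\bm a}$. They also share the characteristic-zero equivalence between $\Ga^k$-invariance and $\cD_AF=0$, and the fact that simultaneous conjugation commutes with translation. Your additional checks are details the paper treats in passing or leaves implicit: that $\cD_A$ and the $M$-grading descend to the trace quotient, and that the coefficient extraction in $a$ is legitimate.
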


\emph{Proof.}
If $F$ has degree $d$ in the $M$ coordinates, then
$\cD_A^{d+1}F=0$, proving local nilpotence.  Every $\cD_A$ annihilates every
$P_B$, so $[\cD_A,\cD_B]=0$.  The finite Taylor series gives
\begin{equation}
 F\left(M+\sum_Aa_AP_A,\bm P\right)
 =\exp\left(\sum_Aa_A\cD_A\right)F(M,\bm P).
 \label{eq:finite-translation-exponential}
\end{equation}
In characteristic zero, $F$ is invariant under $\Ga^k$ exactly when every
$\cD_AF$ vanishes.  Finally,
\begin{equation}
 t\left(M+\sum_Aa_AP_A\right)t^{-1}
 =tMt^{-1}+\sum_Aa_A(tP_At^{-1}),
 \label{eq:commuting-actions-proof}
\end{equation}
so simultaneous conjugation commutes with translation and each $\cD_A$
preserves $\cR_{n,k}^{\Tf}$.  This proves Eq.~\eqref{eq:joint-kernel-theorem}.
$\square$

The equivalence between additive-group actions and locally nilpotent
derivations is reviewed in Ref.~\cite{Freudenburg2017}.  Here it follows
directly from Eq.~\eqref{eq:finite-translation-exponential}; no general finite
generation theorem for vector-group invariants is assumed.  A physical
polynomial constant on an open set of real matter parameters has vanishing
real directional derivatives.  The resulting polynomial identity in the
matter parameters then holds over $\C$, making the polynomial constant on the
complexified $\Ga^k$ orbit and justifying the algebraic action.

The inclusion
\begin{equation}
 \C[P_1,\ldots,P_k]^{\Tf}\subset\cU_{n,k}
 \label{eq:spurion-coefficient-subring}
\end{equation}
is automatic because the translations do not alter the spurions.  After a
medium is selected, these elements are fixed coefficient data.  They should
not be counted as independent observables of the neutrino Hamiltonian.

\subsection{Fixed-spurion fibers and failure of naive specialization}
\label{subsec:universal-versus-fiber}

Let $\bm p=(p_1,\ldots,p_k)$ be a numerical traceless tuple and set
\begin{equation}
 W_{\bm p}=\Span_{\C}\{p_1,\ldots,p_k\}\subset V_n,
 \qquad \ell(\bm p)=\dim W_{\bm p}.
 \label{eq:fixed-translation-space}
\end{equation}
Only $\ell(\bm p)$, rather than the number of displayed parameters, counts
independent matter translations.  Before imposing rephasing, the fixed-fiber
translation algebra is especially simple.
For $w\in W_{\bm p}$ let
$\cD_w=\sum_{\alpha,\beta}w_{\alpha\beta}
\partial/\partial m_{\alpha\beta}$, and let $\Ga(W_{\bm p})$ denote the
additive vector group underlying $W_{\bm p}$.  Then
\begin{proposition}[Linear translation quotient]
\label{prop:linear-translation-quotient}
\begin{equation}
 \C[V_n]^{\Ga(W_{\bm p})}
 =\bigcap_{w\in W_{\bm p}}\ker\cD_w
 \simeq \operatorname{Sym}(W_{\bm p}^{\circ})
 \simeq \C[V_n/W_{\bm p}],
 \label{eq:translation-quotient-linear}
\end{equation}
\end{proposition}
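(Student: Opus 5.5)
The plan is to reduce the statement to linear algebra by choosing coordinates on $V_n$ adapted to $W_{\bm p}$. Because $\cD_w$ is linear in $w$, the intersection over all $w\in W_{\bm p}$ equals the intersection over a basis $w_1,\ldots,w_\ell$ of $W_{\bm p}$, where $\ell=\ell(\bm p)$. The first equality in Eq.~\eqref{eq:translation-quotient-linear} then follows exactly as in Theorem~\ref{thm:translation-rephasing-kernel}: the finite Taylor expansion gives $F(M+w)=\exp(\cD_w)F(M)$. If $F$ is $\Ga(W_{\bm p})$-invariant, differentiating $F(M+sw)$ at $s=0$ shows $\cD_wF=0$. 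Conversely, if $\cD_wF=0$ for all $w$, every higher term of the exponential vanishes, so $F(M+w)=F(M)$. This uses characteristic zero only through the Taylor formula.

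For the identification with $\operatorname{Sym}(W_{\bm p}^{\circ})$, I will extend $w_1,\ldots,w_\ell$ to a basis $w_1,\ldots,w_N$ of $V_n$, with $N=n^2-1$. Let $y_1,\ldots,y_N\in V_n^*$ be the dual basis, so that $\C[V_n]=\C[y_1,\ldots,y_N]$. The annihilator is $W_{\bm p}^{\circ}=\Span\{y_{\ell+1},\ldots,y_N\}$. In these coordinates $\cD_{w_i}$ acts on the linear forms by $\cD_{w_i}(y_j)=y_j(w_i)=\delta_{ij}$, and hence $\cD_{w_i}=\partial/\partial y_i$. The joint kernel of $\partial/\partial y_1,\ldots,\partial/\partial y_\ell$ on a polynomial ring is the subring of polynomials not involving $y_1,\ldots,y_\ell$. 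To see this, expand $F$ as a polynomial in $y_1,\ldots,y_\ell$ with coefficients in $\C[y_{\ell+1},\ldots,y_N]$; vanishing of the partials in characteristic zero forces all non-constant coefficients to vanish. This kernel is $\C[y_{\ell+1},\ldots,y_N]=\operatorname{Sym}(W_{\bm p}^{\circ})$.

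Finally, $W_{\bm p}^{\circ}$ is canonically $(V_n/W_{\bm p})^*$, since a linear form vanishing on $W_{\bm p}$ is the same as a linear form on the quotient. Therefore $\operatorname{Sym}(W_{\bm p}^{\circ})\simeq\operatorname{Sym}((V_n/W_{\bm p})^*)=\C[V_n/W_{\bm p}]$. Under this identification, the inclusion into $\C[V_n]$ is pullback along the quotient map $V_n\to V_n/W_{\bm p}$.

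The only point needing care is the use of the trace-quotient ring. I intend to handle it by working intrinsically on $V_n\simeq\mathfrak{sl}_n(\C)$ and noting that $W_{\bm p}\subset V_n$, because each $p_A$ is traceless. The ambient formula for $\cD_w$ descends, as already observed after Eq.~\eqref{eq:universal-derivation-section2}, so it coincides with the directional derivative on $V_n$. I expect no genuine obstacle beyond this bookkeeping. The main subtlety is the kernel computation for partial derivatives, which fails in positive characteristic and is settled here by the $\C$ setting.
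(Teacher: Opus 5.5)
Your proposal is correct and follows essentially the paper's argument. Both choose coordinates adapted to a splitting $V_n=W_{\bm p}\oplus U$, so the matter translations become partial derivatives in the first $\ell$ coordinates and their joint kernel is the polynomial ring in the complementary (annihilator) coordinates. You also spell out the Taylor-exponential argument for the first equality (which the paper handles in Theorem~\ref{thm:translation-rephasing-kernel}) and the canonical identification $W_{\bm p}^{\circ}\simeq(V_n/W_{\bm p})^*$, both of which the paper leaves implicit in its sketch.
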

where $W_{\bm p}^{\circ}\subset V_n^*$ is the annihilator.  To see this,
choose a vector-space complement $V_n=W_{\bm p}\oplus U$.  The independent
matter translations become ordinary translations in the first $\ell$
coordinates, and their joint kernel consists exactly of polynomials in the
$U$ coordinates.  Thus the translation quotient itself is linear.  Nontrivial
cycle relations arise when a rephasing subgroup preserving $W_{\bm p}$ is
subsequently imposed; in particular, the full $\Tf$ acts when $W_{\bm p}$ is
diagonal.

If every $p_A$ is diagonal, $\Tf$ fixes the tuple pointwise and the two actions
continue to commute on the fiber.  In that case
\begin{equation}
 \cI_{n,k}(\bm p)
 =\left[\operatorname{Sym}(W_{\bm p}^{\circ})\right]^{\Tf}
 =\C[V_n/W_{\bm p}]^{\Tf}.
 \label{eq:fixed-diagonal-quotient}
\end{equation}
Section~III computes this ring explicitly.  If the
projected diagonal directions are independent, then $\ell=k$; otherwise every
dimension formula must use $\ell$.

Evaluation at a fixed tuple defines
\begin{equation}
 \operatorname{ev}_{\bm p}:\cU_{n,k}\longrightarrow\C[V_n],
 \qquad F(M,\bm P)\longmapsto F(M,\bm p).
 \label{eq:universal-specialization-map}
\end{equation}
For a diagonal tuple its image lies in Eq.~\eqref{eq:fixed-diagonal-quotient},
but it need not generate the fiber.  Invariant formation does not generally
commute with specialization.  The one-dimensional example
\begin{equation}
 \cD=p\frac{\partial}{\partial m},
 \qquad \ker_{\C[m,p]}\cD=\C[p]
 \label{eq:specialization-counterexample}
\end{equation}
makes the point: at $p=0$ the specialized derivation vanishes and the fiber
kernel is $\C[m]$, while evaluation of the universal kernel gives only
constants.  Rank-deficient and symmetry-enhanced matter fibers can similarly
acquire additional invariants.  Equality with a universal specialization will
therefore be asserted only after a separate generic-localization or
fixed-fiber proof.

For an off-diagonal numerical tuple, the full torus does not in general
preserve the fiber.  The residual pointwise rephasing group is
\begin{equation}
 T_{\mathrm f,\bm p}
 =\{t\in\Tf:t p_A t^{-1}=p_A\ \text{for every }A\}.
 \label{eq:fixed-spurion-stabilizer}
\end{equation}
Accordingly, the algebra obtained after fixing this coordinate representative
and quotienting by its residual pointwise rephasings is
\begin{equation}
 \cI^{\rm fix}(\bm p)
 =\left[\operatorname{Sym}(W_{\bm p}^{\circ})\right]^{T_{\mathrm f,\bm p}}.
 \label{eq:offdiagonal-fixed-fiber-ring}
\end{equation}
If only the translation subspace $W_{\bm p}$, rather than the ordered tuple,
is regarded as fixed, the relevant rephasing group is the normalizer of
$W_{\bm p}$ in $\Tf$ rather than the pointwise stabilizer in
Eq.~\eqref{eq:fixed-spurion-stabilizer}.
By contrast, a universal scalar obeys the relative covariance relation
\begin{equation}
 F(tMt^{-1},t\bm p\,t^{-1})=F(M,\bm p),
 \label{eq:relative-spurion-covariance}
\end{equation}
which does not imply invariance under $M\mapsto tMt^{-1}$ with $\bm p$ held
fixed.  This is why off-diagonal NSI are kept as transforming spurions until
after invariant formation.  The formal intersection
$\C[V_n]^{\Tf}\cap\ker\cD_p$ for fixed off-diagonal $p$ is not the invariant
ring of two commuting actions and can overimpose invariance along the entire
torus orbit of the intended translation direction.

\subsection{Hermitian real structure and orientation reversal}
\label{subsec:real-form-cp}

The Hermitian real form is selected by the antilinear involution $\rho$ on
$\cR_{n,k}$,
\begin{equation}
 \rho(c)=\overline c,
 \qquad
 \rho(m_{\alpha\beta})=m_{\beta\alpha},
 \qquad
 \rho(p_{A,\alpha\beta})=p_{A,\beta\alpha}.
 \label{eq:hermitian-real-involution}
\end{equation}
Its fixed subalgebra is the real coordinate algebra of the Hermitian form,
whose real points obey
$m_{\beta\alpha}=m_{\alpha\beta}^*$ and
$p_{A,\beta\alpha}=p_{A,\alpha\beta}^*$.  On the universal ring, $\rho$
commutes with every $\cD_A$.  After numerical specialization,
$\rho\cD_p\rho^{-1}=\cD_{p^\dagger}$; hence a Hermitian fixed fiber has a
$\rho$-stable translation kernel.

It is useful to distinguish this real structure from the complex-linear
orientation-reversal involution
\begin{equation}
 \kappa(m_{\alpha\beta})=m_{\beta\alpha},
 \qquad
 \kappa(p_{A,\alpha\beta})=p_{A,\beta\alpha},
 \qquad \kappa(c)=c.
 \label{eq:orientation-reversal-involution}
\end{equation}
For an oriented cycle
\begin{equation}
 Z_{\alpha_1\cdots\alpha_r}
 =m_{\alpha_1\alpha_2}m_{\alpha_2\alpha_3}\cdots
  m_{\alpha_r\alpha_1},
 \label{eq:generic-oriented-cycle}
\end{equation}
$\kappa$ reverses its orientation.  On the Hermitian locus it sends the cycle
value to its complex conjugate, so
\begin{equation}
 Z^{(+)}=\frac{Z+\kappa Z}{2},
 \qquad
 Z^{(-)}=\frac{Z-\kappa Z}{2\ii}
 \label{eq:cycle-even-odd-parts}
\end{equation}
are respectively even and odd under orientation reversal.  For real diagonal
CP-even spurions this is the usual algebraic CP grading of the mixing data.
More generally, $\kappa(\bm p)=\bm p$ is required for this grading to act
within one fixed fiber; otherwise $\kappa$ maps the fiber at $\bm p$ to the
fiber at $\kappa(\bm p)$.
It is not, by itself, the full neutrino-to-antineutrino transformation in a
material background, which also reverses the sign of the matter potential in
the propagation Hamiltonian.  The equivalence between vanishing odd cycle
data and the CP-conserving locus, including exceptional support graphs, is
therefore deferred to Sec.~VII.

\subsection{Support graphs, stabilizers, and quotient dimension}
\label{subsec:support-stabilizers}

For $M\in V_n$, let $G(M)$ be the underlying undirected graph on the $n$
flavors, with an edge $\{\alpha,\beta\}$ whenever at least one of
$m_{\alpha\beta}$ and $m_{\beta\alpha}$ is nonzero.  On the Hermitian locus
the two orientations vanish together.  If $G(M)$ has $c(M)$ connected
components, its rephasing stabilizer and orbit dimensions are as follows.
\begin{proposition}[Support-graph stabilizer]
\label{prop:support-graph-stabilizer}
\begin{equation}
 \dim\operatorname{Stab}_{\Tf}(M)=c(M)-1,
 \qquad
 \dim(\Tf\mathbin{\cdot}M)=n-c(M).
 \label{eq:support-graph-stabilizer}
\end{equation}
\end{proposition}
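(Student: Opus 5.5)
The plan is to compute the stabilizer directly from the weights of Eq.~\eqref{eq:simultaneous-torus-action}, and then obtain the orbit dimension from the orbit--stabilizer formula for algebraic group actions.

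First I lift to the big torus. A representative $t=(t_1,\ldots,t_n)\in(\C^\times)^n$ fixes $M$ exactly when $t_\alpha t_\beta^{-1}m_{\alpha\beta}=m_{\alpha\beta}$ for all $\alpha\neq\beta$. Diagonal entries impose nothing, since they have weight zero. So the condition is $t_\alpha=t_\beta$ whenever $m_{\alpha\beta}\neq0$. If only $m_{\beta\alpha}\ne0$, the condition $t_\beta t_\alpha^{-1}=1$ is the same equation. Hence the stabilizer in $(\C^\times)^n$ is exactly the set of tuples that are constant on the vertex sets of the connected components of $G(M)$, by transitivity along paths. With $c=c(M)$ components, this subgroup is $(\C^\times)^{c}$, embedded by repeating the component value on each vertex.

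Next I descend to $\Tf=(\C^\times)^n/\C^\times_{\rm diag}$. The diagonal subgroup $\C^\times_{\rm diag}$ acts trivially and lies inside every such stabilizer, since the constant tuple is constant on each component. Therefore $\operatorname{Stab}_{\Tf}(M)$ is the image $(\C^\times)^{c}/\C^\times$. This is a torus of dimension $c-1$. It is connected, which is worth noting even though only dimension is claimed.

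Finally, for an algebraic group acting morphically, $\dim(\Tf\cdot M)=\dim\Tf-\dim\operatorname{Stab}_{\Tf}(M)$. Using $\dim\Tf=n-1$ from Eq.~\eqref{eq:flavor-torus-section2}, this gives $n-1-(c-1)=n-c$.

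As a consistency check, I can read the orbit dimension off the tangent map. It is the rank of the linear map $\xi\mapsto(\xi_\alpha-\xi_\beta)_{\{\alpha,\beta\}\in E(G)}$ on $\C^n/\C\bm 1$. This is the coboundary map of the graph, whose kernel consists of functions constant on components. Its rank is therefore $n-c$, which agrees.

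The only genuinely delicate point is the edge convention in $G(M)$. Over $\C$ the two orientations $m_{\alpha\beta}$ and $m_{\beta\alpha}$ are independent coordinates, so an edge may be supported by just one of them. The argument above shows that this causes no trouble, because each orientation imposes the same equation $t_\alpha=t_\beta$. The remaining steps are routine.
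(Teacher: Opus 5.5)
Your proof is correct and follows the paper's argument: a nonzero edge in either orientation forces $t_\alpha=t_\beta$, so stabilizing tuples are constant on each connected component, which gives $c(M)$ parameters and hence $c(M)-1$ after removing the diagonal scalar; the orbit dimension $n-c(M)$ then follows from $\dim\Tf=n-1$. Your graph-coboundary tangent-map check is an independent confirmation that the paper does not include.
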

Indeed, stabilizing a nonzero edge requires $t_\alpha=t_\beta$.  Hence $t$ is
constant on each component, giving $c(M)$ constants before removing the
common diagonal scalar.  On the connected generic locus the torus action is
effective and its orbits have dimension $n-1$.  Removing an edge may destroy
a loop without changing this dimension; the stabilizer grows only when the
support graph disconnects.  A stabilizer jump is not automatically equivalent
to a singular point of the affine quotient, which requires a separate local
ring analysis in Sec.~VII.

For the simultaneous universal action on $(M,P_1,\ldots,P_k)$, the same proof
uses the union of the off-diagonal support graphs of all matrices in the tuple.
For a fixed diagonal-spurion fiber the spurions add no edges, so this union
reduces to $G(M)$.  For a frozen off-diagonal tuple, its spurion support first
determines $T_{\mathrm f,\bm p}$, and adding the support of $M$ can only reduce
the stabilizer of the full point.

The off-diagonal space has $n(n-1)$ complex coordinates.  Its weights occur
in opposite pairs $\pm(e_\alpha-e_\beta)$, so the zero-weight semigroup spans
the kernel of the weight-lattice map.  That kernel has rank
$n(n-1)-(n-1)$, and hence the finitely generated torus invariant ring has
\begin{equation}
 \dim\C[m_{\alpha\beta}\mid\alpha\ne\beta]^{\Tf}
 =n(n-1)-(n-1)=(n-1)^2.
 \label{eq:offdiagonal-quotient-dimension}
\end{equation}
Here and below, the dimension of a finitely generated invariant algebra means
Krull dimension; on the generic Hermitian locus it equals the real quotient
dimension.  For a diagonal traceless translation space,
$0\leq\ell\leq n-1$.  Its tangent directions
lie in the diagonal subspace and are independent of the generic torus orbit
directions.  Therefore
\begin{equation}
 \operatorname{trdeg}_{\C}\operatorname{Frac}\cI_{n,k}(\bm p)
 =(n^2-1)-\ell-(n-1)
 =n(n-1)-\ell.
 \label{eq:fixed-fiber-transcendence-degree}
\end{equation}
This proves the dimension statement announced in Sec.~\ref{sec:introduction};
Sec.~III supplies the full tensor-product presentation.

\subsection{Global generators, localizations, and rational coordinates}
\label{subsec:ring-versus-field}

We finish by fixing terminology that will be essential for the $3+1$ system.
A \emph{global polynomial presentation} is an isomorphism
\begin{equation}
 \cI\simeq\C[y_1,\ldots,y_N]/J.
 \label{eq:global-ring-presentation}
\end{equation}
The generators $y_i$ are regular on every stratum and all algebraic
redundancy is recorded by the ideal $J$.  A set
$u_1,\ldots,u_d\in\cI$ is only a \emph{transcendence basis} if it is
algebraically independent and $\operatorname{Frac}(\cI)$ is algebraic over
$\C(u_1,\ldots,u_d)$.  Even when $d=\dim\cI$, a discrete algebraic ambiguity
may remain.  Birational coordinates require the stronger equality
\begin{equation}
 \operatorname{Frac}(\cI)=\C(u_1,\ldots,u_d).
 \label{eq:birational-coordinate-condition}
\end{equation}

Finally, an expression involving division by $s\in\cI$ belongs only to the
localization $\cI_s=\cI[s^{-1}]$ and is valid on the principal open set
$D(s)=\{s\ne0\}$.  For example, the relation anticipated in
Eq.~\eqref{eq:intro-quadrangle-relation} implies
\begin{equation}
 C_{\alpha\beta\gamma\delta}
 =\frac{T_{\alpha\beta\gamma}T_{\alpha\gamma\delta}}
 {Q_{\alpha\gamma}}
 \label{eq:local-quadrangle-reconstruction}
\end{equation}
only after localizing at $Q_{\alpha\gamma}$.  It does not remove the
quadrangle from a global generating set.  Similarly, spectral formulas that
divide by Vandermonde factors live away from the discriminant and cannot by
themselves establish a global Hamiltonian identity.

The universal algebra retains covariance under changes of flavor frame and
is therefore the natural object for off-diagonal spurions.  Fixed diagonal
matter directions are simpler because the flavor torus fixes their
translation space pointwise.  Their quotient can consequently be taken in
either order: first remove the translated diagonal coordinates, then impose
weight balance on the remaining edge variables.  The next section makes this
decomposition explicit and proves the exact multi-parameter diagonal-flow
theorem.

\section{Exact multi-parameter diagonal matter flows}
\label{sec:diagonal-flows}

We now specialize to a fixed tuple of diagonal Hermitian matter directions.
This case includes standard charged-current matter, the relative
active--sterile neutral-current term, and diagonal flavor-nonuniversal
radiative corrections.  It is simpler than a fixed off-diagonal spurion for a
structural reason: the full flavor torus fixes every diagonal matrix.  The
translation and rephasing quotients can therefore be separated exactly.

\subsection{Flow data and adapted diagonal coordinates}
\label{subsec:diagonal-flow-data}

Decompose the complexified traceless Hamiltonian space as
\begin{equation}
 V_n=\mathfrak d_n^0\oplus\mathfrak o_n,
 \qquad
 \mathfrak d_n^0=\{\operatorname{diag}(h_1,\ldots,h_n):
                         \textstyle\sum_\alpha h_\alpha=0\},
 \label{eq:diagonal-offdiagonal-splitting}
\end{equation}
where $\mathfrak o_n$ consists of matrices with zero diagonal.  Let the raw
diagonal charge vectors be $\widehat q_A\in\C^n$.  Their trace-projected
directions are
\begin{equation}
 q_A=\widehat q_A-\frac{\bm 1^{\mathsf T}\widehat q_A}{n}\bm 1,
 \qquad
 p_A=\operatorname{diag}(q_A)\in\mathfrak d_n^0.
 \label{eq:projected-diagonal-charges}
\end{equation}
Define
\begin{equation}
 W=\Span_{\C}\{p_1,\ldots,p_k\}\subset\mathfrak d_n^0,
 \qquad
 \ell=\dim W\leq\min(k,n-1).
 \label{eq:diagonal-flow-space}
\end{equation}
Identity-valued directions and redundant parameter combinations do not
contribute to $\ell$.

Writing $h_\alpha=M_{\alpha\alpha}$ and
$z_{\alpha\beta}=M_{\alpha\beta}$ for $\alpha\ne\beta$, the entrywise solution
of the full affine family is
\begin{equation}
 h_\alpha(\bm a)=h_\alpha(0)+\sum_{A=1}^k a_Aq_{A\alpha},
 \qquad
 z_{\alpha\beta}(\bm a)=z_{\alpha\beta}(0).
 \label{eq:exact-diagonal-entry-flow}
\end{equation}
Choose traceless diagonal matrices $L_j$, $j=1,\ldots,n-1-\ell$, which form
a basis of the trace-orthogonal complement
$W^\perp\subset\mathfrak d_n^0$,
\begin{equation}
 \Tr(L_jp_A)=0\quad\text{for every }A,
 \qquad
 d_j(M)=\Tr(L_jM).
 \label{eq:annihilator-diagonal-invariants}
\end{equation}
Then $\cD_Ad_j=0$.  The particular $L_j$ are noncanonical; their polynomial
algebra is canonically
$\operatorname{Sym}((\mathfrak d_n^0/W)^*)$.

For a real Hermitian tuple, the construction can be made orthogonal.  If the
columns of $V_W$ form an orthonormal basis of $W\subset\bm1^\perp$ and the
columns of $U$ complete them to an orthonormal basis of $\bm1^\perp$, then
\begin{equation}
 d=U^{\mathsf T}h,
 \qquad s=V_W^{\mathsf T}h,
 \qquad
 d(\bm a)=d(0),\qquad
 s(\bm a)=s(0)+V_W^{\mathsf T}Q\bm a,
 \label{eq:adapted-diagonal-flow-coordinates}
\end{equation}
where $Q=(q_1\ \cdots\ q_k)$.  No inverse of $Q$ is required, so this formula
also covers redundant matter parameters.

\subsection{Complete polynomial invariant algebra}
\label{subsec:diagonal-ring-theorem}

Let
\begin{equation}
 \mathcal B_n
 =\C[z_{\alpha\beta}\mid\alpha\ne\beta]^{\Tf}
 \label{eq:general-cycle-algebra}
\end{equation}
denote the off-diagonal rephasing invariant ring.

\begin{theorem}[Complete algebra of diagonal matter flows]
\label{thm:complete-diagonal-flow-algebra}
For an $\ell$-dimensional space
$W\subset\mathfrak d_n^0$ of trace-projected diagonal directions,
\begin{align}
 \cI^{\rm diag}_{n,W}
 &\equiv
 \C[V_n]^{\Tf\times\Ga(W)} \nonumber\\
 &\simeq
 \operatorname{Sym}((\mathfrak d_n^0/W)^*)\otimes_{\C}\mathcal B_n
 \simeq
 \C[d_1,\ldots,d_{n-1-\ell}]\otimes_{\C}\mathcal B_n.
 \label{eq:complete-diagonal-flow-algebra}
\end{align}
Every global polynomial invariant under both the full $W$-translation family
and flavor rephasing is therefore a polynomial in the unaffected diagonal
linear forms and the rephasing-balanced off-diagonal monomials, subject to the
relations of $\mathcal B_n$.
\end{theorem}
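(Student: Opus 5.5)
The plan is to separate the two quotients, which Sec.~II shows may be taken in either order for diagonal $W$. First I would apply Proposition~\ref{prop:linear-translation-quotient} with the adapted linear coordinates of Eq.~\eqref{eq:adapted-diagonal-flow-coordinates}. Choose the splitting $V_n=W\oplus U$ with $U=W'\oplus\mathfrak o_n$, where $W'\subset\mathfrak d_n^0$ is a complement of $W$ on which the $d_j$ restrict to coordinates. Then $\C[V_n]\simeq\C[s_1,\ldots,s_\ell]\otimes\C[d_1,\ldots,d_{n-1-\ell}]\otimes\C[z_{\alpha\beta}\mid\alpha\ne\beta]$. In these coordinates each $\cD_w$ with $w\in W$ acts as a constant-coefficient derivation in the $s$ variables alone. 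It annihilates the $d_j$ (by Eq.~\eqref{eq:annihilator-diagonal-invariants}) and the $z_{\alpha\beta}$ (by Eq.~\eqref{eq:exact-diagonal-entry-flow}). Since $W$ is spanned by $\ell$ independent directions, the joint kernel is exactly $\C[d]\otimes\C[z]$: expand $F$ in monomials in $s$ with coefficients in $\C[d,z]$, and note that the independent partial derivatives in the $s$ variables kill every positive-degree term in characteristic zero. This yields $\C[V_n]^{\Ga(W)}\simeq\operatorname{Sym}((\mathfrak d_n^0/W)^*)\otimes\C[z]$, and identifying $\operatorname{Sym}(W^\circ)$ with this tensor product is canonical.

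Second, by Theorem~\ref{thm:translation-rephasing-kernel} (specialized to the fixed diagonal fiber, where $\Tf$ fixes $W$ pointwise and the two actions commute), $\cI^{\rm diag}_{n,W}=(\C[V_n]^{\Ga(W)})^{\Tf}$. I then compute torus weights. Every diagonal linear form, and hence every $d_j$, has weight zero, while $z_{\alpha\beta}$ has weight $e_\alpha-e_\beta$. The $\Tf$ action therefore respects the tensor decomposition and is trivial on the first factor. Because $\Tf$ is a torus, it acts diagonalizably on $\C[d]\otimes\C[z]$ with a basis of monomials $d^a z^N$ of weight $\operatorname{wt}(z^N)$. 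A polynomial is invariant iff each of its weight components vanishes except the zero one, i.e.\ iff it lies in $\C[d]\otimes\C[z]_0=\C[d]\otimes\mathcal B_n$. This gives the isomorphism Eq.~\eqref{eq:complete-diagonal-flow-algebra}.

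The last assertion of the theorem then follows once we note two facts. First, $\mathcal B_n$ is spanned by zero-weight (balanced-circulation) monomials, by Eq.~\eqref{eq:monomial-flow-weight}. Second, the tensor product over $\C$ of a polynomial ring with $\mathcal B_n$ introduces no relations beyond those of $\mathcal B_n$, since the tensor product of free $\C$-modules has the product basis.

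The main obstacle I expect is bookkeeping rather than substance. The coordinates $d_j$ must genuinely extend, together with $s$ and $z$, to a linear coordinate system on $V_n$ in the complex (not merely real orthonormal) setting. The proof must also not secretly require $k=\ell$. To handle both, I would work intrinsically with $W^\circ\subset V_n^*$. Note that $W^\circ=(\mathfrak d_n^0/W)^*\oplus\mathfrak o_n^*$, because $W\subset\mathfrak d_n^0$ and $\mathfrak o_n$ is a $\Tf$-stable complement. Trace pairing identifies $(\mathfrak d_n^0/W)^*$ with the span of the $d_j$ even when the trace form restricted to $W$ is degenerate over $\C$. If that degeneracy occurs, I would replace $L_j$ by any basis of the annihilator in $(\mathfrak d_n^0)^*$ rather than of a trace-orthogonal complement.
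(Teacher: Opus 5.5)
Your proposal is correct and follows the paper's proof. Split $\C[V_n]\simeq\operatorname{Sym}((\mathfrak d_n^0)^*)\otimes_{\C}\C[z_{\alpha\beta}\mid\alpha\ne\beta]$. Then $\Ga(W)$ acts only on the first factor, where adapted coordinates turn the $\cD_w$ into partial derivatives as in Proposition~\ref{prop:linear-translation-quotient}, and $\Tf$ acts only on the second, so taking torus invariants of the second factor gives $\mathcal B_n$. Your explicit weight-space argument and the care about complex linear coordinates fill in details the paper leaves implicit. The complex-degeneracy worry is harmless, because the trace form is nondegenerate on $\mathfrak d_n^0$, so the $d_j$ built from $W^\perp$ already form a basis of $(\mathfrak d_n^0/W)^*$.
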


\emph{Proof.}
The coordinate algebra splits as
\begin{equation}
 \C[V_n]
 \simeq
 \operatorname{Sym}((\mathfrak d_n^0)^*)
 \otimes_{\C}\C[z_{\alpha\beta}\mid\alpha\ne\beta].
 \label{eq:coordinate-ring-diagonal-splitting}
\end{equation}
Translations by $W$ act only on the first factor, while $\Tf$ acts trivially
on that factor and only on the second.  Proposition
\ref{prop:linear-translation-quotient} gives
\begin{equation}
 \operatorname{Sym}((\mathfrak d_n^0)^*)^{\Ga(W)}
 =\operatorname{Sym}((\mathfrak d_n^0/W)^*).
 \label{eq:diagonal-translation-kernel}
\end{equation}
Taking the torus invariants yields Eq.~\eqref{eq:complete-diagonal-flow-algebra}.
Equivalently, adapted diagonal coordinates turn the $\ell$ independent
derivations into ordinary partial derivatives; their joint kernel contains
exactly the remaining $n-1-\ell$ diagonal coordinates and all off-diagonal
coordinates.  This proves completeness among global polynomials. $\square$

This is a fixed-fiber theorem.  It neither asserts that universal
nonreductive invariant formation commutes with arbitrary specialization nor
classifies extra first integrals of one prescribed curve inside $W$.  It does
imply profile independence for the common algebra: for any differentiable
path $\bm a(x)$,
\begin{equation}
 \frac{d}{dx}I(M(\bm a(x)))
 =\sum_A\dot a_A(x)\cD_AI=0.
 \label{eq:profile-independent-local-invariant}
\end{equation}
Equation~\eqref{eq:profile-independent-local-invariant} concerns the local
Hamiltonian family, not automatically the path-ordered flavor-evolution
operator.

\subsection{Circulation semigroup, Hilbert basis, and normality}
\label{subsec:circulation-semigroup}

Let $E_n=\{(\alpha,\beta):\alpha\ne\beta\}$ be the directed edges of the
complete flavor graph and let $B$ be its incidence matrix, with one column
$e_\alpha-e_\beta$ for the edge $\alpha\to\beta$.  The zero-weight exponent
semigroup is
\begin{equation}
 S_n=\left\{u\in\mathbb N^{E_n}:Bu=0\right\}.
 \label{eq:circulation-semigroup}
\end{equation}
Equation~\eqref{eq:monomial-flow-weight} gives
\begin{equation}
 \mathcal B_n\simeq\C[S_n].
 \label{eq:cycle-semigroup-algebra}
\end{equation}

\begin{proposition}[Cycle Hilbert basis]
\label{prop:cycle-hilbert-basis}
The Hilbert basis of $S_n$ consists exactly of the incidence vectors of
directed simple cycles of lengths $2,\ldots,n$.  Consequently, if
$C=(\alpha_1\alpha_2\cdots\alpha_r)$, the monomials
\begin{equation}
 Z_C=z_{\alpha_1\alpha_2}z_{\alpha_2\alpha_3}\cdots
     z_{\alpha_r\alpha_1}
 \label{eq:simple-cycle-monomial}
\end{equation}
form the minimal monomial generating set of $\mathcal B_n$.
\end{proposition}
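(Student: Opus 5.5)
We need to show two things about the circulation semigroup $S_n$ of Eq.~\eqref{eq:circulation-semigroup}: every element is a nonnegative integer sum of directed simple cycle vectors, and no simple cycle vector is a sum of two nonzero elements of $S_n$. Together these identify the Hilbert basis of this pointed semigroup (pointed because $u\ge0$) with the set of simple cycles. The claim about $\mathcal B_n$ then follows from Eq.~\eqref{eq:cycle-semigroup-algebra}. For a positive affine semigroup, the minimal monomial generating set of the semigroup algebra is the set of monomials of the irreducible elements, i.e.\ the $Z_C$ of Eq.~\eqref{eq:simple-cycle-monomial}.

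\textbf{Decomposition.} We argue by induction on $|u|=\sum_e u_e$. Take $u\in S_n$ with $u\neq0$ and let $G_u$ be the directed multigraph whose edge $\alpha\to\beta$ carries multiplicity $u_{\alpha\beta}$. Since $Bu=0$, every vertex has equal in-degree and out-degree. Starting at any vertex with positive out-degree, we follow outgoing edges. Balance guarantees that we can always leave a vertex we have entered, and the vertex set is finite, so the walk eventually revisits a vertex. The portion of the walk between the two visits is a directed simple cycle $C$. There are no loops, since $E_n$ excludes $\alpha=\beta$, so $C$ has length between $2$ and $n$. Its incidence vector $\chi_C$ satisfies $\chi_C\le u$ entrywise and $B\chi_C=0$. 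Hence $u-\chi_C\in S_n$ with smaller total, and induction gives the decomposition.

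\textbf{Irreducibility and minimality.} Suppose $\chi_C=v+w$ with $v,w\in S_n$ both nonzero. Then $v\le\chi_C$ is a $0/1$ vector supported on a proper nonempty subset of the edges of $C$. A proper subset of the edges of a directed simple cycle contains some edge $\alpha\to\beta$ whose successor edge out of $\beta$ is absent. Within $C$, the only edge leaving $\beta$ is that successor, so at $\beta$ the support of $v$ has in-flow $1$ and out-flow $0$, contradicting $Bv=0$. Hence each $\chi_C$ is irreducible.

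Distinct simple cycles have distinct incidence vectors: a $2$-cycle is determined by its pair of vertices, and a longer cycle by its cyclically ordered edge set. Therefore every cycle vector must appear in any generating set. This gives both that the cycle vectors generate $S_n$ and that they form its Hilbert basis. Translating through Eq.~\eqref{eq:cycle-semigroup-algebra}, monomial generation of $\mathcal B_n$ follows because the torus-invariant subring of a polynomial ring is spanned by its invariant monomials, and minimality follows from irreducibility together with the positive grading by total degree.

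\textbf{Main obstacle.} The substantive step is the irreducibility argument, specifically the fact that a proper sub-support of a simple cycle cannot be balanced. It relies on $C$ being simple, i.e.\ having no repeated vertices. Closed walks that revisit a vertex are exactly the reducible elements, since they split at the repeated vertex. I would state that remark explicitly to exclude non-simple closed walks from the basis.
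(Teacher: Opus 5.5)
Your proposal is correct and follows essentially the same route as the paper: a walk along positive edges extracts a simple cycle from any nonzero circulation, and a simple cycle's unit incidence vector is indecomposable because no proper nonempty part of it can be balanced. The differences are minor: you subtract one copy of $\chi_C$ rather than the minimum multiplicity on the cycle, and you spell out the step from irreducibility and the positive grading to minimal monomial generators of $\mathcal B_n$, which the paper leaves implicit.
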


\emph{Proof.}
Starting from a positive edge of a nonzero nonnegative circulation, flow
conservation always supplies a positive outgoing edge at the next vertex.
Following such edges eventually repeats a vertex and produces a directed
simple cycle.  Subtracting the minimum edge multiplicity on that cycle
preserves nonnegativity and conservation; iteration decomposes the
circulation into simple cycles.  Conversely, a circulation supported on one
directed simple cycle has equal multiplicity on every edge.  Its unit
incidence vector cannot split into two nonzero circulations. $\square$

The number of length-$r$ Hilbert-basis elements is
\begin{equation}
 N_{n,r}=\binom nr(r-1)!=\frac{n!}{r(n-r)!},
 \qquad
 N_n=\sum_{r=2}^nN_{n,r}.
 \label{eq:number-simple-directed-cycles}
\end{equation}
Thus $N_3=3+2=5$, whereas $N_4=6+8+6=20$.  Introducing one variable $y_C$
for every directed simple cycle gives a surjection
\begin{equation}
 \Phi_n:\C[y_C\mid C\text{ a directed simple cycle}]
 \longrightarrow\mathcal B_n,
 \qquad y_C\longmapsto Z_C.
 \label{eq:cycle-presentation-map}
\end{equation}
Its kernel is a prime binomial toric ideal.  Hence the Hilbert basis is a
global generating set with relations, not a transcendence basis.  In
particular, for $n=4$ the quadrangles cannot be removed globally in favor of
edges and triangles.  Standard background on affine semigroup rings and
toric ideals can be found in Refs.~\cite{Sturmfels1996,CoxLittleSchenck2011}.

The ring is also normal.  The complete directed graph contains the reverse of
every edge, so $S_n$ generates the lattice $\ker_{\mathbb Z}B$.  Explicitly,
if $u=u_+-u_-$ is an integral circulation and $q$ is obtained by reversing
every edge of $u_+$ with the same multiplicity, then both $u_++q$ and
$u_-+q$ lie in $S_n$ and their difference is $u$.  Since
\begin{equation}
 S_n=\ker_{\mathbb Z}B\cap\mathbb N^{E_n},
 \label{eq:saturated-circulation-semigroup}
\end{equation}
it is saturated: if $ru\in S_n$ for an integer $r>0$ and
$u\in\ker_{\mathbb Z}B$, then $ru\geq0$ implies $u\geq0$.  Therefore
$\mathcal B_n$ is a finitely generated normal affine domain, and adjoining the
diagonal variables in Theorem~\ref{thm:complete-diagonal-flow-algebra}
preserves these properties.  Normality does not imply that the quotient is
smooth, factorial, or a complete intersection.

\subsection{Dimension and Hilbert series}
\label{subsec:diagonal-dimension-hilbert}

The incidence matrix $B$ has rank $n-1$.  Because the strictly positive
all-edge circulation lies in $S_n$, its cone is full dimensional in
$\ker_{\R}B$.  It follows that
\begin{equation}
 \dim\mathcal B_n=n(n-1)-(n-1)=(n-1)^2.
 \label{eq:general-cycle-ring-dimension}
\end{equation}
Combining this with Theorem~\ref{thm:complete-diagonal-flow-algebra} gives
\begin{corollary}[Dimension of the diagonal-flow quotient]
\label{cor:diagonal-flow-dimension}
\begin{equation}
 \dim\cI^{\rm diag}_{n,W}
 =(n-1-\ell)+(n-1)^2
 =n(n-1)-\ell.
 \label{eq:exact-diagonal-dimension}
\end{equation}
Here $\dim$ is Krull dimension.  On the generic Hermitian locus it equals the
real quotient dimension $(n^2-1)-\ell-(n-1)$.
\end{corollary}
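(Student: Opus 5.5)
The plan is to combine the tensor decomposition of Theorem~\ref{thm:complete-diagonal-flow-algebra} with the dimension count~\eqref{eq:general-cycle-ring-dimension}, using additivity of Krull dimension for tensor products of finitely generated domains over $\C$. By Theorem~\ref{thm:complete-diagonal-flow-algebra}, $\cI^{\rm diag}_{n,W}\simeq \C[d_1,\ldots,d_{n-1-\ell}]\otimes_{\C}\mathcal B_n$. Here $d_j=\Tr(L_jM)$ are linearly independent linear forms, because the $L_j$ form a basis of $W^\perp$. The first factor is therefore a polynomial ring in $n-1-\ell$ variables. The second factor is a finitely generated normal affine domain by the normality discussion, and it has dimension $(n-1)^2$ by Eq.~\eqref{eq:general-cycle-ring-dimension}.

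The key steps run as follows. First, I would note that the tensor product $A\otimes_{\C}\mathcal B_n$ with $A=\C[d_1,\ldots,d_{n-1-\ell}]$ is isomorphic to the polynomial ring $\mathcal B_n[d_1,\ldots,d_{n-1-\ell}]$. Second, for a finitely generated domain over a field, Krull dimension equals the transcendence degree of the fraction field. Adjoining $n-1-\ell$ independent indeterminates raises the transcendence degree by exactly $n-1-\ell$. This gives $\dim\cI^{\rm diag}_{n,W}=(n-1-\ell)+(n-1)^2$. Third, expanding gives $(n-1)^2+(n-1)-\ell=n(n-1)-\ell$.

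The only substantive point is the input $\dim\mathcal B_n=(n-1)^2$. I take it from Eq.~\eqref{eq:general-cycle-ring-dimension}, where it follows because $\C[S_n]$ is a semigroup ring whose group of differences is $\ker_{\mathbb Z}B$, a lattice of rank $n(n-1)-(n-1)$.

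For the real-form statement, I would compare with Eq.~\eqref{eq:fixed-fiber-transcendence-degree}. The real points of $V_n$ form $\mathfrak h_n^0$, of real dimension $n^2-1$. The translations by the real span of $W$ remove $\ell$ directions. On the connected generic locus, the compact torus $K_{\rm f}$ acts with $(n-1)$-dimensional orbits by Proposition~\ref{prop:support-graph-stabilizer}. These orbit directions lie in the off-diagonal block, so they are transverse to the diagonal translation directions. The quotient therefore has real dimension $(n^2-1)-\ell-(n-1)=n(n-1)-\ell$, which matches the Krull dimension. To identify the two counts, I would use the fact that the Hermitian real form is Zariski dense in $V_n$. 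Then real-valued invariant coordinates that are independent on the generic real locus remain algebraically independent over $\C$.

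The step I expect to need the most care is this last identification. It requires showing that the generic real fibers of the quotient map are exactly the $K_{\rm f}$-orbits times the translation orbits. The affine torus quotient could in principle identify non-closed orbits, but on the generic locus, where all edges are nonzero, the torus orbits are closed. I would argue this first, so that the fiber dimensions add as claimed.
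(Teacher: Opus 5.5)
Your proposal is correct and follows the paper's own route. The paper obtains the corollary by combining the tensor decomposition of Theorem~\ref{thm:complete-diagonal-flow-algebra} with $\dim\mathcal B_n=(n-1)^2$ (the incidence matrix has rank $n-1$ and the circulation cone is full dimensional), and the Hermitian statement is the translation-plus-generic-orbit count of Eq.~\eqref{eq:fixed-fiber-transcendence-degree}. Your additional care, namely additivity of Krull dimension via transcendence degree and closedness of generic torus orbits, makes explicit what the paper leaves implicit without changing the argument.
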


With every matrix entry assigned degree one, the same weight projection gives
the constant-term Hilbert series, where $\operatorname{CT}$ extracts the
coefficient of $t_1^0\cdots t_{n-1}^0$,
\begin{align}
 H_{\mathcal B_n}(s)
 &=\operatorname{CT}_{t_1,\ldots,t_{n-1}}
   \prod_{\alpha\ne\beta}
   \frac{1}{1-s\,t_\alpha t_\beta^{-1}}\bigg|_{t_n=1},
 \label{eq:general-cycle-hilbert-series}\\
 H_{\cI^{\rm diag}_{n,W}}(s)
 &=\frac{H_{\mathcal B_n}(s)}{(1-s)^{n-1-\ell}}.
 \label{eq:diagonal-flow-hilbert-series}
\end{align}
This constant-term formula, rather than a product over cycle generators,
automatically includes all syzygies.  The explicit $n=4$ rational and
multigraded forms will be computed in Secs.~VI and VII.

For completeness, if a bidirected support graph has $m$ undirected edges and
$c$ connected components, including isolated vertices, its cycle quotient has dimension
$2m-n+c$.  On the Hermitian form this separates into $m$ edge magnitudes and
$m-n+c$ loop phases.  This statement describes the invariant ring after
edges are structurally removed.  A point where some edge values happen to
vanish is instead a stratum of the global quotient; it does not redefine the
global ring.

\subsection{Matter-basis changes and composition dependence}
\label{subsec:matter-basis-composition}

The algebra depends only on $W$.  If
\begin{equation}
 \widetilde p_B=\sum_AR_{BA}p_A,
 \qquad
 \widetilde\cD_B=\sum_AR_{BA}\cD_A,
 \label{eq:matter-basis-change}
\end{equation}
and the two tuples span the same subspace, their joint kernels coincide.  If
$W\subset W'$, then
\begin{equation}
 \cI^{\rm diag}_{n,W'}\subset\cI^{\rm diag}_{n,W}.
 \label{eq:enlarged-flow-subalgebra}
\end{equation}
Each added independent diagonal direction removes one diagonal polynomial
generator and leaves $\mathcal B_n$ unchanged.

A fixed-composition medium with $a_A=\xi_Aa$ follows the single direction
$p_{\rm eff}=\sum_A\xi_Ap_A$ and therefore has $\ell=1$ unless that direction
vanishes.  If the $a_A$ vary independently, $\ell$ is the rank of their full
projected span.  If both $a$ and the composition ratio $\eta$ vary in
$a(p_e+\eta p_s)$, the tangent directions at $a\ne0$ span $p_e$ and $p_s$;
the common polynomial algebra is therefore the $\ell=2$ algebra even though
the rank drops at the exceptional locus $a=0$.  Positivity of physical
densities does not change these polynomial conclusions, because a polynomial
identity valid on an open positive cone extends to its complex span.

\subsection{Spectral pullback and multi-parameter RGEs}
\label{subsec:diagonal-spectral-pullback}

The flavor-basis theorem is regular at spectral degeneracies, but on the
Hermitian real form it can be pulled back to the usual instantaneous spectral
variables.  On a simple-spectrum chart let
\begin{equation}
 M=V\Lambda V^\dagger,
 \qquad
 \Lambda=\operatorname{diag}(\lambda_1,\ldots,\lambda_n),
 \qquad \sum_i\lambda_i=0.
 \label{eq:instantaneous-spectral-decomposition}
\end{equation}
Then every diagonal generator and cycle has the exact representation
\begin{align}
 d_j
 &=\sum_i\lambda_i\sum_\alpha(L_j)_{\alpha\alpha}|V_{\alpha i}|^2,
 \label{eq:spectral-diagonal-invariant}\\
 Z_C
 &=\prod_{a=1}^{r}
   \left(\sum_i\lambda_i
   V_{\alpha_a i}V^*_{\alpha_{a+1}i}\right),
 \qquad \alpha_{r+1}=\alpha_1.
 \label{eq:spectral-cycle-invariant}
\end{align}
These are polynomial Hamiltonian identities and remain valid when eigenvalues
coalesce, even though labeled eigenvectors then cease to be unique.

On the simple-spectrum chart, differentiating
$V^\dagger MV=\Lambda$ gives the exact multi-parameter equations
\begin{align}
 \frac{\partial\lambda_i}{\partial a_A}
 &=\sum_\alpha q_{A\alpha}|V_{\alpha i}|^2,
 \label{eq:multiparameter-eigenvalue-flow}\\
 (\Omega_A)_{ij}
 &=\frac{\sum_\alpha q_{A\alpha}V^*_{\alpha i}V_{\alpha j}}
 {\lambda_j-\lambda_i},
 \qquad i\ne j,
 \qquad
 \Omega_A=V^\dagger\frac{\partial V}{\partial a_A}.
 \label{eq:multiparameter-eigenvector-flow}
\end{align}
The diagonal entries of $\Omega_A$ depend on the eigenvector phase gauge.
The poles in Eq.~\eqref{eq:multiparameter-eigenvector-flow} mark failure of the
labeled spectral chart, not a singularity of the flavor-basis invariant
algebra.  In a genuinely multi-parameter domain one should not assume that all
diagonal Berry connections can be set to zero globally.

\subsection{Three-flavor fibers}
\label{subsec:three-flavor-diagonal-fibers}

For standard three-flavor matter,
\begin{equation}
 p_e=|e\rangle\langle e|-\frac13\bm1
 =\frac13\operatorname{diag}(2,-1,-1),
 \qquad \ell=1,
 \label{eq:three-flavor-electron-direction}
\end{equation}
and the surviving diagonal invariant is
$d_{\mu\tau}=M_{\mu\mu}-M_{\tau\tau}$.  Define
\begin{equation}
 T=z_{e\mu}z_{\mu\tau}z_{\tau e},
 \qquad
 \widetilde T=z_{e\tau}z_{\tau\mu}z_{\mu e},
 \qquad
 Q_{\alpha\beta}=z_{\alpha\beta}z_{\beta\alpha}.
 \label{eq:three-flavor-cycle-generators}
\end{equation}
The theorem gives the global presentation
\begin{equation}
 \cI^{\rm diag}_{3,\langle p_e\rangle}
 \simeq
 \frac{\C[d_{\mu\tau},Q_{e\mu},Q_{e\tau},Q_{\mu\tau},T,\widetilde T]}
 {\left(T\widetilde T-Q_{e\mu}Q_{e\tau}Q_{\mu\tau}\right)}.
 \label{eq:complete-three-flavor-ring}
\end{equation}
It has six global generators, one relation, and Krull dimension five.  On the
Hermitian real form, $\widetilde T=T^*$ and
\begin{equation}
 (\operatorname{Re}T)^2+(\operatorname{Im}T)^2
 =Q_{e\mu}Q_{e\tau}Q_{\mu\tau}.
 \label{eq:three-flavor-real-relation}
\end{equation}
The quotient therefore has dimension five, matching the five independent
matter-invariant observables counted in Ref.~\cite{HarrisonScott2002}; its
global polynomial coordinate ring naturally uses six real generators with one
relation.

For two independent nonuniversal diagonal directions,
$W=\mathfrak d_3^0$ and $\ell=2$.  No diagonal invariant survives,
\begin{equation}
 \cI^{\rm diag}_{3,W}=\mathcal B_3,
 \qquad \dim\cI^{\rm diag}_{3,W}=4.
 \label{eq:three-flavor-two-direction-ring}
\end{equation}

\subsection{Standard $3+1$ matter compositions}
\label{subsec:standard-31-compositions}

At tree level in electrically neutral, unpolarized matter, the electron and
proton neutral-current contributions cancel and the active potentials are
$V_{CC}=\sqrt2G_FN_e$ and $V_{NC}=-G_FN_n/\sqrt2$.  Since a sterile state does
not interact, the unshifted potential is
\begin{equation}
 V_\nu=\operatorname{diag}(V_{CC}+V_{NC},V_{NC},V_{NC},0).
 \label{eq:unshifted-31-potential}
\end{equation}
Subtracting $V_{NC}\bm1$ gives the oscillation-equivalent positive sterile
entry $-V_{NC}$; this entry records the missing sterile neutral-current
interaction, not a sterile force.  In $M=2EH$ define
\begin{equation}
 a_e=2\sqrt2G_FN_eE,
 \qquad
 a_s=\sqrt2G_FN_nE,
 \qquad
 \eta=\frac{a_s}{a_e}=\frac{N_n}{2N_e}\geq0,
 \label{eq:31-matter-parameters}
\end{equation}
with strict positivity when the medium contains neutrons.  With
$\overline E_\alpha=|\alpha\rangle\langle\alpha|-\bm1/4$, the trace-free
matter term is $a_e\overline E_e+a_s\overline E_s$.

For fixed composition, $a_s=\eta a_e$ and there is one flow direction
\begin{equation}
 p_\eta=\overline E_e+\eta\overline E_s,
 \qquad \ell=1.
 \label{eq:fixed-composition-31-direction}
\end{equation}
A convenient annihilator basis is
\begin{align}
 d_{\mu\tau}&=M_{\mu\mu}-M_{\tau\tau},
 \label{eq:31-mu-tau-diagonal-invariant}\\
 d_\eta&=\eta M_{ee}
 +\frac{1-\eta}{2}(M_{\mu\mu}+M_{\tau\tau})-M_{ss}.
 \label{eq:31-eta-diagonal-invariant}
\end{align}
Therefore
\begin{equation}
 \cI^{\rm diag}_{4,\langle p_\eta\rangle}
 \simeq\C[d_{\mu\tau},d_\eta]\otimes\mathcal B_4,
 \qquad \dim\cI^{\rm diag}_{4,\langle p_\eta\rangle}=11.
 \label{eq:fixed-composition-31-ring}
\end{equation}
The Hilbert basis of $\mathcal B_4$ has six two-cycles, eight triangles, and
six quadrangles.  On the Hermitian form these organize into six real edge
moduli, four reverse-orientation triangle pairs (equivalently four complex
triangle generators), and three reverse-orientation quadrangle pairs
(equivalently three complex quadrangle generators).
Together with $d_{\mu\tau}$ and $d_\eta$, this gives 22 real global generators
with syzygies and an eleven-dimensional quotient, in agreement with the
fixed-$\eta$ matter flow of Ref.~\cite{WangZhou2026}.

If the electron and neutron densities vary independently, then
\begin{equation}
 W=\Span\{\overline E_e,\overline E_s\},
 \qquad \ell=2,
 \label{eq:independent-electron-sterile-span}
\end{equation}
and only $d_{\mu\tau}$ survives diagonally:
\begin{equation}
 \cI^{\rm diag}_{4,W}
 \simeq\C[d_{\mu\tau}]\otimes\mathcal B_4,
 \qquad \dim\cI^{\rm diag}_{4,W}=10.
 \label{eq:independent-composition-31-ring}
\end{equation}
The off-diagonal Hilbert basis and all of its syzygies are unchanged.
Antineutrinos reverse both matter coefficients and hence span the same $W$;
their polynomial invariant algebra is identical.

\subsection{Diagonal radiative directions}
\label{subsec:diagonal-radiative-directions}

Let $p_0$ and $r$ be trace-projected diagonal directions.  A correction tied
to the same density coefficient,
\begin{equation}
 M(a)=M_0+a(p_0+\epsilon r),
 \label{eq:fixed-radiative-composition}
\end{equation}
still has $W_\epsilon=\C(p_0+\epsilon r)$ and $\ell=1$.  Its $n-2$
diagonal invariants rotate to annihilate $p_0+\epsilon r$, but their number is
unchanged and every off-diagonal cycle remains exact.  For the illustrative
raw $3+1$ direction $(1,0,\epsilon,\eta)$ in the order
$(e,\mu,\tau,s)$, set
\begin{equation}
 x_e=M_{ee}-M_{\mu\mu},\quad
 x_\tau=M_{\tau\tau}-M_{\mu\mu},\quad
 x_s=M_{ss}-M_{\mu\mu}.
 \label{eq:radiative-difference-coordinates}
\end{equation}
Two exact diagonal invariants are
\begin{equation}
 d_\tau^{(\epsilon)}=x_\tau-\epsilon x_e,
 \qquad
 d_s^{(\eta)}=x_s-\eta x_e.
 \label{eq:fixed-radiative-diagonal-invariants}
\end{equation}

If the radiative coefficient is independently variable,
\begin{equation}
 M(a,b)=M_0+ap_0+br,
 \label{eq:independent-radiative-flow}
\end{equation}
then $W=\Span\{p_0,r\}$ and generically $\ell=2$; one additional diagonal
generator is lost.  More precisely, for $\epsilon_1\ne\epsilon_2$ and
independent $p_0,r$,
\begin{equation}
 \cI^{\rm diag}_{n,W_{\epsilon_1}}
 \cap_{\C[V_n]}\cI^{\rm diag}_{n,W_{\epsilon_2}}
 =\cI^{\rm diag}_{n,\Span\{p_0,r\}}.
 \label{eq:two-composition-intersection}
\end{equation}
If $\overline E_e$, $\overline E_s$, and a third direction spanning the
remaining diagonal axis of a $3+1$ system all vary independently, then
$\ell=3$, no diagonal generator survives, and the full invariant algebra is
$\mathcal B_4$ of dimension nine.

The flow-space rank $\ell$ and the effective matrix rank
$r_{\rm eff}$ of Sec.~IV are logically different.  A fixed radiative
correction can raise $r_{\rm eff}$ while leaving $\ell=1$.  Conversely, a
new coefficient lowers the invariant dimension only if its trace-projected
direction is linearly independent of the existing flow space.  Corrections
tied to several existing density coefficients may rotate that space without
increasing its dimension.  Section~VIII develops the corresponding controlled
breaking problem for off-diagonal NSI.

With the complete scalar algebra for diagonal translations established, we
next turn to common low-rank support, where exterior-power Krylov covariants
provide additional distinguished representatives.

\section{Rank-$r$ Krylov and exterior-power invariants}
\label{sec:krylov-exterior}

The diagonal-flow theorem classifies the full scalar invariant algebra when
the matter directions are diagonal in the flavor frame.  A different
structure becomes visible when one or several matter directions share a
low-dimensional support after an unobservable scalar shift.  On such a fixed
spurion fiber, block-Krylov rows evolve only by combinations of earlier rows.
Their exterior product is therefore exactly matter independent.  The result
does not replace the invariant-ring description of Sec.~III and does not claim
that Krylov minors generate that ring.  Rather, it identifies distinguished
determinantal representatives whose spectral form will be analyzed in
Sec.~V.

\subsection{Effective rank and common effective support}
\label{subsec:effective-rank-common-support}

For a Hermitian matter direction $P$, oscillations see only its scalar-shift
class.  Its effective rank is
\begin{equation}
 r_{\rm eff}(P)=\min_{c\in\R}\rank(P-c\bm1)
 =n-\max_{\lambda\in\operatorname{spec}P}
       \dim\ker(P-\lambda\bm1).
 \label{eq:effective-rank-spectral-multiplicity}
\end{equation}
The second equality follows by diagonalizing $P$: a minimizing $c$ is an
eigenvalue of largest multiplicity.  The minimizer and its complementary
support need not be unique if several eigenvalues have the same largest
multiplicity.  The case $r_{\rm eff}=0$ is a flavor-universal direction and is
discarded below.

For a tuple $\bm P=(P_1,\ldots,P_k)$, the relevant joint quantity is instead
\begin{equation}
 r_{\rm com}(\bm P)
 =\min_{\bm c\in\R^k}
 \dim\left(\sum_{A=1}^k\operatorname{im}
                    (P_A-c_A\bm1)\right).
 \label{eq:common-effective-support-rank}
\end{equation}
Choose a minimizing tuple $\bm c$ and an isometry
$S:\C^r\rightarrow\C^n$, $S^\dagger S=\bm1_r$, whose image is the sum in
Eq.~\eqref{eq:common-effective-support-rank}.  Hermiticity implies that each
$P_A-c_A\bm1$ annihilates the orthogonal complement of this image.  Hence
\begin{equation}
 P_A=c_A\bm1+S D_A S^\dagger,
 \qquad
 D_A=S^\dagger(P_A-c_A\bm1)S=D_A^\dagger.
 \label{eq:common-support-decomposition}
\end{equation}
The matrices $D_A$ need not be invertible and need not commute.  Only their
joint support is required to be minimal.  Minimality is useful for counting,
but the theorem below remains true for any representation of the form
\eqref{eq:common-support-decomposition}.  The rank $r_{\rm com}$ is unchanged
by an invertible change of matter-parameter basis and is logically distinct
from the translation-space dimension $\ell$ of Sec.~III.  For every $r$, the
$q=1$ covariant contains only the fixed support frame and is independent of
$M$.  If $r=n$, the condition $qr\leq n$ below permits no larger $q$; the
nontrivial hierarchy therefore requires $r<n$ and $q\geq2$.

We work on the fixed-spurion fiber
\begin{equation}
 M(\bm a)=M_0+
 \sum_{A=1}^k a_A(c_A\bm1+S D_A S^\dagger).
 \label{eq:common-support-affine-family}
\end{equation}
When the $P_A$ were first trace projected, the scalar pieces in
Eq.~\eqref{eq:common-support-affine-family} are precisely compensated by the
traces of $S D_A S^\dagger$; their temporary separation merely exposes a
low-rank representative of the same oscillation class.

\subsection{Exact block-Krylov row evolution}
\label{subsec:block-krylov-row-evolution}

For $p\geq0$ define the $r\times n$ block rows and the $r\times r$ moments
\begin{equation}
 R_p=S^\dagger M^p,
 \qquad
 G_p=S^\dagger M^pS,
 \qquad
 K_q=\begin{pmatrix}R_0\\R_1\\ \vdots\\R_{q-1}\end{pmatrix}
 \in\C^{qr\times n}.
 \label{eq:section4-block-krylov-definitions}
\end{equation}
Rows are ordered first by $p=0,\ldots,q-1$ and then by the chosen support
frame.  We write $\partial_A=\partial/\partial a_A$.

\begin{theorem}[Common-support Krylov invariance]
\label{thm:common-support-block-krylov}
For the family \eqref{eq:common-support-affine-family},
\begin{align}
 \partial_A R_0&=0,
 \label{eq:krylov-R0-flow}\\
 \partial_A R_p
 &=p c_A R_{p-1}
 +\sum_{j=0}^{p-1}G_jD_A R_{p-1-j},
 \qquad p\geq1.
 \label{eq:exact-block-row-flow}
\end{align}
Equivalently,
\begin{equation}
 \partial_AK_q=L_{A,q}K_q,
 \label{eq:block-krylov-matrix-flow}
\end{equation}
where, for block indices $p,\ell=0,\ldots,q-1$,
\begin{equation}
 (L_{A,q})_{p\ell}=
 \begin{cases}
  G_{p-1-\ell}D_A
  +p c_A\bm1_r\,\delta_{\ell,p-1},&\ell<p,\\
  0,&\ell\geq p.
 \end{cases}
 \label{eq:block-lower-krylov-generator}
\end{equation}
Thus $L_{A,q}$ is strictly block-lower triangular.

For every $q\geq1$, $\rank K_q$ is constant along each connected matter
orbit.  If $m=qr\leq n$, the exterior covariant
\begin{equation}
 \Omega_q(M;S)
 =\bigwedge_{p=0}^{q-1}\bigwedge_{\rho=1}^{r}
       (R_p)_{\rho,*}
 \in\bigwedge^m(\C^n)^*
 \label{eq:krylov-exterior-covariant}
\end{equation}
obeys
\begin{equation}
 \partial_A\Omega_q=0
 \qquad\text{for every }A.
 \label{eq:krylov-exterior-invariance}
\end{equation}
Equivalently, all of its Pl\"ucker coordinates
\begin{equation}
 \kappa_I^{(q)}(M;S)
 =\det K_q[:,I],
 \qquad I\subset\{1,\ldots,n\},\quad |I|=m,
 \label{eq:krylov-maximal-minors}
\end{equation}
are joint polynomial first integrals.  They are homogeneous in $M$ of degree
\begin{equation}
 \deg_M\kappa_I^{(q)}
 =r\sum_{p=0}^{q-1}p=\frac{rq(q-1)}2.
 \label{eq:krylov-minor-degree}
\end{equation}
When $qr=n$, Eq.~\eqref{eq:krylov-maximal-minors} consists of the single
square determinant $\det K_q$.
\end{theorem}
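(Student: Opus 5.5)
The plan is to differentiate $M^p$ directly and then read off all remaining claims from a triangular linear ODE for $K_q$. Along the family \eqref{eq:common-support-affine-family} we have $\partial_AM=P_A=c_A\bm1+SD_AS^\dagger$, which is independent of $\bm a$. The Leibniz rule then gives
\[
\partial_A M^p=\sum_{j=0}^{p-1}M^j P_A M^{p-1-j}
 =p c_A M^{p-1}+\sum_{j=0}^{p-1}M^jSD_AS^\dagger M^{p-1-j}.
\]
We left-multiply by the constant matrix $S^\dagger$ and recognize $S^\dagger M^jS=G_j$ and $S^\dagger M^{p-1-j}=R_{p-1-j}$. This yields \eqref{eq:exact-block-row-flow}, while $R_0=S^\dagger$ is constant, which gives \eqref{eq:krylov-R0-flow}. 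Stacking the blocks for $p=0,\dots,q-1$ reproduces \eqref{eq:block-krylov-matrix-flow} with the generator \eqref{eq:block-lower-krylov-generator}. Every term in $\partial_AR_p$ involves only the blocks $R_{\ell}$ with $\ell\le p-1$, so $L_{A,q}$ is strictly block-lower triangular. The only care needed is the index bookkeeping $\ell=p-1-j$ and the placement of the $pc_A$ term on the first subdiagonal.

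For the invariance statements, we first note that the flows for different $A$ are each linear in $K_q$, with coefficients that are polynomial in $M$. Along any path $\bm a(x)$ we therefore get $\tfrac{d}{dx}K_q=L(x)K_q$ with $L=\sum_A\dot a_AL_{A,q}$. Let $\Phi(x)$ be the fundamental solution of this linear ODE. Then $K_q(x)=\Phi(x)K_q(0)$ with $\Phi$ invertible, and $\Phi$ is in fact unipotent block-lower triangular, because $L$ is strictly block-lower triangular at every point. Two consequences follow. First, $\rank K_q$ is constant on connected orbits, since the orbit is affine and hence path connected. Second, the row space of $K_q$ is preserved.

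For the exterior product I would argue infinitesimally. Differentiating $\Omega_q$ row by row, the derivative of the wedge of all rows equals $\operatorname{tr}(L_{A,q})\,\Omega_q$: each differentiated row is replaced by a combination of rows, and only its own coefficient survives in the wedge. The diagonal blocks of $L_{A,q}$ vanish, so the trace is zero, which proves \eqref{eq:krylov-exterior-invariance}. Equivalently, by Cauchy--Binet each maximal minor satisfies $\kappa_I^{(q)}(x)=\det\Phi(x)\,\kappa_I^{(q)}(0)$ with $\det\Phi=1$, so each minor is a first integral.

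It remains to read off the degrees and the square case. Block $R_p$ is homogeneous of degree $p$ in $M$, and each minor takes exactly $r$ rows from each block. Hence the degree is $r\sum_p p=rq(q-1)/2$, giving \eqref{eq:krylov-minor-degree}. When $m=qr=n$ there is only the single index set $I=\{1,\dots,n\}$, so the minors reduce to $\det K_q$.

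I expect the main subtlety to be rigor at the trace-projection step, namely that $\partial_AM$ really equals $c_A\bm1+SD_AS^\dagger$ on the fiber. By \eqref{eq:common-support-affine-family} this holds by definition. The scalar shift contributes only the $pc_AR_{p-1}$ term, which sits strictly below the diagonal and so does not affect the trace argument.
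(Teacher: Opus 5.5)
Your proposal is correct and follows essentially the same route as the paper: the Leibniz expansion of $\partial_A M^p$ left-multiplied by $S^\dagger$, the unipotent block-lower-triangular propagator $K_q(\bm a)=U_q(\bm a)K_q(\bm 0)$ for rank constancy, and the trace argument $\partial_A\Omega_q=(\Tr L_{A,q})\Omega_q=0$ for the exterior covariant. Your identity $\kappa_I^{(q)}(x)=\det\Phi(x)\,\kappa_I^{(q)}(0)$ is correct, but it follows from multiplicativity of the determinant on the square matrices $K_q[:,I]=\Phi\,K_q(0)[:,I]$, not from Cauchy--Binet.
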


\emph{Proof.}
For $p\geq1$, differentiation of a matrix power gives
\begin{equation}
 \partial_AM^p
 =\sum_{j=0}^{p-1}M^jP_AM^{p-1-j}.
 \label{eq:matrix-power-matter-derivative}
\end{equation}
Inserting Eq.~\eqref{eq:common-support-decomposition} and multiplying from the
left by $S^\dagger$ yields $p$ identical scalar terms and the remaining
common-support terms,
\begin{equation}
 S^\dagger M^j(SD_AS^\dagger)M^{p-1-j}
 =G_jD_AR_{p-1-j},
 \label{eq:common-support-row-reduction}
\end{equation}
which proves Eqs.~\eqref{eq:krylov-R0-flow} and
\eqref{eq:exact-block-row-flow}.  Relabeling
$\ell=p-1-j$ gives Eq.~\eqref{eq:block-lower-krylov-generator}.

Along any path in matter-parameter space, Eq.~\eqref{eq:block-krylov-matrix-flow}
has the form
\begin{equation}
 K_q(\bm a)=U_q(\bm a)K_q(\bm0),
 \qquad \det U_q=1,
 \label{eq:finite-unipotent-krylov-flow}
\end{equation}
where $U_q$ is block-lower unitriangular.  This proves constancy of the row
rank.  Alternatively, applying the derivative to
Eq.~\eqref{eq:krylov-exterior-covariant} gives
\begin{equation}
 \partial_A\Omega_q=(\Tr L_{A,q})\Omega_q=0,
 \label{eq:exterior-trace-proof}
\end{equation}
because a replacement by an earlier row repeats a factor in the wedge.  This
argument does not divide by a minor and therefore remains valid on the
rank-deficient locus.  Expansion in the standard exterior basis gives
Eq.~\eqref{eq:krylov-maximal-minors}, and the degree follows from taking all
$r$ rows at every power $p$.  \hfill$\square$

When $qr<n$ and $\Omega_q\ne0$, the projective coordinates
$[\kappa_I^{(q)}]$ define a matter-constant point of the Grassmannian
$\operatorname{Gr}(qr,n)$ and consequently obey its quadratic Pl\"ucker
relations.  These relations express covariance of one exterior object; they
should not be confused with the toric cycle syzygies analyzed in Sec.~VII.

The affine matter derivatives commute even when the $D_A$ do not.  In terms
of the lower-triangular matrices this compatibility reads
\begin{equation}
 \left(\partial_AL_{B,q}-\partial_BL_{A,q}
 +L_{B,q}L_{A,q}-L_{A,q}L_{B,q}\right)K_q=0.
 \label{eq:krylov-connection-compatibility}
\end{equation}
The coefficient in parentheses itself vanishes on the full-row-rank locus,
where $K_q$ has a right inverse.  No unconditional matrix zero-curvature
claim is needed on rank-deficient strata.

The matrix $K_q$ is the block observability matrix of the pair
$(M,S^\dagger)$, in the framework introduced by Kalman~\cite{Kalman1960}.
The deformation $M\mapsto M+B S^\dagger$ is an output injection, whose
preservation of observability is classical~\cite{KarcaniasLimantsevaHalikias2021};
block rank and grade are standard Krylov notions~\cite{GutknechtSchmelzer2009}.
Here the additional scalar shift and simultaneous matter directions lead to
the exact exterior covariants on a neutrino Hamiltonian fiber.

\subsection{Trace shifts, covariance, and scalar invariants}
\label{subsec:krylov-covariance-scalars}

The construction is independent of the Hamiltonian representative in the
trace quotient.  For any scalar $b$,
\begin{equation}
 R_p(M+b\bm1)
 =\sum_{\ell=0}^{p}\binom{p}{\ell}b^{p-\ell}R_\ell(M).
 \label{eq:krylov-scalar-shift-binomial}
\end{equation}
Thus $K_q(M+b\bm1)$ is obtained from $K_q(M)$ by a block-Pascal matrix that is
unit lower triangular.  Every exterior coordinate
\eqref{eq:krylov-maximal-minors} is unchanged.

The support frame is not canonical.  For $U\in U(r)$,
\begin{equation}
 S\longmapsto SU,
 \qquad D_A\longmapsto U^\dagger D_AU,
 \qquad
 K_q\longmapsto(\bm1_q\otimes U^\dagger)K_q,
 \qquad
 \Omega_q\longmapsto(\det U^\dagger)^q\Omega_q.
 \label{eq:krylov-support-frame-covariance}
\end{equation}
Hence $\Omega_q$ is a determinant-line covariant and, when $\Omega_q\ne0$,
its projective Pl\"ucker point is support-frame independent.  Under a
simultaneous unitary flavor
change
\begin{equation}
 M\longmapsto FMF^\dagger,
 \qquad S\longmapsto FS,
 \qquad K_q\longmapsto K_qF^\dagger.
 \label{eq:krylov-flavor-frame-covariance}
\end{equation}
For diagonal rephasing, a component $\kappa_I^{(q)}$ therefore carries a
definite character.  More explicitly, if the support is the coordinate subset
$A_0$ and its canonical frame is held fixed, then for
$T=\operatorname{diag}(e^{\ii\phi_1},\ldots,e^{\ii\phi_n})$,
\begin{equation}
 \kappa_I^{(q)}\longmapsto
 \exp\!\left[\ii\left(q\sum_{\alpha\in A_0}\phi_\alpha
               -\sum_{\beta\in I}\phi_\beta\right)\right]
 \kappa_I^{(q)}.
 \label{eq:krylov-coordinate-rephasing-weight}
\end{equation}
In either convention an individual minor is generally a relative invariant,
not a scalar.

On the Hermitian real form, Cauchy--Binet supplies the canonical Hermitian norm
of the chosen exterior covariant,
\begin{align}
 \mathcal N_q(M;S)
 &\equiv\|\Omega_q(M;S)\|^2
 =\det(K_qK_q^\dagger) \nonumber\\
 &=\sum_{|I|=qr}|\kappa_I^{(q)}|^2
 =\det\!\left[G_{p+\ell}\right]_{p,\ell=0}^{q-1}.
 \label{eq:krylov-cauchy-binet-norm}
\end{align}
It is matter independent, invariant under unitary support-frame changes, and
invariant under simultaneous unitary flavor-frame changes.  Individual norm
squares $|\kappa_I^{(q)}|^2$ are already flavor-rephasing scalars; mixed
products require weight balance.  Nontriviality is separate from invariance:
if $\rank K_q<qr$, then $\Omega_q$ and all the displayed minors vanish.  The
rank-deficiency locus is nevertheless a union of complete matter orbits.

For a support extracted from a fixed numerical spurion, these are polynomials
in the entries of $M$ with fixed-spurion coefficients.  They are not
automatically universal polynomials in the entries of the $P_A$.  This is the
fixed-fiber distinction of Sec.~II in a determinantal form.

\subsection{Rank one and the three-flavor determinant}
\label{subsec:rank-one-krylov}

Let $r=1$, $q=n$, and write the normalized support vector as $s$.  On a
simple-spectrum chart,
\begin{equation}
 M=V\Lambda V^\dagger,
 \qquad
 x_i=(s^\dagger V)_i.
 \label{eq:rank-one-krylov-spectral-data}
\end{equation}
Then
\begin{equation}
 K_nV=
 \begin{pmatrix}
  x_1&\cdots&x_n\\
  \lambda_1x_1&\cdots&\lambda_nx_n\\
  \vdots&&\vdots\\
  \lambda_1^{n-1}x_1&\cdots&\lambda_n^{n-1}x_n
 \end{pmatrix},
 \label{eq:rank-one-krylov-vandermonde-matrix}
\end{equation}
and hence
\begin{align}
 (\det K_n)(\det V)
 &=\left(\prod_{i=1}^n x_i\right)
   \prod_{i<j}(\lambda_j-\lambda_i),
 \label{eq:rank-one-krylov-determinant}\\
 \mathcal N_n
 &=\left(\prod_{i=1}^n|x_i|^2\right)
   \prod_{i<j}(\lambda_j-\lambda_i)^2.
 \label{eq:rank-one-krylov-norm}
\end{align}
The first line is frame covariant; the second is a physical scalar.  The
ordinary Vandermonde factorization is the exceptional rank-one case of the
general Pl\"ucker expansion developed in Sec.~V.

For standard three-flavor matter take $s=|e\rangle$.  In the flavor order
$(e,\mu,\tau)$,
\begin{align}
 \det K_3
 &=M_{e\mu}(M^2)_{e\tau}-M_{e\tau}(M^2)_{e\mu}
 \nonumber\\
 &=M_{e\mu}^2M_{\mu\tau}-M_{e\tau}^2M_{\tau\mu}
 -d_{\mu\tau}M_{e\mu}M_{e\tau},
 \label{eq:three-flavor-krylov-flavor-form}\\
 |\det K_3|^2
 &=\prod_{i=1}^{3}|V_{ei}|^2
   \prod_{i<j}(\lambda_j-\lambda_i)^2.
 \label{eq:three-flavor-krylov-kty-norm}
\end{align}
This is the electron-row Vandermonde construction associated with the KTY
relations and their matter-invariant denominator
\cite{KimuraTakamuraYokomakura2002,WangZhou2026}.  The determinant itself is
phase covariant; its norm is the scalar invariant.

\subsection{The rank-two $3+1$ determinant}
\label{subsec:rank-two-31-krylov}

For the trace-free standard direction
$p_\eta=\overline E_e+\eta\overline E_s$, choose
\begin{equation}
 c_\eta=-\frac{1+\eta}{4},
 \qquad
 S=(|e\rangle,|s\rangle),
 \qquad
 D_\eta=\operatorname{diag}(1,\eta).
 \label{eq:31-krylov-support-data}
\end{equation}
For $\eta\ne0$ this has effective rank two.  With support-row order $(e,s)$,
block order $(R_0,R_1)$, and flavor-column order $(e,\mu,\tau,s)$,
\begin{equation}
 K_2=
 \begin{pmatrix}
  1&0&0&0\\
  0&0&0&1\\
  M_{ee}&M_{e\mu}&M_{e\tau}&M_{es}\\
  M_{se}&M_{s\mu}&M_{s\tau}&M_{ss}
 \end{pmatrix},
 \label{eq:31-square-krylov-matrix}
\end{equation}
so the ordering fixes the sign to be
\begin{equation}
 \mathcal D_{es}\equiv\det K_2
 =M_{e\mu}M_{s\tau}-M_{e\tau}M_{s\mu}.
 \label{eq:31-rank-two-krylov-determinant}
\end{equation}
On the Hermitian form define
\begin{equation}
 C_{e\mu s\tau}
 =M_{e\mu}M_{\mu s}M_{s\tau}M_{\tau e}.
 \label{eq:31-krylov-quadrangle}
\end{equation}
Then
\begin{equation}
 |\mathcal D_{es}|^2
 =Q_{e\mu}Q_{s\tau}+Q_{e\tau}Q_{s\mu}
 -2\operatorname{Re}C_{e\mu s\tau}.
 \label{eq:31-krylov-cycle-norm}
\end{equation}
This formula embeds the Krylov norm directly in the cycle algebra of
Sec.~III.  It is a distinguished polynomial combination, not a new
algebraically independent observable and not a multiplicative monomial of the
restricted spectral type excluded in Ref.~\cite{WangZhou2026}.

The same determinant is killed separately by the electron and sterile matter
derivatives.  Indeed, for independent $\overline E_e$ and $\overline E_s$ one
uses the same $S$ with singular support matrices
\begin{equation}
 D_e=\operatorname{diag}(1,0),
 \qquad
 D_s=\operatorname{diag}(0,1),
 \label{eq:independent-es-support-matrices}
\end{equation}
demonstrating that individual $D_A$ need not be invertible.  The proof of
Theorem~\ref{thm:common-support-block-krylov} separately shows that no
commutativity assumption is needed.

At $\eta=0$, the minimal support drops to rank one and additional rank-one
Krylov representatives become available; Eq.~\eqref{eq:31-krylov-cycle-norm}
remains invariant because a nonminimal common support is still allowed.  At
$\eta=1$, both the electron--sterile and the complementary $\mu$--$\tau$
supports minimize the effective rank.  Their square determinants are complex
conjugates up to the ordering sign and therefore have the same norm.

\subsection{Rank-deficient and noncommon-support boundaries}
\label{subsec:krylov-boundaries}

The theorem is polynomial and remains regular at spectral degeneracies.  For
a square block alternant, an eigenspace of multiplicity $d\leq r$ can have
linearly independent projected columns; pairwise degeneracy therefore does not
generally force a rank-$r>1$ determinant to vanish.  If their projected rank is
smaller than $d$, the square determinant vanishes; in particular, this is
forced when $d>r$.  For a nonsquare exterior covariant, this statement applies
in the mass-eigenstate column basis: block-alternant minors selecting more than
$r$ columns from that eigenspace vanish, whereas the flavor-coordinate minors
$\kappa_I^{(q)}$ are Cauchy--Binet sums and need not vanish individually.  An
$M$-invariant common support also gives
$\rank K_q\leq r$, so $\Omega_q=0$ for every $q>1$.  If $qr>n$, the exterior
power in Eq.~\eqref{eq:krylov-exterior-covariant} vanishes identically, though
the row-rank statement remains meaningful.  None of these cases is a failure
of matter invariance.

Common support must remain fixed along the flow.  If a matter direction has
an additional leakage term $E_A$ outside the factorization
\eqref{eq:common-support-decomposition}, Eq.~\eqref{eq:exact-block-row-flow}
acquires
\begin{equation}
 \sum_{j=0}^{p-1}S^\dagger M^jE_AM^{p-1-j},
 \label{eq:krylov-off-support-leakage}
\end{equation}
which need not lie in the span of earlier Krylov rows.  This is the precise
entry point for the controlled NSI breaking analysis of Sec.~VIII.  Likewise,
restricting a multiparameter family to a special path can lower
$r_{\rm com}$ and create path-specific covariants that are not common to the
full family.

Finally, on a spectral chart let
\begin{equation}
 M=V\Lambda V^\dagger,
 \qquad X=S^\dagger V.
 \label{eq:krylov-spectral-transition-data}
\end{equation}
Then
\begin{equation}
 K_q=
 \underbrace{\begin{pmatrix}
 X\\X\Lambda\\ \vdots\\X\Lambda^{q-1}
 \end{pmatrix}}_{\displaystyle\mathsf A_q(X,\Lambda)}V^\dagger.
 \label{eq:block-alternant-factorization}
\end{equation}
When $qr=n$, $\det K_q=(\det\mathsf A_q)(\det V^\dagger)$.  The next section
derives the general Pl\"ucker expansion and exact higher-rank no-separation
criteria in specified coefficient rings, followed by a complete factorization
classification for the physically central $4\times4$, rank-two case.

\section{Pl\"ucker expansion and factorization classification}
\label{sec:plucker-factorization}

Section~\ref{sec:krylov-exterior} established matter invariance in the flavor
frame, without introducing eigenvalues or eigenvectors.  We now analyze the
same exterior covariants on a simple-spectrum chart.  This change of
coordinates is used only to expose their factorization structure: every final
identity remains the pullback of a regular flavor-basis polynomial.  We first
give the Pl\"ucker expansion for all $qr\leq n$, then distinguish three
logically different statements: the exact rank-one Vandermonde factorization,
the absence of any generic higher-rank analogue of that particular product,
and the complete reducibility classification of the $4\times4$, rank-two
quadratic.

\subsection{Mass-basis block alternants}
\label{subsec:mass-basis-block-alternants}

Let
\begin{equation}
 M=V\Lambda V^\dagger,
 \qquad
 \Lambda=\operatorname{diag}(\lambda_1,\ldots,\lambda_n),
 \qquad
 X=S^\dagger V\in\C^{r\times n}.
 \label{eq:section5-spectral-data}
\end{equation}
For $m=qr\leq n$ define
\begin{equation}
 \mathsf A_q(X,\Lambda)
 =\begin{pmatrix}
   X\\X\Lambda\\ \vdots\\X\Lambda^{q-1}
  \end{pmatrix}
 \in\C^{m\times n}.
 \label{eq:section5-block-alternant}
\end{equation}
Equation~\eqref{eq:block-alternant-factorization} reads simply
$K_q=\mathsf A_qV^\dagger$.

For every $r$-element subset $J\subset[n]\equiv\{1,\ldots,n\}$, with its
elements in increasing order, write
\begin{equation}
 p_J(X)=\det X[:,J]
 \label{eq:plucker-coordinate-X}
\end{equation}
for the corresponding Pl\"ucker coordinate.  These homogeneous coordinates
satisfy the standard quadratic relations defining
$\operatorname{Gr}(r,n)$~\cite{Harris1992}.  For a subset $J$ and integer
$a\geq0$, also define
\begin{equation}
 \lambda_J^{[a]}=\prod_{j\in J}\lambda_j^a,
 \qquad \lambda_J^{[0]}=1.
 \label{eq:spectral-subset-monomial}
\end{equation}

Let $\mathfrak P_{r,q}(J)$ be the set of ordered partitions
$\bm J=(J_0,\ldots,J_{q-1})$ of a set $J$ into disjoint $r$-element subsets.
If every $J_a$ is internally increasing, let $\epsilon(\bm J)$ be the sign of
the shuffle that changes the increasing list of $J$ into the concatenated list
$(J_0,J_1,\ldots,J_{q-1})$.

\begin{theorem}[Krylov--Pl\"ucker expansion]
\label{thm:general-krylov-plucker-expansion}
For every $qr=m\leq n$ and every mass-index subset $J\subset[n]$ of size $m$,
the maximal minor of the block alternant is
\begin{equation}
 \Phi_J^{(q)}(X,\bm\lambda)
 \equiv\det\mathsf A_q[:,J]
 =\sum_{\bm J\in\mathfrak P_{r,q}(J)}
   \epsilon(\bm J)
   \prod_{a=0}^{q-1}
       p_{J_a}(X)\lambda_{J_a}^{[a]}.
 \label{eq:general-plucker-expansion}
\end{equation}
The flavor-frame Krylov coordinates of
Eq.~\eqref{eq:krylov-maximal-minors} are consequently
\begin{equation}
 \kappa_I^{(q)}
 =\sum_{\substack{J\subset[n]\\|J|=m}}
   \Phi_J^{(q)}(X,\bm\lambda)
   \det V^\dagger[J,I],
 \qquad |I|=m.
 \label{eq:flavor-minor-plucker-cauchy-binet}
\end{equation}
If $m=n$, there is one subset $J=[n]$ and
\begin{equation}
 \det K_q
 =F_{r,q}(p,\bm\lambda)\det V^\dagger,
 \qquad
 F_{r,q}\equiv\Phi_{[n]}^{(q)}=\det\mathsf A_q.
 \label{eq:square-plucker-alternant}
\end{equation}
\end{theorem}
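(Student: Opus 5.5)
The plan is to read $\mathsf A_q[:,J]$ as an $m\times m$ matrix whose rows fall into $q$ horizontal blocks of $r$ rows each. Block $a$ is the submatrix $X\Lambda^a$ restricted to the columns in $J$. Its $(\rho,j)$ entry is $X_{\rho j}\lambda_j^a$, so it equals $X[:,J]$ with column $j$ multiplied by $\lambda_j^a$. The expansion \eqref{eq:general-plucker-expansion} is then the generalized Laplace expansion of a determinant along its row blocks.

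Concretely, I would first apply Laplace expansion along the first $r$ rows (block $0$). This gives a sum over $r$-subsets $J_0\subset J$ of the minor on block $0$ and columns $J_0$, times the complementary minor, times a sign. I would then iterate the expansion on the remaining blocks. Equivalently, I would quote the multi-block Laplace formula directly:
\begin{equation*}
 \det\mathsf A_q[:,J]=\sum_{\bm J}\epsilon(\bm J)\prod_a\det\bigl(X\Lambda^a\bigr)[:,J_a].
\end{equation*}
Here the sum runs over ordered partitions into $r$-subsets, each internally increasing. The sign $\epsilon(\bm J)$ is the sign of the column permutation that brings $J_0,\ldots,J_{q-1}$ into the diagonal block positions. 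This permutation is exactly the shuffle from increasing $J$ to the concatenation $(J_0,\ldots,J_{q-1})$.

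One can also derive the formula from the Leibniz formula. Group the permutations $\sigma$ of $J$ according to which column set each row block is sent to. Within a group, the permutation factors as a product of permutations inside each block, composed with the fixed shuffle. Summing over the internal permutations produces the product of block determinants, and the sign of the shuffle factors out.

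Each block minor is then evaluated by multilinearity in columns. Pulling the factor $\lambda_j^a$ out of column $j$ for every $j\in J_a$ gives $\det(X\Lambda^a)[:,J_a]=\lambda_{J_a}^{[a]}p_{J_a}(X)$, and this yields \eqref{eq:general-plucker-expansion}.

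For \eqref{eq:flavor-minor-plucker-cauchy-binet}, I would use $K_q=\mathsf A_qV^\dagger$, which is \eqref{eq:block-alternant-factorization}. Taking the columns $I$ gives $K_q[:,I]=\mathsf A_q\,V^\dagger[:,I]$. This is a product of an $m\times n$ matrix with an $n\times m$ matrix, and the Cauchy--Binet formula gives $\det K_q[:,I]=\sum_{|J|=m}\det\mathsf A_q[:,J]\det V^\dagger[J,I]$. When $m=n$, only $J=I=[n]$ occurs, and the formula reduces to the multiplicativity $\det K_q=\det\mathsf A_q\det V^\dagger$, which is \eqref{eq:square-plucker-alternant}. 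The dependence of $F_{r,q}$ only on $p$ and $\bm\lambda$ is read off directly from the expansion.

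The only real obstacle is bookkeeping the sign: one must check that the iterated Laplace signs compose to the single shuffle sign $\epsilon(\bm J)$. I would settle this with the Leibniz grouping argument rather than by tracking the iterated $(-1)^{\sum\text{rows}+\sum\text{cols}}$ factors.
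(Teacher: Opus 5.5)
Your proposal is correct and follows the paper's proof essentially step for step: a generalized Laplace expansion along the $q$ row blocks with the shuffle sign $\epsilon(\bm J)$, column multilinearity to extract $\lambda_{J_a}^{[a]}p_{J_a}(X)$, and Cauchy--Binet applied to $K_q=\mathsf A_qV^\dagger$ for the flavor minors and the square case. Your Leibniz-grouping verification of the shuffle sign is a useful elaboration of a point the paper only asserts.
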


\emph{Proof.}
Apply generalized Laplace expansion to the $q$ consecutive row blocks of
$\mathsf A_q[:,J]$.  Assigning the columns $J_a$ to row block $a$ produces
the shuffle sign $\epsilon(\bm J)$ and the block determinant
\begin{equation}
 \det\!\left(X[:,J_a]\operatorname{diag}
              (\lambda_j^a)_{j\in J_a}\right)
 =p_{J_a}(X)\lambda_{J_a}^{[a]}.
 \label{eq:block-alternant-laplace-factor}
\end{equation}
Multiplication over the row blocks and summation over all ordered partitions
gives Eq.~\eqref{eq:general-plucker-expansion}.  Applying Cauchy--Binet to the
minor of $K_q=\mathsf A_qV^\dagger$ gives
Eq.~\eqref{eq:flavor-minor-plucker-cauchy-binet}; the square case follows
immediately.  \hfill$\square$

For $m<n$, Eq.~\eqref{eq:flavor-minor-plucker-cauchy-binet} is essential:
the mass-basis minors $\Phi_J^{(q)}$ and the flavor-basis minors
$\kappa_I^{(q)}$ are related by a compound-unitary transformation, not by a
one-to-one relabeling.  This is why a degeneracy may force selected
$\Phi_J^{(q)}$ to vanish without forcing every flavor-coordinate minor to
vanish.

\subsection{Gauge, permutation, and affine spectral covariance}
\label{subsec:plucker-spectral-covariance}

Every term in $F_{r,q}$ contains each mass-eigenstate column exactly once.
Consequently an eigenvector phase change
\begin{equation}
 V_{*i}\longmapsto e^{\ii\theta_i}V_{*i}
 \label{eq:eigenvector-column-phase}
\end{equation}
multiplies $F_{r,q}$ by $e^{\ii\sum_i\theta_i}$ and multiplies
$\det V^\dagger$ by the inverse phase.  Their product in
Eq.~\eqref{eq:square-plucker-alternant} is gauge independent.  A simultaneous
permutation of the eigenvalues and eigenvector columns changes
$F_{r,q}$ by the sign of the column permutation and changes
$\det V^\dagger$ by the same inverse sign.  Thus the flavor determinant is
mass-label invariant.

The block-Pascal transformation of Eq.~\eqref{eq:krylov-scalar-shift-binomial}
and a row rescaling give
\begin{align}
 F_{r,q}(p,\bm\lambda+b\bm1)
 &=F_{r,q}(p,\bm\lambda),
 \label{eq:plucker-spectral-translation}\\
F_{r,q}(p,t\bm\lambda)
 &=t^{d_{r,q}}F_{r,q}(p,\bm\lambda),
 \qquad
 d_{r,q}=\frac{rq(q-1)}2.
 \label{eq:plucker-spectral-homogeneity}
\end{align}
For a support-frame transformation $G\in\operatorname{GL}(r,\C)$,
\begin{equation}
 F_{r,q}(GX,\bm\lambda)=(\det G)^qF_{r,q}(X,\bm\lambda).
 \label{eq:plucker-support-frame-covariance}
\end{equation}
Thus the alternant is homogeneous of Pl\"ucker degree $q$.  These properties
make clear that scalar eigenvalue shifts, support frames, eigenvector phases,
and mass labels are coordinate gauges rather than additional factorization
assumptions.

\subsection{Rank-one Vandermonde factorization}
\label{subsec:rank-one-vandermonde-factorization}

For $r=1$ and $q=n$, the Pl\"ucker coordinates are simply the entries
$p_{\{i\}}=x_i$ of the row $X$.  Ordered block partitions are permutations of
$[n]$, so Eq.~\eqref{eq:general-plucker-expansion} becomes the ordinary
alternant determinant.

\begin{corollary}[Rank-one factorization]
\label{cor:rank-one-vandermonde-factorization}
For $r=1$,
\begin{equation}
 F_{1,n}(x,\bm\lambda)
 =\left(\prod_{i=1}^n x_i\right)
  \Delta_n(\bm\lambda),
 \qquad
 \Delta_n(\bm\lambda)=\prod_{i<j}(\lambda_j-\lambda_i).
 \label{eq:rank-one-plucker-vandermonde}
\end{equation}
\end{corollary}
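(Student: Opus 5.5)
We obtain the corollary by specializing Theorem~\ref{thm:general-krylov-plucker-expansion} to $r=1$, $q=n$, and then evaluating the resulting alternant as a Vandermonde determinant.

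First we identify the data of the expansion when $r=1$. The Pl\"ucker coordinate of a one-element subset is $p_{\{i\}}(X)=x_i$. An ordered partition $\bm J=(J_0,\ldots,J_{n-1})$ of $[n]$ into singletons is the same thing as a permutation $\sigma$ with $J_a=\{\sigma(a+1)\}$. The shuffle sign $\epsilon(\bm J)$ is then $\operatorname{sgn}\sigma$, since the concatenated list is $(\sigma(1),\ldots,\sigma(n))$. Substituting into Eq.~\eqref{eq:general-plucker-expansion} gives
\begin{equation*}
F_{1,n}=\sum_{\sigma\in S_n}\operatorname{sgn}\sigma\prod_{a=0}^{n-1}x_{\sigma(a+1)}\lambda_{\sigma(a+1)}^{a}.
\end{equation*}
This is the Leibniz expansion of $\det[x_j\lambda_j^{a}]_{a,j}$, as the paper anticipates. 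Alternatively, one can read this determinant directly from Eq.~\eqref{eq:section5-block-alternant}, because $\mathsf A_n(x,\Lambda)$ has entries $x_j\lambda_j^{a}$.

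The factorization itself has two steps.
\begin{itemize}
\item Because each column $j$ of $\mathsf A_n$ carries the common factor $x_j$, multilinearity in the columns gives $\det\mathsf A_n=\bigl(\prod_j x_j\bigr)\det[\lambda_j^{a}]$. Equivalently, every Leibniz term contains each $x_i$ exactly once, so the product can be pulled out termwise.
\item The remaining determinant is the classical Vandermonde, $\det[\lambda_j^{a}]_{a=0..n-1,\,j=1..n}=\prod_{i<j}(\lambda_j-\lambda_i)=\Delta_n(\bm\lambda)$.
\end{itemize}

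For completeness, the Vandermonde evaluation can be checked as follows. Viewed as a polynomial in $\lambda_j$, the determinant vanishes when $\lambda_j=\lambda_i$, because two columns coincide. Hence it is divisible by each $\lambda_j-\lambda_i$. Its total degree is $n(n-1)/2=\deg\Delta_n$, consistent with Eq.~\eqref{eq:plucker-spectral-homogeneity} at $r=1$, $q=n$. Comparing the coefficient of the diagonal monomial $\prod_j\lambda_j^{j-1}$, which is $1$ on both sides, fixes the constant.

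There is no real obstacle here. The only points needing care are the conventions: the sign convention of $\epsilon$ must match $\operatorname{sgn}\sigma$, and the orientation $\lambda_j-\lambda_i$ with $i<j$ must match rows indexed by increasing powers. Both are fixed by checking the identity permutation, which contributes $+\prod_j x_j\lambda_j^{j-1}$.
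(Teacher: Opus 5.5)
Your proposal is correct and takes the paper's route. You specialize the Krylov--Pl\"ucker expansion at $r=1$, $q=n$, identify the ordered singleton partitions with permutations (shuffle sign $=\operatorname{sgn}\sigma$), recognize the ordinary alternant $\det[x_j\lambda_j^{a}]$, and factor it as $\bigl(\prod_j x_j\bigr)\Delta_n(\bm\lambda)$, which is the paper's argument; your divisibility-plus-degree check of the Vandermonde and the identity-permutation sign check make explicit what the paper states briefly as the proportional-column mechanism.
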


Here equality of any two eigenvalues makes two columns proportional, which is
exactly the divisibility mechanism producing the full Vandermonde.  This
mechanism is absent at generic higher rank: if $r>1$ and
$\lambda_i=\lambda_j$, the corresponding block columns are
\begin{equation}
 \begin{pmatrix}x_i\\\lambda_i x_i\\ \vdots\\
 \lambda_i^{q-1}x_i\end{pmatrix},
 \qquad
 \begin{pmatrix}x_j\\\lambda_i x_j\\ \vdots\\
 \lambda_i^{q-1}x_j\end{pmatrix},
 \label{eq:equal-eigenvalue-block-columns}
\end{equation}
and remain independent whenever the projected vectors $x_i$ and $x_j$ do.

\subsection{Universal divisors and higher-rank no-separation}
\label{subsec:generic-higher-rank-nonfactorization}

Factorization statements require a precise coefficient algebra.  Let
\begin{equation}
 \mathcal R_{r,n}^{\rm Pl}
 =\frac{\C[p_J\mid J\subset[n],\ |J|=r]}
        {I_{r,n}^{\rm Pl}}
 \label{eq:plucker-coordinate-ring}
\end{equation}
be the coordinate ring of the affine Pl\"ucker cone, and let
\begin{equation}
 \mathcal R_{r,n}^{\circ}
 =\mathcal R_{r,n}^{\rm Pl}
  \left[\left(\prod_{|J|=r}p_J\right)^{-1}\right],
 \qquad
 \mathcal K_{r,n}=\operatorname{Frac}(\mathcal R_{r,n}^{\rm Pl}).
 \label{eq:big-plucker-torus-ring}
\end{equation}
The first ring retains every Pl\"ucker boundary; the second is the
uniform-matroid open set.  Because $F_{r,q}$ has Pl\"ucker degree $q$, it is a
relative covariant on the affine cone, or equivalently a section of
$\mathcal O(q)$ on the projective Grassmannian.  Thus treating it as a
projective rational function requires a chart or a trivialization.  The rings
in Eqs.~\eqref{eq:plucker-coordinate-ring} and
\eqref{eq:big-plucker-torus-ring} make that choice explicit.

Set $\delta_i=\lambda_i-\lambda_n$ for $i<n$.  We call an identity
\begin{equation}
 F_{r,q}=G(p)H(\bm\delta),
 \qquad
 G\in\mathcal K_{r,n}^{\times},
 \quad
 H\in\C[\delta_1,\ldots,\delta_{n-1}]\setminus\C,
 \label{eq:separated-plucker-spectral-factorization}
\end{equation}
a \emph{separated Pl\"ucker--spectral factorization}.  It allows an arbitrary
Pl\"ucker rational factor, and is therefore stronger than testing only a
single Laurent monomial.

\begin{theorem}[Higher-rank factorization obstructions]
\label{thm:no-generic-higher-rank-vandermonde}
Let $n=qr$, with $q\geq2$ and $r\geq2$.
\begin{enumerate}
 \item The square alternant $F_{r,q}$ has no separated
 Pl\"ucker--spectral factorization of the form
 \eqref{eq:separated-plucker-spectral-factorization}.
 \item For every $i<j$,
 \begin{equation}
  \lambda_j-\lambda_i\nmid F_{r,q}
  \quad\text{in}\quad
  \mathcal R_{r,n}^{\circ}[\bm\lambda].
  \label{eq:no-pairwise-difference-divisor}
 \end{equation}
 \item No nonconstant $h\in\C[\lambda_1,\ldots,\lambda_n]$ divides
 $F_{r,q}$ in
 $\mathcal R_{r,n}^{\rm Pl}[\bm\lambda]$.
\end{enumerate}
These claims concern universal and separated factors; they do not assert
irreducibility of $F_{r,q}$ in the full mixed coefficient ring.
\end{theorem}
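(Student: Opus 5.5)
The plan is to evaluate the Pl\"ucker variables at a family of explicit block-identity points, where the square alternant $F_{r,q}$ collapses into a product of ordinary rank-one Vandermonde determinants. All three statements then follow from unique factorization in $\C[\bm\lambda]$ and a Zariski-density argument.

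\emph{Step 1: the specialization.} Index the mass labels $[n]$, $n=qr$, by a partition $\pi=(B_1,\ldots,B_r)$ of $[n]$ into $r$ blocks of size $q$. Let $X_\pi\in\C^{r\times n}$ be the matrix whose column $i$ is the unit vector $e_\rho$ whenever $i\in B_\rho$. Then $X_\pi$ has rank $r$, so its Pl\"ucker vector is a nonzero point of the affine cone.

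In $\mathsf A_q(X_\pi,\Lambda)$, the row $(a,\rho)$ is supported only on the columns in $B_\rho$. After permuting rows and columns, the matrix is block diagonal with blocks equal to the $q\times q$ Vandermonde matrices $(\lambda_i^a)_{a<q,\,i\in B_\rho}$. By Corollary~\ref{cor:rank-one-vandermonde-factorization}, or directly,
\begin{equation}
 F_{r,q}(X_\pi,\bm\lambda)=\pm\prod_{\rho=1}^r\ \prod_{\substack{i<j\\ i,j\in B_\rho}}(\lambda_j-\lambda_i).
\end{equation}
This is a nonzero product of pairwise distinct linear forms. Because $r\ge2$, every pair $\{i,j\}$ is separated by some partition $\pi$, so that $\lambda_j-\lambda_i$ does not divide $F_{r,q}(X_\pi,\cdot)$. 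Because $q\ge2$, every such product is nonconstant, and two partitions with different block sets give non-proportional products.

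\emph{Step 2: part 3.} Suppose $h\mid F_{r,q}$ in $\mathcal R^{\rm Pl}_{r,n}[\bm\lambda]$. Evaluation at the Pl\"ucker point of $X_\pi$ is a ring homomorphism to $\C[\bm\lambda]$, so $h\mid F_{r,q}(X_\pi,\cdot)$ for every $\pi$. Since $\C[\bm\lambda]$ is a UFD, every irreducible factor of $h$ would be some $\lambda_j-\lambda_i$ common to all these products. The separating partition from Step 1 excludes each such factor, so $h$ is constant.

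\emph{Step 3: part 2.} The points $X_\pi$ lie outside the uniform-matroid open set, so I would argue by density instead. Choose $\pi$ separating $i$ and $j$. Then the restriction $F_{r,q}(X_\pi,\bm\lambda)|_{\lambda_i=\lambda_j}$ is a nonzero polynomial. The set of $X$ for which $F_{r,q}(X,\cdot)|_{\lambda_i=\lambda_j}\neq0$ is Zariski open and nonempty. The cone is irreducible, so this set meets the dense open subset where all $p_J\ne0$. On the other hand, divisibility by $\lambda_j-\lambda_i$ in $\mathcal R^\circ_{r,n}[\bm\lambda]$ would force this restriction to vanish identically at every point of that open subset, a contradiction.

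\emph{Step 4: part 1, the expected main obstacle.} The difficulty is that $G$ is only a rational function, possibly undefined at the points $X_\pi$. Suppose $F_{r,q}=G(p)H(\bm\delta)$. On the dense open set where $G$ is regular and nonzero, $F_{r,q}(X,\cdot)$ lies in the line $\C H$ (note that $F_{r,q}$ is translation invariant by Eq.~\eqref{eq:plucker-spectral-translation}, so it too is a polynomial in $\bm\delta$).

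The condition ``$F_{r,q}(X,\cdot)\in\C H$'' is Zariski closed in $X$, because it is cut out by the $2\times2$ minors of the coefficient vectors of $F_{r,q}(X,\cdot)$ and $H$. Since it holds on a dense set, it therefore holds on the whole cone, and in particular at every $X_\pi$. Both $F_{r,q}(X_\pi,\cdot)$ and $F_{r,q}(X_{\pi'},\cdot)$ are then nonzero multiples of $H$, hence proportional to each other. Choosing $\pi,\pi'$ with different block sets contradicts Step 1.

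The care needed here is in checking that the closedness and density argument respects the affine-cone and $\mathcal O(q)$ grading. It does, because proportionality is insensitive to rescaling by the factor $G$.
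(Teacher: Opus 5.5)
Your proof is correct, and Part~1 follows a genuinely different route from the paper. Part~3 is exactly the paper's parallel-class/gcd argument. Part~2 uses the paper's density reasoning (the nonvanishing locus meets the uniform locus) but a different witness. The paper takes blocks $B_1,\dots,B_q$ of size $r$ with $i,j\in B_1$, puts a common eigenvalue $\mu_a$ on each block, and takes invertible $X[:,B_a]$, obtaining a Kronecker--Vandermonde value. You use the coloring point with $i$ and $j$ in different color classes, which lets you recycle Step~1. For Part~1, the paper stays at the generic point: distinct ordered partitions carry distinct spectral monomials, so a separated factorization would force all ratios of the $\bm\lambda$-coefficients of $F_{r,q}$ in $\mathcal K_{r,n}$ to be constants. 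Swapping $i\in J_a$ with $j\in J_b$ instead gives a nonconstant Pl\"ucker cross-ratio. You clear the denominator of $G$ and use Zariski-closedness of the condition $F_{r,q}(X,\cdot)\in\C H$ on the irreducible cone to carry it to the boundary points $X_\pi$. Two non-proportional parallel-class products then give the contradiction, and you handle the undefinedness of $G$ at $X_\pi$ correctly. Your route needs nothing from the ordered-partition expansion beyond one block-diagonal evaluation, and a single family of witnesses drives all three parts. The paper's route avoids the closure step and any passage to matroid boundary strata. It also names the obstruction explicitly as independent Pl\"ucker channels, which is the structure the paper reuses in its Segre-tangent factorization test and in the $4\times4$, rank-two classification.
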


\emph{Proof.}
The spectral monomial in a term of
Eq.~\eqref{eq:general-plucker-expansion} records, for every label $i$, the
unique block exponent $a\in\{0,\ldots,q-1\}$ to which $i$ belongs.  Hence
distinct ordered partitions give distinct spectral monomials.  Choose two
blocks $J_a,J_b$ and exchange $i\in J_a$ with $j\in J_b$.  The ratio of the
corresponding nonzero coefficients on the uniform locus is, up to sign,
\begin{equation}
 \frac{p_{J_a-i+j}\,p_{J_b-j+i}}
      {p_{J_a}\,p_{J_b}}.
 \label{eq:plucker-cross-ratio-coefficients}
\end{equation}
This is a nonconstant Pl\"ucker cross-ratio.  For example, on the standard
big cell with the columns of $J_a$ equal to $\bm1_r$ and those of $J_b$ equal
to a variable matrix $Y$, it compares $\det Y$ with a matrix entry times its
complementary cofactor and varies with $Y$.  If
Eq.~\eqref{eq:separated-plucker-spectral-factorization} held, all ratios of
nonzero spectral coefficients would instead be ratios of complex constants.
This proves the first claim.

For the second claim, fix $i<j$ and partition $[n]$ into $q$ sets
$B_1,\ldots,B_q$ of size $r$, with $i,j\in B_1$.  Assign a distinct value
$\mu_a$ to all eigenvalues indexed by $B_a$ and choose each $X[:,B_a]$ to be
invertible.  Up to row and column permutations, the determinant is
\begin{equation}
 F_{r,q}
 =\pm\left(\prod_{a=1}^{q}p_{B_a}\right)
   \prod_{1\leq a<b\leq q}(\mu_b-\mu_a)^r\ne0.
 \label{eq:grouped-eigenvalue-kron-vandermonde}
\end{equation}
Thus it is nonzero at some $X$ on the collision hyperplane
$\lambda_i=\lambda_j$.  The determinant-nonvanishing locus and the locus on
which every Pl\"ucker coordinate is nonzero are nonempty Zariski-open subsets
of the irreducible matrix space, so they intersect.  The collision therefore
does not annihilate $F_{r,q}$ on the uniform Pl\"ucker locus, proving
Eq.~\eqref{eq:no-pairwise-difference-divisor}.

It remains to exclude a more complicated universal pure-spectral divisor in
the global affine-cone ring.  Partition $[n]$ into $r$ color classes
$C_1,\ldots,C_r$ of size $q$, and specialize the columns to
$x_i=c_i e_a$ for $i\in C_a$, with every $c_i\ne0$.  A row permutation makes
the block alternant block diagonal and gives
\begin{equation}
 F_{r,q}
 =\pm\left(\prod_{i=1}^{n}c_i\right)
   \prod_{a=1}^{r}\Delta(\bm\lambda_{C_a}),
 \qquad
 \Delta(\bm\lambda_C)
 =\prod_{\substack{i<j\\i,j\in C}}(\lambda_j-\lambda_i).
 \label{eq:parallel-class-factorization}
\end{equation}
If $h(\bm\lambda)$ divided $F_{r,q}$ in
$\mathcal R_{r,n}^{\rm Pl}[\bm\lambda]$, it would divide the right-hand side
of Eq.~\eqref{eq:parallel-class-factorization} for every such coloring.
These products have greatest common divisor one: every irreducible gap
$\lambda_j-\lambda_i$ is absent from a coloring that puts $i$ and $j$ in
different classes.  Therefore $h$ is constant.  This proves the third claim.
\hfill$\square$

Equation~\eqref{eq:parallel-class-factorization} is also a useful exceptional
factorization formula in its own right.  It lives on a Pl\"ucker boundary in
which each color class is a parallel class of the column matroid.  Its product
of $r$ lower-dimensional Vandermondes is not a generic identity and cannot be
used after inverting all Pl\"ucker coordinates.

There is an exact combinatorial criterion for identically vanishing
specializations.  Let $\mathsf M_X$ be the rank-$r$ column matroid represented
by $X$.

\begin{proposition}[Matroid basis-packing criterion]
\label{prop:matroid-basis-packing-zero}
For fixed $X\in\C^{r\times n}$ with $n=qr$,
\begin{equation}
 \begin{aligned}
 F_{r,q}(X,\bm\lambda)\equiv0
 \quad\Longleftrightarrow\quad&
 [n]\text{ cannot be partitioned}\\
 &\text{into $q$ bases of }\mathsf M_X.
 \end{aligned}
 \label{eq:matroid-basis-packing-zero}
\end{equation}
Equivalently, by the matroid partition theorem,
\begin{equation}
 |A|\leq q\,\operatorname{rank}_{\mathsf M_X}(A)
 \qquad\text{for every }A\subseteq[n].
 \label{eq:edmonds-basis-packing-inequality}
\end{equation}
is necessary and sufficient for $F_{r,q}$ not to vanish identically.
\end{proposition}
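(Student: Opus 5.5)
The argument treats $F_{r,q}(X,\bm\lambda)$ as a polynomial in the indeterminates $\lambda_1,\ldots,\lambda_n$ with $X$ held fixed, and reads off its coefficients from the Krylov--Pl\"ucker expansion of Theorem~\ref{thm:general-krylov-plucker-expansion}. In the square case $J=[n]$, Eq.~\eqref{eq:general-plucker-expansion} writes $F_{r,q}$ as a sum over ordered partitions $\bm J=(J_0,\ldots,J_{q-1})$ of $[n]$ into $r$-element blocks. The term for $\bm J$ is $\epsilon(\bm J)\prod_a p_{J_a}(X)$ times the spectral monomial $\prod_a\lambda_{J_a}^{[a]}=\prod_{i}\lambda_i^{a(i)}$, where $a(i)$ is the block containing $i$. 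The first step repeats the observation from the proof of Theorem~\ref{thm:no-generic-higher-rank-vandermonde}: the exponent vector $(a(1),\ldots,a(n))$ determines the ordered partition uniquely. Hence distinct partitions give distinct monomials, and no cancellation between terms is possible. The coefficient of the monomial attached to $\bm J$ is therefore exactly $\pm\prod_a p_{J_a}(X)$.

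It follows that $F_{r,q}(X,\cdot)$ vanishes identically as a polynomial in $\bm\lambda$ if and only if every such product vanishes. A product $\prod_a p_{J_a}(X)$ is nonzero precisely when each $X[:,J_a]$ is invertible, that is, when each $J_a$ is a basis of the column matroid $\mathsf M_X$. Consequently $F_{r,q}\not\equiv0$ if and only if some ordered partition of $[n]$ into $q$ blocks consists entirely of bases, i.e.\ $[n]$ admits a partition into $q$ bases. Ordering plays no role, since any unordered partition into bases can be ordered. This proves Eq.~\eqref{eq:matroid-basis-packing-zero}. One point to record is that "identically zero" means as a polynomial in $\bm\lambda$, which is equivalent to vanishing for all complex $\bm\lambda$, and also to vanishing on any Zariski-dense set such as the simple-spectrum chart.

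For the equivalent inequality form, I would invoke Edmonds' matroid partition (base packing/covering) theorem. A matroid on ground set $E$ with $|E|=q\,\operatorname{rank}(E)$ can be partitioned into $q$ independent sets if and only if $|A|\le q\operatorname{rank}(A)$ for all $A\subseteq E$. When $|E|=qr$ with $\operatorname{rank}(E)=r$, each of the $q$ independent sets must have size exactly $r$, so each is a basis. Two degenerate cases need care. If $\operatorname{rank}(\mathsf M_X)<r$, then no bases of size $r$ exist and $F\equiv0$; the inequality at $A=[n]$ also fails, because $qr>q\operatorname{rank}([n])$, so the two sides agree. Otherwise $\operatorname{rank}=r$ and Edmonds' theorem applies directly.

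The only subtle step is the no-cancellation claim, and it is already settled by the exponent-vector argument above. The remaining effort is bookkeeping around the matroid-theorem citation, in particular checking that the full-rank hypothesis is handled by the $A=[n]$ instance of the inequality.
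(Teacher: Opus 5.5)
Your proposal is correct and is essentially the paper's own argument. Distinct ordered partitions carry distinct exponent vectors $(a(1),\ldots,a(n))$, so each coefficient $\pm\prod_a p_{J_a}(X)$ survives without cancellation and is nonzero exactly when every block is a basis; Edmonds' matroid partition theorem then supplies the inequality form. Your separate treatment of the case $\rank X<r$ is a harmless refinement: Edmonds' theorem does not need the hypothesis $|E|=q\,\operatorname{rank}(E)$, and any partition of $qr$ elements into $q$ independent sets already forces full rank and blocks that are bases.
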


Indeed, every ordered partition in
Eq.~\eqref{eq:general-plucker-expansion} has a distinct exponent vector, so
its coefficient cannot cancel against another term.  A coefficient is
nonzero precisely when all $q$ blocks are bases.  The equivalence with
Eq.~\eqref{eq:edmonds-basis-packing-inequality} is Edmonds' matroid partition
criterion~\cite{Edmonds1965}; the resulting Pl\"ucker support strata are the
realizable matroid strata of the Grassmannian~\cite{GelfandGoreskyMacPhersonSerganova1987}.

Finally, a spectral eigenvalue of multiplicity greater than $r$ forces
$F_{r,q}=0$, because the associated block columns lie in the $r$-dimensional
space
\begin{equation}
 (1,\mu,\ldots,\mu^{q-1})^T\otimes\C^r.
 \label{eq:degenerate-eigenvalue-block-space}
\end{equation}
Conversely, if every multiplicity is at most $r$, the alternant is nonzero for
generic $X$ on that spectral stratum.  To prove existence, order the
multiplicities $d_1\geq\cdots\geq d_s$.  The Gale--Ryser conditions for a
$0$--$1$ incidence matrix with row sums $(d_1,\ldots,d_s)$ and $r$ column sums
all equal to $q$ hold because
\begin{equation}
 \sum_{a=1}^{k}d_a\leq r\min(k,q):
 \quad d_a\leq r\text{ for }k\leq q,
 \quad\text{and}\quad\sum_a d_a=rq\text{ for }k\geq q.
 \label{eq:gale-ryser-balanced-coloring}
\end{equation}
Hence the labels can be assigned to $r$ colors, $q$ per color, with equal
eigenvalues in distinct colors~\cite{Gale1957,Ryser1957}.  Taking
$x_i=e_{c(i)}$ then reduces $F_{r,q}$ to the product of $r$ nonzero
$q$-point Vandermondes in Eq.~\eqref{eq:parallel-class-factorization}.  Thus
high-multiplicity collision is a controlled vanishing mechanism, whereas no
pairwise gap is a universal divisor when $r>1$.

\subsection{Complete $4\times4$, rank-two classification}
\label{subsec:complete-42-factorization}

We now set $(n,r,q)=(4,2,2)$ and write, for $i<j$,
\begin{equation}
 p_{ij}=\det X[:,\{i,j\}].
 \label{eq:42-plucker-coordinates}
\end{equation}
The unique Pl\"ucker relation for $\operatorname{Gr}(2,4)$ is
\begin{equation}
 p_{12}p_{34}-p_{13}p_{24}+p_{14}p_{23}=0.
 \label{eq:gr24-plucker-relation}
\end{equation}
Define the three complementary-basis products
\begin{equation}
 \mathsf A=p_{12}p_{34},
 \qquad
 \mathsf B=p_{13}p_{24},
 \qquad
 \mathsf C=p_{14}p_{23},
 \qquad
 \mathsf A-\mathsf B+\mathsf C=0.
 \label{eq:42-complementary-plucker-products}
\end{equation}
The general expansion gives
\begin{align}
 F_{2,2}
 ={}&\mathsf A(\lambda_1\lambda_2+\lambda_3\lambda_4)
 -\mathsf B(\lambda_1\lambda_3+\lambda_2\lambda_4)
 \nonumber\\
 &+\mathsf C(\lambda_1\lambda_4+\lambda_2\lambda_3)
 \label{eq:42-six-term-plucker-expansion}\\
 ={}&\mathsf A(\lambda_1-\lambda_4)(\lambda_2-\lambda_3)
 -\mathsf C(\lambda_1-\lambda_2)(\lambda_3-\lambda_4).
 \label{eq:42-two-term-difference-form}\\
 ={}&\mathsf A(\lambda_1-\lambda_3)(\lambda_2-\lambda_4)
 -\mathsf B(\lambda_1-\lambda_2)(\lambda_3-\lambda_4)
 \nonumber\\
 ={}&\mathsf B(\lambda_1-\lambda_4)(\lambda_2-\lambda_3)
 -\mathsf C(\lambda_1-\lambda_3)(\lambda_2-\lambda_4).
 \label{eq:42-three-channel-difference-forms}
\end{align}
The last three lines use $\mathsf B=\mathsf A+\mathsf C$ and make translation
invariance and all three complementary pairings manifest.

\begin{theorem}[Complete rank-two factorization classification]
\label{thm:complete-42-factorization}
For a fixed rank-two Pl\"ucker point $[p_{ij}]\in\operatorname{Gr}(2,4)$:
\begin{enumerate}
 \item If $\mathsf A\mathsf B\mathsf C\ne0$, then $F_{2,2}$ is irreducible
 as a polynomial in the translation-reduced eigenvalue variables.
 \item If exactly one complementary product vanishes, the complete list of
 nonzero factorizations is
 \begin{align}
  \mathsf A=0:\quad
  F_{2,2}&=-\mathsf C
   (\lambda_1-\lambda_2)(\lambda_3-\lambda_4),
  \label{eq:42-factorization-A0}\\
  \mathsf B=0:\quad
  F_{2,2}&=\mathsf A
   (\lambda_1-\lambda_3)(\lambda_2-\lambda_4),
  \label{eq:42-factorization-B0}\\
  \mathsf C=0:\quad
  F_{2,2}&=\mathsf A
   (\lambda_1-\lambda_4)(\lambda_2-\lambda_3).
  \label{eq:42-factorization-C0}
 \end{align}
 \item If two of $\mathsf A,\mathsf B,\mathsf C$ vanish, the Pl\"ucker
 relation forces all three to vanish and $F_{2,2}\equiv0$.
\end{enumerate}
For fixed Pl\"ucker coefficient specializations, these cases exhaust
reducibility of the translation-reduced quadratic.
\end{theorem}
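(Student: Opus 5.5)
The plan is to reduce everything to the classical criterion for ternary quadratic forms. By Eq.~\eqref{eq:plucker-spectral-translation}, $F_{2,2}$ depends only on eigenvalue differences, so I will work in the translation-reduced variables. Concretely, I set $\lambda_4=0$, equivalently use $\delta_i=\lambda_i-\lambda_4$. For a fixed Pl\"ucker point the coefficients $\mathsf A,\mathsf B,\mathsf C$ are complex constants. Starting from the two-term form~\eqref{eq:42-two-term-difference-form} and using $\mathsf B=\mathsf A+\mathsf C$, I expect to obtain
\begin{equation*}
 F_{2,2}=\mathsf A\,\delta_1\delta_2-\mathsf B\,\delta_1\delta_3+\mathsf C\,\delta_2\delta_3 .
\end{equation*}
This is a homogeneous quadratic form in three variables. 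Twice its Gram matrix is
\begin{equation*}
 \begin{pmatrix}0&\mathsf A&-\mathsf B\\ \mathsf A&0&\mathsf C\\ -\mathsf B&\mathsf C&0\end{pmatrix},
\end{equation*}
and a direct cofactor expansion gives determinant $-2\mathsf A\mathsf B\mathsf C$.

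The key fact I will invoke is the standard one: over $\C$, a quadratic form is a product of two linear forms if and only if its rank is at most two. For case~1, if $\mathsf A\mathsf B\mathsf C\neq0$ the form has rank three, so it is irreducible in $\C[\delta_1,\delta_2,\delta_3]$. To be self-contained I would include the short argument. A product $\ell_1\ell_2$ of linear forms has symmetric matrix $\tfrac12(\ell_1\ell_2^{\mathsf T}+\ell_2\ell_1^{\mathsf T})$, which has rank at most two, so a nonzero determinant excludes any factorization. Since $F_{2,2}$ is homogeneous, any factorization into nonconstant polynomials must be into two homogeneous linear factors, so this covers all factorizations.

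For case~2, the Pl\"ucker relation $\mathsf A-\mathsf B+\mathsf C=0$ shows that if exactly one product vanishes, the other two are equal or opposite and nonzero. I would treat each subcase directly.
\begin{itemize}
 \item If $\mathsf A=0$, then $\mathsf B=\mathsf C$, and~\eqref{eq:42-two-term-difference-form} collapses to~\eqref{eq:42-factorization-A0}.
 \item If $\mathsf B=0$, then $\mathsf C=-\mathsf A$, and the first line of~\eqref{eq:42-three-channel-difference-forms} gives~\eqref{eq:42-factorization-B0}.
 \item If $\mathsf C=0$, then $\mathsf B=\mathsf A$, and~\eqref{eq:42-two-term-difference-form} gives~\eqref{eq:42-factorization-C0}.
\end{itemize}
The claim that these lists are \emph{complete} follows from unique factorization in the polynomial ring. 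Each displayed product consists of a nonzero constant and two non-associate linear factors, so every other factorization differs only by scalars and ordering.

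For case~3, if two of $\mathsf A,\mathsf B,\mathsf C$ vanish, the relation forces the third to vanish, and every coefficient in~\eqref{eq:42-six-term-plucker-expansion} is zero, so $F_{2,2}\equiv0$. The final exhaustion statement then follows because the three cases partition all possible vanishing patterns of $(\mathsf A,\mathsf B,\mathsf C)$ that are compatible with the Pl\"ucker relation; the pattern with exactly two zeros cannot occur. The only real work is the determinant computation together with the rank criterion; I expect no genuine obstacle beyond tracking signs in the reduced form.
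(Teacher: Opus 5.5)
Your proof is correct and follows essentially the same route as the paper: you pass to translation-reduced variables to get the ternary quadratic $\mathsf A\,\delta_1\delta_2-\mathsf B\,\delta_1\delta_3+\mathsf C\,\delta_2\delta_3$, use a determinant proportional to $\mathsf A\mathsf B\mathsf C$ together with the rank-at-most-two property of products of linear forms for case~1, and substitute the Pl\"ucker relation for cases~2 and~3. Your homogeneity and unique-factorization remarks spell out what the paper compresses into ``since $F_{2,2}$ is quadratic, these possibilities exhaust all factorizations.''
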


\emph{Proof.}
Translation invariance permits the linear coordinates
\begin{equation}
 x=\lambda_1-\lambda_4,
 \qquad y=\lambda_2-\lambda_4,
 \qquad z=\lambda_3-\lambda_4.
 \label{eq:42-translation-reduced-variables}
\end{equation}
Equation~\eqref{eq:42-six-term-plucker-expansion} becomes the ternary
quadratic
\begin{equation}
 Q(x,y,z)=\mathsf Axy-\mathsf Bxz+\mathsf Cyz,
 \label{eq:42-ternary-quadratic}
\end{equation}
with symmetric coefficient matrix
\begin{equation}
 \frac12
 \begin{pmatrix}
  0&\mathsf A&-\mathsf B\\
  \mathsf A&0&\mathsf C\\
  -\mathsf B&\mathsf C&0
 \end{pmatrix},
 \qquad
 \det=-\frac14\mathsf A\mathsf B\mathsf C.
 \label{eq:42-quadratic-matrix-determinant}
\end{equation}
A product of two linear forms has quadratic-matrix rank at most two.  If
$\mathsf A\mathsf B\mathsf C\ne0$, the matrix has rank three, so the conic is
nonsingular and the quadratic is irreducible.  If one product vanishes, the
Pl\"ucker relation fixes the other two relative coefficients, and direct
substitution yields Eqs.~\eqref{eq:42-factorization-A0}--
\eqref{eq:42-factorization-C0}.  If two vanish, the relation forces the third
to vanish.  Since $F_{2,2}$ is quadratic, these possibilities exhaust all
factorizations.  \hfill$\square$

For the physical electron--sterile support,
\begin{equation}
 p_{ij}=V_{ei}V_{sj}-V_{ej}V_{si}.
 \label{eq:physical-es-plucker-minor}
\end{equation}
Thus $p_{ij}=0$ means that the electron--sterile projections of mass columns
$i$ and $j$ are collinear.  The reducible-or-identically-zero locus is the
union of these Pl\"ucker boundary divisors.  Removing the common zero stratum
$\mathsf A=\mathsf B=\mathsf C=0$ leaves exactly the three nonzero reducible
strata of Theorem~\ref{thm:complete-42-factorization}.  On the uniform-matroid
locus all six $p_{ij}$ and hence $\mathsf A\mathsf B\mathsf C$ are nonzero, so
the quadratic is irreducible.  Moreover,
\begin{equation}
 |\mathcal D_{es}|^2=|F_{2,2}|^2,
 \label{eq:42-physical-krylov-norm-spectral}
\end{equation}
because $|\det V|=1$.  The factorized boundary formulas therefore translate
directly into squared eigenvalue-difference products multiplied by the
corresponding Pl\"ucker moduli.

Pairwise degeneracy does not force vanishing and is distinct from a
mixing-matroid factorization stratum.  Restricting to such a spectral
hyperplane can of course leave a product.  For example, at
$\lambda_1=\lambda_2$,
\begin{equation}
 F_{2,2}\big|_{\lambda_1=\lambda_2}
 =\mathsf A(\lambda_1-\lambda_4)(\lambda_1-\lambda_3),
 \label{eq:42-pairwise-degeneracy-nonvanishing}
\end{equation}
which is generically nonzero.  By contrast, any triple degeneracy has
multiplicity greater than $r=2$ and forces $F_{2,2}=0$.  When
$\mathsf A=\mathsf B=\mathsf C=0$, the determinant vanishes for every
spectrum; this is a block-Krylov rank-deficiency stratum rather than a
nonzero factorized identity.

For completeness, set $d_{ij}=\lambda_j-\lambda_i$.  The six pair-collision
hyperplanes fall into three complementary channels:
\begin{equation}
 \begin{array}{c|c}
  \text{collision}&F_{2,2}\text{ on the collision hyperplane}\\ \hline
  \lambda_1=\lambda_2\ \text{or}\ \lambda_3=\lambda_4
    &\mathsf A d_{14}d_{23}\\
  \lambda_1=\lambda_3\ \text{or}\ \lambda_2=\lambda_4
    &-\mathsf B d_{12}d_{34}\\
  \lambda_1=\lambda_4\ \text{or}\ \lambda_2=\lambda_3
    &-\mathsf C d_{13}d_{24}
 \end{array}
 \label{eq:42-pair-collision-table}
\end{equation}
For the corresponding $2+2$ multiplicity patterns, with distinct values
$\alpha$ and $\beta$, these reduce respectively to
$\mathsf A(\beta-\alpha)^2$, $-\mathsf B(\beta-\alpha)^2$, and
$\mathsf C(\beta-\alpha)^2$.

Polynomial irreducibility also does not mean that a particular numerical
evaluation cannot vanish on the uniform simple-spectrum locus.  For example,
\begin{equation}
 X=\begin{pmatrix}1&0&1&1\\0&1&1&\tfrac12\end{pmatrix},
 \qquad
 \bm\lambda=(0,1,\tfrac32,3)
 \label{eq:42-cross-ratio-cancellation-example}
\end{equation}
has six nonzero Pl\"ucker coordinates and six nonzero spectral gaps, yet
$F_{2,2}=0$.  This is a codimension-one cross-ratio matching condition between
the two terms in Eq.~\eqref{eq:42-two-term-difference-form}, not a universal
factor or a matroid boundary.  Left-orthonormalizing $X$ preserves its
Pl\"ucker ratios and the equation $F_{2,2}=0$, so the example has a coisometric
representative that embeds into a physical unitary mixing matrix.

\subsection{Scope of the classification}
\label{subsec:factorization-classification-scope}

The hierarchy is now precise.  Rank one gives the universal Vandermonde
product \eqref{eq:rank-one-plucker-vandermonde}.  At generic higher rank,
there is no Pl\"ucker--spectral separation and no pairwise eigenvalue
difference divides the square alternant; globally there is no nonconstant
universal pure-spectral divisor at all.  On Pl\"ucker boundary strata the
ordered-partition sum can collapse and factorization may reappear; for
$(n,r)=(4,2)$ Theorem~\ref{thm:complete-42-factorization} is exhaustive for
fixed Pl\"ucker coefficient specializations.  At general $(n,r)$, rank
reduction, projected-rank loss, invariant support, matroid basis-packing
failure, parallel classes, high-multiplicity degeneracy, and vanishing
Pl\"ucker coordinates provide
controlled exceptional mechanisms, but we do not claim an exhaustive
classification over all matroid strata or general irreducibility of
$F_{r,q}$.

The determinant $\mathcal D_{es}$ remains one distinguished element of the
standard $3+1$ cycle algebra, not a replacement for its global generators.
We therefore return in the next section to flavor coordinates and compute the
complete toric invariant ring of the four-vertex system.

\section{The complete standard $3+1$ toric ring}
\label{sec:complete-31-toric-ring}

The preceding two sections extracted distinguished exterior covariants and
classified their spectral factorization.  We now return to the full
flavor-coordinate algebra.  The purpose of this section is different: to give
a global polynomial presentation of every invariant of the fixed-composition
standard $3+1$ flow, including the boundary strata on which phase coordinates
or rational reconstruction formulas fail.  Section~\ref{sec:diagonal-flows}
already proved the general cycle theorem; here we specialize it to four
flavors, determine the minimal generators and Hilbert series, and compare the
result with the eleven local invariants of Ref.~\cite{WangZhou2026}.  An
explicit generating set for the toric syzygy ideal, its CP decomposition, and
the singular strata are deferred to Sec.~VII.
Reference~\cite{WangZhou2026} already noted that a global completion requires
all distinct triangles and quadrangles; the advances here are the minimal
Hilbert-basis proof, the exact toric presentation, the Hilbert series, and the
precise localization and boundary analysis.

\subsection{The fixed-composition fiber}
\label{subsec:31-fixed-fiber-toric}

Let
\begin{equation}
 \mathsf F=\{e,\mu,\tau,s\},
 \qquad
 z_{\alpha\beta}=M_{\alpha\beta},
 \qquad \alpha\ne\beta .
 \label{eq:31-toric-edge-variables}
\end{equation}
The trace quotient does not alter an off-diagonal entry.  On a spectral chart,
\begin{equation}
 z_{\alpha\beta}
 =\sum_{i=1}^{4}\lambda_i
   V_{\alpha i}V_{\beta i}^{*},
 \qquad \alpha\ne\beta ,
 \label{eq:31-edge-spectral-dictionary}
\end{equation}
so these are the quantities denoted by the off-diagonal entries of
$\widetilde E$ in Ref.~\cite{WangZhou2026}.  A diagonal matter potential
translates no $z_{\alpha\beta}$.

For the fixed composition
$p_\eta=\overline E_e+\eta\overline E_s$, recall
\begin{align}
 d_{\mu\tau}&=M_{\mu\mu}-M_{\tau\tau},
 \label{eq:section6-diagonal-dmutau}\\
 d_\eta&=\eta M_{ee}
 +\frac{1-\eta}{2}(M_{\mu\mu}+M_{\tau\tau})-M_{ss}.
 \label{eq:section6-diagonal-deta}
\end{align}
The entire problem therefore reduces to the rephasing action
\begin{equation}
 z_{\alpha\beta}\longmapsto
 t_\alpha t_\beta^{-1}z_{\alpha\beta},
 \qquad
 (t_\alpha)\in(\C^\times)^4/\C^\times .
 \label{eq:section6-edge-rephasing}
\end{equation}
Let $E_4=\{(\alpha,\beta)\in\mathsf F^2:\alpha\ne\beta\}$, let $B_4$ be
the incidence matrix with column $e_\alpha-e_\beta$ at
$(\alpha,\beta)$, and set
\begin{equation}
 S_4=\{u\in\mathbb N^{E_4}:B_4u=0\},
 \qquad
 \mathcal B_4=\C[S_4].
 \label{eq:section6-circulation-semigroup}
\end{equation}
This is the dimension-vector-one instance of the general generation of quiver
invariants by oriented cycles~\cite{LeBruynProcesi1990}.

\subsection{Minimal global generators and the toric presentation}
\label{subsec:31-global-toric-presentation}

For distinct flavor labels define
\begin{align}
 Q_{\alpha\beta}
 &=z_{\alpha\beta}z_{\beta\alpha},
 &&Q_{\alpha\beta}=Q_{\beta\alpha},
 \label{eq:section6-Q-generator}\\
 T_{\alpha\beta\gamma}
 &=z_{\alpha\beta}z_{\beta\gamma}z_{\gamma\alpha},
 &&T_{\alpha\beta\gamma}
   =T_{\beta\gamma\alpha}=T_{\gamma\alpha\beta},
 \label{eq:section6-T-generator}\\
 C_{\alpha\beta\gamma\delta}
 &=z_{\alpha\beta}z_{\beta\gamma}
   z_{\gamma\delta}z_{\delta\alpha},
 &&C_{\alpha\beta\gamma\delta}
   =C_{\beta\gamma\delta\alpha}.
 \label{eq:section6-C-generator}
\end{align}
Thus cycles are identified up to cyclic rotation, but not orientation
reversal.  There are six $Q$'s, eight $T$'s, and six $C$'s.

Introduce a polynomial variable $y_C$ for each of these twenty directed
simple cycles and the monomial map
\begin{equation}
 \begin{aligned}
 \Psi_4:\mathcal P_4
 &\equiv
 \C[y_C\mid C\text{ a directed simple cycle of }K_4]
 \longrightarrow \C[z_{\alpha\beta}\mid\alpha\ne\beta],\\
 y_C&\longmapsto Z_C .
 \end{aligned}
 \label{eq:section6-cycle-monomial-map}
\end{equation}
Let $A_4\in\mathbb N^{12\times20}$ have as its columns the directed-edge
incidence vectors of the twenty cycles.

\begin{theorem}[Complete standard $3+1$ toric presentation]
\label{thm:complete-31-toric-presentation}
For fixed $\eta$,
\begin{equation}
 \mathcal I^{\rm diag}_{4,\langle p_\eta\rangle}
 \simeq
 \C[d_{\mu\tau},d_\eta]\otimes_\C
 \frac{\mathcal P_4}{I_4},
 \qquad
 I_4=\ker\Psi_4
 =\left\langle
  y^{a}-y^{b}:a,b\in\mathbb N^{20},\ A_4a=A_4b
  \right\rangle .
 \label{eq:complete-31-toric-presentation}
\end{equation}
The twenty cycle monomials are the unique minimal monomial generating set of
the off-diagonal ring.  The ideal $I_4$ is a prime binomial ideal of height
eleven, and
\begin{equation}
 \dim\mathcal B_4=9,
 \qquad
 \dim\mathcal I^{\rm diag}_{4,\langle p_\eta\rangle}=11.
 \label{eq:section6-ring-dimensions}
\end{equation}
\end{theorem}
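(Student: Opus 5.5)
The plan is to assemble the statement from results already proved, with only two genuinely $n=4$ ingredients: the count of minimal generators and the height of $I_4$.

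First, the tensor decomposition. Theorem~\ref{thm:complete-diagonal-flow-algebra} with $n=4$ and $W=\langle p_\eta\rangle$ (so $\ell=1$) gives
\[
\mathcal I^{\rm diag}_{4,\langle p_\eta\rangle}\simeq\C[d_{\mu\tau},d_\eta]\otimes_\C\mathcal B_4 .
\]
Here $d_{\mu\tau},d_\eta$ are the annihilator basis of Eqs.~\eqref{eq:section6-diagonal-dmutau}--\eqref{eq:section6-diagonal-deta}. They are linearly independent and both are killed by $\mathcal D_{p_\eta}$, which I would check directly. For instance, for $d_\eta$ one computes
\[
\eta\cdot\tfrac34+\tfrac{1-\eta}{2}\cdot\bigl(-\tfrac12\bigr)-\tfrac{3\eta}{4}+\tfrac14=0
\]
after trace projection.

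Second, the generators and the presentation. By Proposition~\ref{prop:cycle-hilbert-basis}, the Hilbert basis of $S_4$ is exactly the set of incidence vectors of the $N_4=6+8+6=20$ directed simple cycles. So $\Psi_4$ is surjective onto $\mathcal B_4=\C[S_4]$. Uniqueness of the minimal monomial generating set follows from a standard fact: $S_4$ is pointed (it lies in $\mathbb N^{E_4}$), and the Hilbert basis of a pointed positive affine semigroup is its set of irreducible elements. Any monomial generating set of $\C[S_4]$ must contain these twenty irreducibles. The kernel of a monomial map $y\mapsto z^{A_4 y}$ is the lattice-type toric ideal spanned by the binomials $y^a-y^b$ with $A_4a=A_4b$. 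This is the standard description (Sturmfels): a polynomial lies in the kernel iff its coefficients sum to zero on each fiber of $a\mapsto A_4a$, and such a polynomial is a combination of binomials within fibers. Primality follows because $\mathcal P_4/I_4$ embeds in the domain $\C[z]$.

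Third, the dimensions. From Eq.~\eqref{eq:general-cycle-ring-dimension}, $\dim\mathcal B_4=(4-1)^2=9$, which is also $\rank A_4$, since the cycle vectors span $\ker_\R B_4$ of dimension $12-3=9$. Because $\mathcal P_4$ is a polynomial ring in $20$ variables and $I_4$ is prime, $\operatorname{ht}I_4=20-9=11$. Adjoining the two free diagonal variables gives Krull dimension $11$, in agreement with Corollary~\ref{cor:diagonal-flow-dimension}.

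The only step needing care is the rank of $A_4$, so I would exhibit nine independent cycles explicitly. A spanning-tree argument works: fix the tree of edges out of $e$. Each of the $9$ directed non-tree edges closes a unique fundamental cycle, a two-cycle or a triangle through $e$. These fundamental cycles have distinct private non-tree edges, which gives rank $\geq 9$. The matching upper bound $\rank A_4\le 9$ holds because the columns of $A_4$ lie in $\ker B_4$. Everything else is a citation of earlier results.
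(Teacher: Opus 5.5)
Your proposal is correct and follows the paper's proof essentially step for step: Proposition~\ref{prop:cycle-hilbert-basis} for the twenty generators and their indecomposability, the fiber description of $\ker\Psi_4$, primality because the image is a subring of $\C[z_{\alpha\beta}]$, then $\dim\mathcal B_4=9$ and $\operatorname{ht}I_4=20-9=11$, and finally the two adjoined diagonal variables. Two small slips do not affect validity: first, your displayed check of $\cD_{p_\eta}d_\eta$ uses wrong entries of $p_\eta$ (its $e$ and $\mu,\tau$ entries are $(3-\eta)/4$ and $-(1+\eta)/4$, and your sum as written equals $\eta/4$, whereas $d_\eta$ does kill both $\operatorname{diag}(1,0,0,\eta)$ and $\bm 1$); second, in your rank check each directed triangle $e\to\alpha\to\beta\to e$ uses two non-tree edges and the two-cycles have no private edge, so argue instead that the inner edges $\alpha\to\beta$ are private to the six triangles and the three two-cycles have disjoint supports.
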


\emph{Proof.}
A rephasing-invariant monomial has exponent vector in $S_4$.  The circulation
decomposition of Proposition~\ref{prop:cycle-hilbert-basis} writes every
nonzero element of $S_4$ as a sum of directed simple cycles.  In $K_4$ their
lengths are two, three, and four, with counts
\begin{equation}
 \binom42(2-1)!=6,
 \qquad
 \binom43(3-1)!=8,
 \qquad
 \binom44(4-1)!=6.
 \label{eq:section6-cycle-counts}
\end{equation}
The unit circulation on a simple directed cycle is indecomposable, so none of
the twenty monomials can be removed.  Two products of cycle variables have the
same image precisely when they have the same directed-edge exponent vector;
this gives the displayed lattice-binomial description of $I_4$.  The image is
the semigroup algebra $\C[S_4]$, hence $I_4$ is prime.  Finally,
$\rank B_4=3$, so $\dim\mathcal B_4=12-3=9$ and
$\operatorname{ht}I_4=20-9=11$.  Adjoining the two diagonal polynomial
variables gives the last dimension. \hfill$\square$

An explicit list, useful for later real-form calculations, is
\begin{align}
 &Q_{e\mu},Q_{e\tau},Q_{es},Q_{\mu\tau},Q_{\mu s},Q_{\tau s},
 \nonumber\\
 &(T_{e\mu\tau},T_{e\tau\mu}),\quad
   (T_{e\mu s},T_{es\mu}),\quad
   (T_{e\tau s},T_{es\tau}),\quad
   (T_{\mu\tau s},T_{\mu s\tau}),
 \label{eq:section6-explicit-cycle-list}\\
 &(C_{e\mu\tau s},C_{es\tau\mu}),\quad
   (C_{e\mu s\tau},C_{e\tau s\mu}),\quad
   (C_{e\tau\mu s},C_{es\mu\tau}).
 \nonumber
\end{align}
The paired entries have opposite orientations.

\subsection{Hermitian real form and physical generator count}
\label{subsec:31-hermitian-toric-generators}

On the Hermitian locus,
\begin{equation}
 z_{\beta\alpha}=z_{\alpha\beta}^{*},
 \qquad
 Q_{\alpha\beta}=|M_{\alpha\beta}|^2,
 \qquad
 T_{\alpha\gamma\beta}=T_{\alpha\beta\gamma}^{*},
 \qquad
 C_{\alpha\delta\gamma\beta}
   =C_{\alpha\beta\gamma\delta}^{*}.
 \label{eq:section6-hermitian-cycle-reversal}
\end{equation}
Consequently the twenty off-diagonal real generators consist of six real
$Q$'s, the real and imaginary parts of four triangle representatives, and the
real and imaginary parts of three quadrangle representatives.  Together with
$d_{\mu\tau}$ and $d_\eta$, this gives twenty-two minimal real global
generators for an eleven-dimensional real quotient.  Indeed, the twenty
indecomposable cycle degrees form a basis of
$\mathfrak m_{\mathcal B_4}/\mathfrak m_{\mathcal B_4}^2$; adding the two
linear diagonal classes gives dimension twenty-two, and complexification of
the Hermitian real ring preserves this cotangent-space dimension.

The real form of $I_4$ is obtained by the following invertible linear change
on each triangle and quadrangle reversal pair, with every $Q$ fixed:
\begin{equation}
 Z_C^{(+)}=\frac{Z_C+Z_{C^{-1}}}{2},
 \qquad
 Z_C^{(-)}=\frac{Z_C-Z_{C^{-1}}}{2\ii}.
 \label{eq:section6-real-cycle-change}
\end{equation}
It already contains, for every triangle and quadrangle,
\begin{align}
 T_{\alpha\beta\gamma}T_{\alpha\gamma\beta}
 &=Q_{\alpha\beta}Q_{\beta\gamma}Q_{\gamma\alpha},
 \label{eq:section6-triangle-norm-relation}\\
 C_{\alpha\beta\gamma\delta}C_{\alpha\delta\gamma\beta}
 &=Q_{\alpha\beta}Q_{\beta\gamma}
   Q_{\gamma\delta}Q_{\delta\alpha}.
 \label{eq:section6-quadrangle-norm-relation}
\end{align}
Thus the excess of global generators over dimension is algebraic redundancy,
not additional physical information.  Orientation reversal fixes each $Q$
and exchanges the two members of every $T$ and $C$ pair.  Accordingly the real
parts in Eq.~\eqref{eq:section6-real-cycle-change} are CP even and the
imaginary parts are CP odd for the real diagonal standard spurion.  The
scheme-theoretic CP-conserving locus requires additional support-dependent
qualifications and will be treated in Sec.~VII.
On physical Hermitian points $Q_{\alpha\beta}\geq0$.  These inequalities are
semialgebraic constraints on the real image, not equations in the complex
toric ideal.

\subsection{Normality, Gorenstein structure, and Hilbert series}
\label{subsec:31-hilbert-series}

The semigroup $S_4=\ker_{\mathbb Z}B_4\cap\mathbb N^{12}$ is saturated, as
proved in Sec.~\ref{subsec:circulation-semigroup}; hence $\mathcal B_4$ is a
normal affine domain.  Hochster's theorem then makes it
Cohen--Macaulay~\cite{Hochster1972}.  In this case it is also Gorenstein.
Indeed, if
$\mathcal C_4=\ker_{\mathbb R}B_4\cap\mathbb R_{\geq0}^{12}$, then every
coordinate inequality is a facet of $\mathcal C_4$.  For the arc
$\alpha\to\beta$, start with the all-ones circulation, set that arc to zero,
and add one unit on a two-edge path
$\alpha\to\gamma\to\beta$; this gives a relative-interior point of that facet.
Moreover,
\begin{equation}
 \omega=(1,1,\ldots,1)\in S_4,
 \qquad
 \operatorname{relint}(\mathcal C_4)\cap\ker_{\mathbb Z}B_4
 =\omega+S_4 .
 \label{eq:section6-gorenstein-translate}
\end{equation}
The Danilov--Stanley description of the canonical module therefore makes it
the principal ideal generated by the all-edge monomial
$\prod_{\alpha\ne\beta}z_{\alpha\beta}$~\cite{Stanley1978}.  With every
$z_{\alpha\beta}$ assigned degree one, the $a$-invariant is $-12$.

The fully edge-multigraded Hilbert series is the exact constant term
\begin{equation}
 H_{\mathcal B_4}(\bm x)
 =\operatorname{CT}_{t_e,t_\mu,t_\tau}
 \prod_{\alpha\ne\beta}
 \frac{1}
 {1-x_{\alpha\beta}t_\alpha t_\beta^{-1}}
 \bigg|_{t_s=1}.
 \label{eq:section6-fine-hilbert-series}
\end{equation}
Its ordinary specialization has the closed form
\begin{equation}
 H_{\mathcal B_4}(s)
 =
 \frac{
  1+6s^3+5s^4+5s^6+6s^7+s^{10}}
 {(1-s^2)^6(1-s^3)^2(1-s^4)}.
 \label{eq:section6-univariate-hilbert-series}
\end{equation}
The first terms,
\begin{equation}
 H_{\mathcal B_4}(s)
 =1+6s^2+8s^3+27s^4+48s^5+112s^6+\mathcal O(s^7),
 \label{eq:section6-hilbert-series-expansion}
\end{equation}
recover the six quadratic, eight cubic, and six new primitive quartic
generators; the other twenty-one quartics are products of two $Q$'s.
The palindromic numerator and
$10-[6(2)+2(3)+4]=-12$ agree with the Gorenstein calculation.

For completeness, Eq.~\eqref{eq:section6-univariate-hilbert-series} admits a
direct circulation proof.  Orient the six undirected edges once.  For the two
directed multiplicities $(a_e,b_e)$ on an undirected edge, set
\begin{equation}
 f_e=a_e-b_e,\qquad m_e=\min(a_e,b_e).
 \label{eq:section6-symmetric-antisymmetric-flow}
\end{equation}
This gives a bijection between $S_4$ and a choice of
$(m_e)\in\mathbb N^6$ together with an integral antisymmetric circulation
$f$.  Hence
\begin{equation}
 H_{\mathcal B_4}(s)
 =\frac{\Theta_{K_4}(s)}{(1-s^2)^6},
 \qquad
 \Theta_{K_4}(s)=\sum_{f\in\ker_{\mathbb Z}D_{K_4}}
 s^{\|f\|_1}.
 \label{eq:section6-flow-theta-series}
\end{equation}
Orient the six undirected edges as
$(e\mu,e\tau,es,\mu\tau,\mu s,\tau s)$.  Writing the last three chord flows as
$(x,y,z)$ gives the full antisymmetric flow vector
\begin{equation}
 f=(x+y,\ z-x,\ -y-z,\ x,\ y,\ z)
 \label{eq:section6-A3-flow-parametrization}
\end{equation}
and hence
\begin{equation}
 \|f\|_1
 =|x|+|y|+|z|+|x+y|+|z-x|+|y+z|.
 \label{eq:section6-A3-flow-norm}
\end{equation}
The six vanishing hyperplanes form the $A_3$ reflection arrangement.  A
fundamental chamber can be written
\begin{equation}
 a=x\geq0,\qquad b=y\geq0,\qquad c=z-x\geq0,
 \qquad
 \|f\|_1=4a+3b+3c.
 \label{eq:section6-A3-fundamental-chamber}
\end{equation}
The Weyl group has order $24$.  Interior points have orbit size $24$; a point
on one wall has orbit size $12$; points on the intersection of the central
wall $a=0$ with one endpoint wall have orbit size $4$, whereas points on the
intersection $b=c=0$ of the two nonadjacent endpoint walls have orbit size
$6$.  Separating these half-open faces gives, with
$D=(1-s^3)^2(1-s^4)$,
\begin{align}
 D\,\Theta_{K_4}(s)
 ={}&D
 +8s^3(1-s^3)(1-s^4)
 +6s^4(1-s^3)^2
 \nonumber\\
 &+12s^6(1-s^4)
 +24s^7(1-s^3)
 +24s^{10}
 \nonumber\\
 ={}&1+6s^3+5s^4+5s^6+6s^7+s^{10}.
 \label{eq:section6-A3-chamber-sum}
\end{align}
Therefore
\begin{equation}
 \Theta_{K_4}(s)
 =\frac{1+6s^3+5s^4+5s^6+6s^7+s^{10}}
 {(1-s^3)^2(1-s^4)},
 \label{eq:section6-flow-theta-rational}
\end{equation}
which proves Eq.~\eqref{eq:section6-univariate-hilbert-series}.  The two
diagonal generators have degree one, so
\begin{equation}
 H_{\mathcal I^{\rm diag}_{4,\langle p_\eta\rangle}}(s)
 =\frac{H_{\mathcal B_4}(s)}{(1-s)^2}.
 \label{eq:section6-full-ring-hilbert-series}
\end{equation}

\subsection{A spanning-star chart and rational reconstruction}
\label{subsec:31-star-localization}

The global presentation can now be compared precisely with a minimal local
description.  Choose the $e$-rooted triangles
\begin{equation}
 T_1=T_{e\mu\tau},\qquad
 T_2=T_{e\mu s},\qquad
 T_3=T_{e\tau s},
 \label{eq:section6-rooted-triangles}
\end{equation}
write $T_i^\vee$ for their reverse orientations, and set
\begin{equation}
 S_e=Q_{e\mu}Q_{e\tau}Q_{es}.
 \label{eq:section6-star-product}
\end{equation}

\begin{proposition}[Spanning-star localization]
\label{prop:section6-star-localization}
On the principal open set $D(S_e)$,
\begin{equation}
 \mathcal B_4[S_e^{-1}]
 \simeq
 \frac{
 \C[Q_{\alpha\beta},T_1,T_1^\vee,T_2,T_2^\vee,T_3,T_3^\vee]_{S_e}}
 {\left(
 \begin{array}{l}
 T_1T_1^\vee-Q_{e\mu}Q_{e\tau}Q_{\mu\tau},\\
 T_2T_2^\vee-Q_{e\mu}Q_{es}Q_{\mu s},\\
 T_3T_3^\vee-Q_{e\tau}Q_{es}Q_{\tau s}
 \end{array}\right)}.
 \label{eq:section6-star-localized-ring}
\end{equation}
Every omitted triangle and quadrangle is a regular function of these
localized generators.
\end{proposition}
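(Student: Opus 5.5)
The plan is to compare two domains of the same Krull dimension, nine, and use a surjection between them. Let $R$ denote the ring on the right-hand side of Eq.~\eqref{eq:section6-star-localized-ring}. There is an obvious homomorphism $\phi:R\to\mathcal B_4[S_e^{-1}]$ sending every symbol to the cycle monomial of the same name. It is well defined because the three displayed binomials are instances of Eq.~\eqref{eq:section6-triangle-norm-relation}, and $S_e$ maps to a unit. I will prove that (i) $R$ is a domain of dimension nine, and (ii) $\phi$ is surjective. Since $\mathcal B_4[S_e^{-1}]$ is a localization of the domain $\mathcal B_4$, which has dimension $9$ by Theorem~\ref{thm:complete-31-toric-presentation}, the kernel of $\phi$ is then a prime of height zero in a domain, hence zero, and $\phi$ is an isomorphism.

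\emph{Step (i).} In $R$ the three star moduli $Q_{e\mu},Q_{e\tau},Q_{es}$ are units. Each relation therefore solves a non-star modulus: for example $Q_{\mu\tau}=T_1T_1^\vee/(Q_{e\mu}Q_{e\tau})$, and similarly for $Q_{\mu s}$ via $T_2T_2^\vee$ and $Q_{\tau s}$ via $T_3T_3^\vee$. Each relation is linear in a distinct non-star $Q$ with unit coefficient, so eliminating these three variables gives
\begin{equation*}
 R\simeq\C[Q_{e\mu}^{\pm1},Q_{e\tau}^{\pm1},Q_{es}^{\pm1}]
 [T_1,T_1^\vee,T_2,T_2^\vee,T_3,T_3^\vee].
\end{equation*}
This is a domain of dimension $3+6=9$.

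\emph{Step (ii).} Work inside the Laurent ring $\C[z_{\alpha\beta}^{\pm1}]$, where $\mathcal B_4[S_e^{-1}]$ embeds. The six star edges $z_{e\alpha},z_{\alpha e}$ are units there. Each non-star edge can be rewritten through a rooted triangle, for example
\begin{equation*}
 z_{\mu\tau}=\frac{T_1}{z_{e\mu}z_{\tau e}},\qquad
 z_{\tau\mu}=\frac{T_1^\vee}{z_{e\tau}z_{\mu e}},
\end{equation*}
and similarly for the $\mu s$ and $\tau s$ edges using $T_2,T_2^\vee,T_3,T_3^\vee$. Substituting these into any rephasing-invariant monomial of $\mathcal B_4$ gives a product of the $T_i,T_i^\vee$ times a Laurent monomial in star edges only. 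Every $T_i,T_i^\vee$ is invariant, so that residual star monomial is also invariant.

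The star graph is a tree. An invariant Laurent monomial supported on it must therefore carry equal exponents on $z_{e\alpha}$ and $z_{\alpha e}$, because flow conservation at each leaf $\alpha$ forces this. Hence the residual monomial is a product of powers $Q_{e\alpha}^{\pm1}$. Since every element of $\mathcal B_4[S_e^{-1}]$ is a monomial combination divided by a power of $S_e$, this proves surjectivity.

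The same substitution yields the regular expressions claimed for the omitted generators. For example $C_{e\mu\tau s}=T_1T_3/Q_{e\tau}$ and $T_{\mu\tau s}=T_1T_3T_2^\vee/(Q_{e\mu}Q_{e\tau}Q_{es})$ up to the bookkeeping of orientations; the reversed cycles follow by orientation reversal.

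The main obstacle I expect is the careful orientation bookkeeping in step (ii), that is, checking that each non-star edge and its reverse are matched to the correct triangle $T_i$ or $T_i^\vee$. The tree argument itself is routine once this is set up. If a cleaner route is needed, I would instead verify directly that the kernel of $\phi$ is generated by the three binomials, using the height count $\operatorname{ht}=12-9=3$ of the localized toric ideal.
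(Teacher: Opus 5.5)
Your proof is correct. It uses the same key substitution as the paper: each non-star edge is expressed through one of the six $e$-rooted triangles, with the star edges as units. The two halves of the argument, however, are handled differently.

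The paper gauge-fixes. On $D(S_e)$ the out-edges $z_{e\mu},z_{e\tau},z_{es}$ are units and $\Tf$ acts simply transitively on them, so a complex rephasing sets them to $1$. Then $\mathcal B_4[S_e^{-1}]$ becomes the coordinate ring of that slice, on which $z_{\alpha e}=Q_{e\alpha}$ and $z_{\alpha\beta}=T_{e\alpha\beta}/Q_{e\beta}$ are visibly free coordinates. Generation, the absence of relations beyond the three norm equations, and the reconstruction formulas all follow at once.

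You stay gauge-free instead. Flow conservation at the leaves of the star tree gives surjectivity. Injectivity comes from a Krull-dimension comparison, which uses three inputs:
\begin{itemize}
\item $\dim\mathcal B_4=9$ from Theorem~\ref{thm:complete-31-toric-presentation};
\item the fact that localizing an affine domain at a nonzero element preserves its dimension, which you use tacitly and should state;
\item your elimination $R\simeq\C[Q_{e\mu}^{\pm1},Q_{e\tau}^{\pm1},Q_{es}^{\pm1}][T_1,T_1^\vee,T_2,T_2^\vee,T_3,T_3^\vee]$.
\end{itemize}
Your route needs no justification that invariants on $D(S_e)$ are functions on a slice, but it imports dimension theory and an external dimension count. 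The paper's route is shorter and self-contained.

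You could also drop the dimension argument entirely. The nine monomials $Q_{e\alpha}$ and $T_{e\alpha\beta}$ have linearly independent exponent vectors: each $T_{e\alpha\beta}$ is the only one using the non-star edge $\alpha\to\beta$, and the $Q_{e\alpha}$ have disjoint supports. Hence they are algebraically independent in the Laurent ring, which is exactly the gauge-free content of the paper's slice.
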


Indeed, on $D(S_e)$ a complex rephasing sets
$z_{e\mu}=z_{e\tau}=z_{es}=1$.  Then
$z_{\alpha e}=Q_{e\alpha}$ and
\begin{equation}
 z_{\alpha\beta}
 =\frac{T_{e\alpha\beta}}{Q_{e\beta}},
 \qquad
 \alpha,\beta\in\{\mu,\tau,s\},\quad\alpha\ne\beta.
 \label{eq:section6-gauge-reconstruction}
\end{equation}
This proves the presentation and gives the explicit reconstruction formulas
\begin{align}
 T_{\mu\tau s}
 &=\frac{T_1T_3T_2^\vee}{S_e},
 &
 T_{\mu s\tau}
 &=\frac{T_1^\vee T_3^\vee T_2}{S_e},
 \label{eq:section6-fourth-triangle-reconstruction}\\
 C_{e\mu\tau s}
 &=\frac{T_1T_3}{Q_{e\tau}},
 &
 C_{es\tau\mu}
 &=\frac{T_1^\vee T_3^\vee}{Q_{e\tau}},
 \nonumber\\
 C_{e\mu s\tau}
 &=\frac{T_2T_3^\vee}{Q_{es}},
 &
 C_{e\tau s\mu}
 &=\frac{T_2^\vee T_3}{Q_{es}},
 \label{eq:section6-quadrangle-reconstruction}\\
 C_{e\tau\mu s}
 &=\frac{T_1^\vee T_2}{Q_{e\mu}},
 &
 C_{es\mu\tau}
 &=\frac{T_1T_2^\vee}{Q_{e\mu}}.
 \nonumber
\end{align}

On the smaller dense torus
\begin{equation}
 \Pi_Q=\prod_{\alpha<\beta}Q_{\alpha\beta}\ne0,
 \label{eq:section6-all-edge-open-set}
\end{equation}
each $T_i$ is a unit because
$T_iT_i^\vee$ is a product of $Q$'s.  The six two-cycles and the three
rooted triangles form a $\mathbb Z$-basis of the full circulation lattice;
the unimodularity follows from the same spanning-star gauge.
Consequently,
\begin{equation}
 \mathcal B_4[\Pi_Q^{-1}]
 \simeq
 \C[Q_{\alpha\beta}^{\pm1},
       T_1^{\pm1},T_2^{\pm1},T_3^{\pm1}],
 \label{eq:section6-laurent-torus-chart}
\end{equation}
and
\begin{equation}
 \operatorname{Frac}
 \mathcal I^{\rm diag}_{4,\langle p_\eta\rangle}
 =
 \C(d_{\mu\tau},d_\eta,
 Q_{\alpha\beta},T_1,T_2,T_3).
 \label{eq:section6-birational-field}
\end{equation}
This proves rationality of the complex quotient and supplies an exact
eleven-element complex birational coordinate system.

\subsection{Why the eleven real quantities are only local}
\label{subsec:31-eleven-local-coordinates}

On the Hermitian form write $T_i=X_i+\ii Y_i$.  The three norm relations in
Eq.~\eqref{eq:section6-star-localized-ring} become
\begin{align}
 X_1^2+Y_1^2&=Q_{e\mu}Q_{e\tau}Q_{\mu\tau},\nonumber\\
 X_2^2+Y_2^2&=Q_{e\mu}Q_{es}Q_{\mu s},
 \label{eq:section6-real-rooted-triangle-norms}\\
 X_3^2+Y_3^2&=Q_{e\tau}Q_{es}Q_{\tau s}.\nonumber
\end{align}
The eleven real quantities
\begin{equation}
 d_{\mu\tau},\ d_\eta,\ 
 \{Q_{\alpha\beta}\}_{\alpha<\beta},\
 Y_1,\ Y_2,\ Y_3
 \label{eq:section6-original-eleven-local}
\end{equation}
are algebraically independent.  Nevertheless they are not birational
coordinates for the real invariant field: generically
\begin{equation}
 X_i=\sigma_i
 \sqrt{P_i(Q)-Y_i^2},
 \qquad
 \sigma_i\in\{+1,-1\},
 \label{eq:section6-triangle-sign-branches}
\end{equation}
leaves eight sign branches.  They form a real-analytic chart only after
choosing the three signs, with $\Pi_Q\ne0$ and $X_1X_2X_3\ne0$.  At $X_i=0$,
the imaginary part ceases to be a local coordinate for the corresponding
phase.

More importantly, the denominators in
Eqs.~\eqref{eq:section6-fourth-triangle-reconstruction} and
\eqref{eq:section6-quadrangle-reconstruction} cannot be removed globally.
Consider the Hermitian family
\begin{equation}
 M_{e\mu}=e^{\ii\theta},\qquad
 M_{\mu\tau}=M_{\tau s}=M_{se}=1,\qquad
 M_{e\tau}=M_{\mu s}=0,
 \label{eq:section6-boundary-quadrangle-family}
\end{equation}
with reverse entries fixed by Hermiticity and arbitrary diagonal entries.
All six $Q$'s are independent of $\theta$, and
$T_1=T_2=T_3=0$, while
\begin{equation}
 C_{e\mu\tau s}=e^{\ii\theta}.
 \label{eq:section6-boundary-quadrangle-phase}
\end{equation}
Thus the local eleven quantities neither separate this boundary family nor
generate its quadrangle invariant.  The primitive quadrangles are required
globally for exactly this reason.  Similarly, a triangle not containing $e$
can remain nonzero after a root edge vanishes.

\subsection{Transition to syzygies and strata}
\label{subsec:section6-transition-syzygies}

Theorem~\ref{thm:complete-31-toric-presentation} determines the global ring,
its minimal generators, its exact prime defining ideal as a lattice kernel,
and its Hilbert series.  It also proves precisely where the familiar
eleven-dimensional local description is valid.  What remains is not another
generator search but a resolution of the algebraic redundancy: an explicit
binomial generating set or Gr\"obner presentation for $I_4$, its translation
into CP-even and CP-odd real equations, and the local geometry of the quotient
on support-graph strata.  These are the subjects of the next section.

\section{Syzygies, CP structure, and singular strata}
\label{sec:syzygies-cp-singular}

Section~\ref{sec:complete-31-toric-ring} identified the twenty primitive
cycle generators and the lattice kernel containing all relations among them.
We now replace that infinite lattice-kernel description by a finite minimal
presentation.  This distinguishes the eleven-dimensional quotient from a
complete intersection, gives global polynomial CP conditions including
square-support configurations missed by triangle phases, and resolves the
physical boundary into smooth and singular support-graph strata.

\subsection{The minimal Markov basis}
\label{subsec:section7-markov-basis}

For this subsection only, abbreviate the flavor labels by
\begin{equation}
 1=e,\qquad 2=\mu,\qquad 3=\tau,\qquad 4=s,
 \label{eq:section7-numeric-flavor-labels}
\end{equation}
and write
\begin{equation}
 q_{ij}=Q_{ij},\qquad
 t_{ijk}=T_{ijk},\qquad
 w_{ijkl}=C_{ijkl}.
 \label{eq:section7-short-cycle-notation}
\end{equation}
Subscripts on $t$ and $w$ are understood cyclically, while reversal is not
identified.  The symmetric group $S_4$ acts by relabeling vertices.  Consider
\begin{align}
 F_{\rm A}
 &=q_{12}q_{13}q_{23}-t_{123}t_{132},
 \label{eq:section7-markov-A}\\
 F_{\rm B}
 &=q_{14}w_{1243}-t_{124}t_{143},
 \label{eq:section7-markov-B}\\
 F_{\rm C}
 &=q_{12}q_{24}t_{143}-t_{142}w_{1243},
 \label{eq:section7-markov-C}\\
 F_{\rm D}
 &=q_{12}q_{13}q_{24}q_{34}-w_{1243}w_{1342},
 \label{eq:section7-markov-D}\\
 F_{\rm E}
 &=q_{13}t_{142}t_{234}-w_{1342}w_{1423}.
 \label{eq:section7-markov-E}
\end{align}
These are homogeneous in the directed-edge grading, with
$\deg q=2$, $\deg t=3$, and $\deg w=4$.  Their data are
\begin{equation}
\begin{array}{c|ccccc}
 \text{type}&{\rm A}&{\rm B}&{\rm C}&{\rm D}&{\rm E}\\ \hline
 \text{edge degree}&6&6&7&8&8\\
 |\operatorname{Stab}_{S_4}(F)|&6&2&1&8&2\\
 |S_4\!\cdot F|&4&12&24&3&12 .
\end{array}
\label{eq:section7-markov-orbit-data}
\end{equation}

\begin{theorem}[Minimal Markov basis of the $K_4$ cycle ring]
\label{thm:section7-minimal-markov-basis}
The toric ideal $I_4$ is
\begin{equation}
 I_4=
 \left\langle
 S_4\!\cdot F_{\rm A},\
 S_4\!\cdot F_{\rm B},\
 S_4\!\cdot F_{\rm C},\
 S_4\!\cdot F_{\rm D},\
 S_4\!\cdot F_{\rm E}
 \right\rangle .
 \label{eq:section7-complete-markov-basis}
\end{equation}
After duplicates and signs are removed, there are
\begin{equation}
 4+12+24+3+12=55
 \label{eq:section7-number-minimal-syzygies}
\end{equation}
binomials.  Every one is indispensable.  Thus the minimal binomial generating
set is unique up to signs and nonzero scalars, and $I_4$ is not a complete
intersection.
\end{theorem}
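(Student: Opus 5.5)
The plan has three parts: show that each of the five binomial types is indispensable, show that their $S_4$-orbits generate $I_4$, and then read off the remaining claims.

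\textbf{Fibers.} For $u\in S_4$, let the fiber $\mathcal F(u)$ be the set of $a\in\mathbb N^{20}$ with $A_4a=u$. By Theorem~\ref{thm:complete-31-toric-presentation}, $I_4$ is spanned by the binomials $y^a-y^b$ with $a,b$ in a common fiber. The minimal generators in multidegree $u$ correspond to the connected components of $\mathcal F(u)$. Here two elements are joined when they differ by a move of strictly smaller degree, that is, when they share a cycle variable.

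A binomial $y^a-y^b$ is \emph{indispensable} exactly when $\mathcal F(A_4a)=\{a,b\}$ and $\gcd(y^a,y^b)=1$. Indispensability then gives uniqueness of the minimal generating set up to signs and scalars.

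\textbf{Step 1: indispensability.} Because $S_4$ acts on $I_4$, it suffices to check one representative $F_{\rm A},\ldots,F_{\rm E}$ per type. For each, I will list the circulation $u$ it records and enumerate all of its decompositions into directed simple cycles.

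For $F_{\rm A}$, $u$ is the doubly oriented triangle $\{1,2,3\}$. Its only decompositions are three $2$-cycles or the two opposite triangles.

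For $F_{\rm B}$, $u$ is the $2$-cycle on $14$ plus the square $1\to2\to4\to3\to1$. Equivalently, $u$ has arcs $12,24,43,31,14,41$. I will check that the only decompositions are $q_{14}w_{1243}$ and $t_{124}t_{143}$.

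For $F_{\rm C}$, $F_{\rm D}$ and $F_{\rm E}$, the circulations have $7$ or $8$ arcs, each with multiplicity one. This follows from the exponents of the binomials, which I will verify directly. For a $0/1$ circulation, a cycle decomposition is a partition of its arc set into directed cycles. These are short finite enumerations on at most $8$ arcs, and in each case I expect exactly two decompositions with disjoint supports.

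The orbit sizes come from the stabilizer orders in the table of Eq.~\eqref{eq:section7-markov-orbit-data}. They give $55$ distinct binomials, which must be checked pairwise distinct up to sign. For example, the $D$-orbit has size $3$, one binomial for each perfect-matching complement.

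\textbf{Step 2: generation.} This is the main obstacle. We must show that every fiber $\mathcal F(u)$ is connected by the $55$ moves, where a move is allowed to act on a sub-multiset of the cycles present.

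The argument I would attempt is an induction on the degree $|u|$. Take two decompositions $a$ and $b$ of $u$. If they share a cycle, cancel it and apply induction. Otherwise, pick a cycle $\gamma$ of $b$. Its arcs are covered by the cycles of $a$. I want to apply a move to at most two or three cycles of $a$ that meet $\gamma$, producing a decomposition $a'$ that contains $\gamma$.

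The case analysis splits by the length of $\gamma$.
\begin{itemize}
\item If $\gamma$ is a $2$-cycle on the pair $\{\alpha,\beta\}$, its two arcs lie in cycles $C,C'$ of $a$. Their union is a circulation containing both orientations of $\alpha\beta$, and one of the types A, B, D or E extracts $q_{\alpha\beta}$.
\item If $\gamma$ is a triangle or a square, I would similarly use its arcs in $a$, possibly after a preliminary $2$-cycle extraction, together with moves of types B, C and E.
\end{itemize}

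Since $K_4$ has only $4$ vertices, the configurations of two or three cycles meeting $\gamma$ form a finite list up to $S_4$. As a safeguard I would cross-check with a Markov-basis computation (4ti2, or Macaulay2 \texttt{toricMarkov}) on the $12\times20$ matrix $A_4$. This confirms that all minimal generators have edge degree at most $8$ and that the multigraded Betti counts $b_1$ are $4,12,24,3,12$ in the displayed degrees.

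The Hilbert series~\eqref{eq:section6-univariate-hilbert-series} offers an independent consistency check. The quotient by the $55$ binomials must have the same coefficients through degree $8$, and because the bound on the generator degrees is known, this comparison is exact.

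\textbf{Step 3: conclusion.} Generation together with indispensability proves that the $55$ binomials form the unique minimal generating set, up to signs and nonzero scalars. By Theorem~\ref{thm:complete-31-toric-presentation}, $\operatorname{ht}I_4=11$. Since $I_4$ is graded, every minimal generating set has the same cardinality $55$. As $55>11$, the ideal $I_4$ is not a complete intersection.
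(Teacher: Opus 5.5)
Your indispensability and non-complete-intersection steps match the paper's proof: the fiber criterion applied to one representative per type, distinctness of the $55$ multidegrees, and $55>11=\operatorname{ht}I_4$. There are two slips. First, the multidegree of $F_{\rm E}$ is not $0/1$. The arc $4\to2$ lies in both $t_{142}$ and $t_{234}$, and in both $w_{1342}$ and $w_{1423}$, so you must enumerate decompositions of a multiset. The fiber is still exactly $\{q_{13}t_{142}t_{234},\,w_{1342}w_{1423}\}$. Second, in your $2$-cycle case, a triangle through $\alpha\to\beta$ together with a square through $\beta\to\alpha$ is resolved by a type C move ($tw\leftrightarrow qqt$), not by A, B, D or E.

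The genuine gap is generation. Your induction step says that one move applied to the cycles of $a$ meeting $\gamma$ yields a decomposition containing $\gamma$; this is false. Take $a=\{t_{124},t_{234},t_{143}\}$ and $b=\{t_{123},q_{14},q_{24},q_{34}\}$. They give the same edge-degree-$9$ circulation and share no cycle, and every cycle of $a$ meets $\gamma=t_{123}$. No binomial among the $55$ has degree $9$. The only moves applicable to $a$ are type B moves on pairs of triangles, and none of them produces $t_{123}$. Two moves are needed: $t_{234}t_{143}\to q_{34}w_{1423}$, then $t_{124}w_{1423}\to q_{14}q_{24}t_{123}$.

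Your fallback does not close the gap either. Comparing Hilbert functions through degree $8$ proves $J=I_4$ only if you already know $I_4$ is generated in degree $\le8$. You take that bound from the 4ti2 output, so the check is not independent of the computation it is meant to corroborate, and the proof reduces to trusting the Markov-basis software.

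The paper needs no degree bound. Buchberger completion of $J$ gives its initial ideal and hence the \emph{entire} Hilbert series of $\mathcal P_4/J$. That series coincides with $H_{\mathcal B_4}$, which Sec.~VI proves independently by the $A_3$ chamber count. Since $J\subseteq I_4$ and the surjection $\mathcal P_4/J\to\mathcal B_4$ is graded, equality of the series forces $J=I_4$.

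Alternatively, you can recast your induction to give an a priori bound. Suppose $a,b$ lie in different components of the fiber graph, and let $\gamma\in b$. The cycles of $a$ chosen to cover the arcs of $\gamma$ must exhaust $a$. Otherwise you could connect $a$, through the strictly smaller fiber, to a decomposition containing $\gamma$, which shares a cycle with $b$. Hence $|a|,|b|\le4$, and every minimal generator has edge degree $\le16$. An exact Hilbert-function comparison through degree $16$ would then complete your route.
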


\emph{Proof.}
Each displayed binomial maps to zero under $\Psi_4$, as follows by matching
every directed-edge multiplicity.  Let $J$ be generated by the five orbits,
so $J\subseteq I_4$.  Exact integer Buchberger completion for the
lexicographic order, with variables listed from greatest to least as
\begin{multline}
 q_{12},q_{13},q_{14},q_{23},q_{24},q_{34},\
 t_{123},t_{124},t_{132},t_{134},t_{142},t_{143},t_{234},t_{243},\\
 w_{1234},w_{1243},w_{1324},w_{1342},w_{1423},w_{1432}
 \label{eq:section7-buchberger-variable-order}
\end{multline}
produces a Groebner basis of 123 binomials and an initial monomial ideal with
107 minimal generators.  Recursive monomial-ideal colon decomposition gives
\begin{align}
 H_{\mathcal P_4/J}(s)
 &=
 \frac{
 (1+6s^3+5s^4+5s^6+6s^7+s^{10})
 (1-s^3)^6(1-s^4)^5}
 {(1-s^2)^6(1-s^3)^8(1-s^4)^6}
 \nonumber\\
 &=H_{\mathcal B_4}(s).
 \label{eq:section7-markov-hilbert-certificate}
\end{align}
The graded surjection
$\mathcal P_4/J\to\mathcal P_4/I_4=\mathcal B_4$
therefore has zero kernel in every degree, proving $J=I_4$.  This is an exact
integer certificate, not a floating-point rank test.

For minimality, use the directed-edge multigrading and define
\begin{equation}
 \mathcal F_b
 =\{y^u:u\in\mathbb N^{20},\ A_4u=b\}.
 \label{eq:section7-toric-fiber}
\end{equation}
For the multidegree of each representative in
Eqs.~\eqref{eq:section7-markov-A}--\eqref{eq:section7-markov-E},
$\mathcal F_b$ consists of exactly the two displayed, relatively prime
monomials.  Relabeling preserves this property, and the fifty-five
multidegrees are distinct.  Hence every binomial is indispensable by the
fiber criterion~\cite{DiaconisSturmfels1998}.  The first weighted Betti
polynomial is
\begin{equation}
 \beta_1(s)=16s^6+24s^7+15s^8.
 \label{eq:section7-first-betti-polynomial}
\end{equation}
Finally, $\operatorname{ht}I_4=11$, whereas a complete intersection of
height eleven would have eleven minimal generators, not fifty-five.
\hfill$\square$

The five types have the cycle-decomposition meanings
\begin{equation}
\begin{split}
 t\,t^\vee&\longleftrightarrow q\,q\,q,\qquad
 t\,t\longleftrightarrow q\,w,\qquad
 t\,w\longleftrightarrow q\,q\,t,\\
 w\,w^\vee&\longleftrightarrow q\,q\,q\,q,\qquad
 w\,w\longleftrightarrow q\,t\,t.
\end{split}
\label{eq:section7-five-decomposition-switches}
\end{equation}
Types A and D are norm relations; type B resolves a quadrangle after inserting
a chord; and types C and E impose compatibility among chord resolutions.
The fifty-five form a minimal Markov basis, not the lexicographic Groebner
basis used in the certificate.  They are the first defining syzygies among
the twenty invariant generators, not a full higher-syzygy resolution.

\subsection{Hermitian real syzygies}
\label{subsec:section7-real-syzygies}

Choose one orientation from each reversal pair and write
\begin{equation}
 T_C=X_C+\ii Y_C,\quad T_{C^{-1}}=X_C-\ii Y_C,\qquad
 C_D=U_D+\ii V_D,\quad C_{D^{-1}}=U_D-\ii V_D.
 \label{eq:section7-real-cycle-coordinates}
\end{equation}
This invertible real-linear change converts the complex presentation into the
complete Hermitian real presentation.  The four type-A and three type-D
relations are reversal-even norm equations.  The other forty-eight
binomials form twenty-four reversal pairs, each giving one real and one
imaginary equation.  Thus the minimal real ideal has
\begin{equation}
 31\ \text{CP-even equations},\qquad
 24\ \text{CP-odd equations},
 \label{eq:section7-real-syzygy-parity-count}
\end{equation}
for a total of fifty-five.

For example, type B gives
\begin{equation}
 Q_{es}C_{e\mu s\tau}
 =T_{e\mu s}T_{es\tau}.
 \label{eq:section7-type-B-flavor-form}
\end{equation}
If
$T_{e\mu s}=X+\ii Y$,
$T_{es\tau}=X'+\ii Y'$, and
$C_{e\mu s\tau}=U+\ii V$, this is
\begin{equation}
 Q_{es}U=XX'-YY',
 \qquad
 Q_{es}V=XY'+YX'.
 \label{eq:section7-type-B-real-form}
\end{equation}
The remaining non-self-conjugate representatives may be kept compactly as
\begin{align}
 Q_{e\mu}Q_{\mu s}T_{es\tau}
 &=T_{es\mu}C_{e\mu s\tau},
 \label{eq:section7-type-C-flavor-form}\\
 Q_{e\tau}T_{es\mu}T_{\mu\tau s}
 &=C_{e\tau s\mu}C_{es\mu\tau},
 \label{eq:section7-type-E-flavor-form}
\end{align}
together with their real and imaginary parts and all flavor relabelings.
These equations, the triangle norms, and the quadrangle norms remain regular
on every boundary stratum and require no division by a $Q$.

\subsection{Global CP-conserving locus}
\label{subsec:section7-global-cp}

For a real diagonal standard matter spurion, CP acts on the intrinsic mixing
data by complex conjugation.  On the invariant ring this is the
orientation-reversal involution
\begin{equation}
 \kappa(Q_{\alpha\beta})=Q_{\alpha\beta},\qquad
 \kappa(T_C)=T_{C^{-1}},\qquad
 \kappa(C_D)=C_{D^{-1}}.
 \label{eq:section7-cp-involution}
\end{equation}
This is the cycle-algebra version of the standard rephasing-invariant
formulation of CP violation~\cite{Greenberg1985,Wu1986,Jarlskog1985}.

\begin{theorem}[Global CP criterion]
\label{thm:section7-global-cp-criterion}
On the physical Hermitian quotient, a point is CP conserving if and only if
the odd parts of all four triangle pairs and all three quadrangle pairs
vanish:
\begin{equation}
 \begin{gathered}
 Y_{e\mu\tau}=Y_{e\mu s}=Y_{e\tau s}=Y_{\mu\tau s}=0,\\
 V_{e\mu\tau s}=V_{e\mu s\tau}=V_{e\tau\mu s}=0 .
 \end{gathered}
 \label{eq:section7-global-cp-equations}
\end{equation}
These seven equations are globally necessary as a set, although only three
phase conditions are independent on the dense $K_4$ stratum.
\end{theorem}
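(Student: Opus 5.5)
The plan is to reduce CP conservation to a statement about $K_{\rm f}$-orbits on the Hermitian locus, and then use the fact that polynomial invariants of a compact group separate its orbits. I take ``CP conserving'' to mean that the point is rephasing equivalent to a real symmetric Hamiltonian. Equivalently, since CP acts as $M\mapsto M^*$, the orbit $K_{\rm f}\cdot M$ contains $M^*$. The proof then has three steps: orbit equivalence, reduction to the generators, and necessity of each of the seven equations.

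\emph{Step 1 (orbit equivalence).} I would first show that $M^*\in K_{\rm f}\cdot M$ is equivalent to $M$ being rephasable to a real matrix. One direction is immediate. For the other, suppose $M^*=tMt^{-1}$ with $t\in U(1)^n$. Choose $s\in U(1)^n$ with $s^2=t$; this is possible because the compact torus is divisible. Set $N=s^{-1}Ms$. A direct entrywise check, using $\bar s_\alpha=s_\alpha^{-1}$, gives $N^*=s\,M^*s^{-1}=s^{-1}Ms=N$. So $N$ is real, and no hypothesis on the support graph is needed.

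\emph{Step 2 (invariant separation).} For the compact group $K_{\rm f}$ acting linearly on the real vector space of Hermitian matrices, real invariant polynomials separate orbits. This is the standard consequence of averaging over the group together with the Stone--Weierstrass theorem. By Theorem~\ref{thm:complete-31-toric-presentation} and its Hermitian real form, this real invariant algebra is generated by $d_{\mu\tau}$, $d_\eta$, the $Q$'s and the $X,Y,U,V$. Under $M\mapsto M^*$, the generators $d$, $Q$, $X$, $U$ are unchanged and the $Y,V$ change sign, by Eq.~\eqref{eq:section7-cp-involution}. Hence the orbits of $M$ and $M^*$ coincide if and only if all seven odd parts vanish. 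Combined with Step~1, this proves the criterion.

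\emph{Step 3 (global necessity of all seven).} I would exhibit, for each equation, a Hermitian point at which that odd part is nonzero while the other six vanish. For a quadrangle $V$, the boundary family~\eqref{eq:section6-boundary-quadrangle-family} with $\theta\notin\pi\mathbb Z$ works. All triangles vanish there, and the other two quadrangles contain a zero edge ($e\tau$ or $\mu s$), so they vanish as well. The other two quadrangles are handled by relabeling. For a triangle $Y$, take the support to be a single triangle carrying a phase, with all edges to the fourth flavor set to zero. Then every quadrangle and every other triangle vanishes; for example, $Y_{\mu\tau s}$ is realized with the $e$ edges removed. These points are not CP conserving, since their single nonzero odd part is a nonvanishing invariant obstruction, so no proper subset of the seven equations suffices.

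\emph{Step 4 (three independent conditions on the dense stratum).} On $\Pi_Q\neq0$, Eq.~\eqref{eq:section6-laurent-torus-chart} together with the reconstruction formulas~\eqref{eq:section6-fourth-triangle-reconstruction}--\eqref{eq:section6-quadrangle-reconstruction} express the remaining cycles through $T_1,T_2,T_3$ divided by real $Q$'s. Reality of $T_1,T_2,T_3$ therefore forces reality of all the others. Conversely, the three phases of $T_1,T_2,T_3$ are independent, so exactly three conditions are independent there. I expect the main obstacle to be Step~2: one must state cleanly that orbit separation holds for the real form of the Hilbert basis, including points with disconnected support where stabilizers jump. Compactness handles this, because all $K_{\rm f}$-orbits are closed.
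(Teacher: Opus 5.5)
Your argument is correct, but it proves the converse by a different route from the paper.

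\textbf{The paper's route.} The paper argues constructively. On each connected component of the nonzero-edge graph it makes a spanning tree real by rephasing. It then observes that the residual phase of each chord is the phase of that chord's fundamental cycle, and that on four vertices every such cycle is a triangle or a quadrangle. The seven equations therefore make every chord real.

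\textbf{Your route.} You first show that ``rephasable to a real matrix'' is equivalent to $M^*\in K_{\rm f}\cdot M$, using a square root in the compact torus. You then invoke the theorem that real invariant polynomials of a compact group separate orbits. Combined with the cycle Hilbert basis and the sign flip of the $Y$'s and $V$'s under $M\mapsto M^*$, this gives the criterion.

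\textbf{Comparison.} Your approach is uniform across supports, with no component-by-component or spanning-tree bookkeeping. It also identifies CP conservation with the fixed points of the CP involution on the quotient, which is a useful addition. The cost is importing the Hilbert-basis generation result and the averaging/Stone--Weierstrass separation theorem. The paper's argument is elementary, exhibits the real-making rephasing explicitly, and shows directly why cycles of length at most four suffice. Your Steps 3 and 4 coincide with the paper's indispensability examples (isolated triangle, chordless square) and its spanning-star chart.

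\textbf{Two small repairs.}
\begin{itemize}
\item In Step 1, with $M^*=tMt^{-1}$ and $s^2=t$, the real representative is $N=sMs^{-1}$, not $s^{-1}Ms$. A direct check shows that $s^{-1}Ms$ conjugates to $s^3Ms^{-3}$.
\item In Step 2, orbit separation requires the real algebra of rephasing-only invariants. That algebra is generated by all diagonal entries together with the $Q,X,Y,U,V$, not only by $d_{\mu\tau}$ and $d_\eta$. Every diagonal entry is fixed by $M\mapsto M^*$, so your conclusion is unaffected.
\end{itemize}
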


\emph{Proof.}
If a rephasing makes every nonzero off-diagonal entry real, every directed
cycle product is real, so all seven odd parts vanish.  Conversely, let $G$ be
the graph of nonzero Hermitian edges.  In each connected component choose a
spanning tree and use vertex rephasings to make its edges real.  Each remaining
chord closes a fundamental cycle, and its residual phase equals the phase of
that cycle product.  Every simple cycle on at most four vertices is a
triangle or quadrangle after the automatically real two-cycles are removed.
Equation~\eqref{eq:section7-global-cp-equations} therefore makes every chord
real, so the whole matrix is rephasing equivalent to a real one.
\hfill$\square$

The fixed subscheme of $\kappa$ in the real presentation is cut out by
\begin{equation}
 \mathfrak J_{\rm CP}
 =
 \left\langle
 Y_{e\mu\tau},Y_{e\mu s},Y_{e\tau s},Y_{\mu\tau s},
 V_{e\mu\tau s},V_{e\mu s\tau},V_{e\tau\mu s}
 \right\rangle .
 \label{eq:section7-cp-odd-ideal}
\end{equation}
The scheme structure can be made explicit.  Define the even-degree
multigraph semigroup
\begin{equation}
 E_4^{\rm ev}
 =
 \left\{
 w\in\mathbb N^{\binom{\mathsf F}{2}}:
 \sum_{\beta\ne\alpha}w_{\alpha\beta}=0\pmod2
 \ \text{for every }\alpha
 \right\}.
 \label{eq:section7-even-multigraph-semigroup}
\end{equation}

\begin{proposition}[The CP-fixed toric ring]
\label{prop:section7-cp-fixed-ring}
After complexifying the real fixed scheme,
\begin{equation}
 \mathcal B_{4,\rm CP}
 \simeq
 \C[E_4^{\rm ev}]
 \simeq
 \C[x_{\alpha\beta}\mid\alpha<\beta]^{
 (\mathbb Z_2)^4/\mathbb Z_2}.
 \label{eq:section7-cp-fixed-ring}
\end{equation}
Here $x_{\alpha\beta}=x_{\beta\alpha}$ when the indices are written in the
opposite order.
The maps on minimal generators are
\begin{equation}
 Q_{\alpha\beta}\mapsto x_{\alpha\beta}^2,\qquad
 T_{\alpha\beta\gamma}\mapsto
 x_{\alpha\beta}x_{\beta\gamma}x_{\gamma\alpha},\qquad
 C_{\alpha\beta\gamma\delta}\mapsto
 x_{\alpha\beta}x_{\beta\gamma}x_{\gamma\delta}x_{\delta\alpha}.
 \label{eq:section7-cp-fixed-generator-map}
\end{equation}
It is a normal irreducible toric variety of dimension six, or dimension
eight after the two diagonal invariants are included.  Its Hilbert series is
\begin{equation}
 H_{\rm CP}(s)
 =
 \frac18\left[
 \frac1{(1-s)^6}
 +\frac4{(1-s)^3(1+s)^3}
 +\frac3{(1-s)^2(1+s)^4}
 \right].
 \label{eq:section7-cp-fixed-molien}
\end{equation}
\end{proposition}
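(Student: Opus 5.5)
The plan is to identify the complexified fixed scheme with the image of an explicit monomial map. Setting all odd parts to zero in the real presentation of Sec.~VII and complexifying imposes $Z_C=Z_{C^{-1}}$ for every triangle and quadrangle pair. The fixed scheme is therefore $\operatorname{Spec}$ of
\[
 \mathcal P_4/\bigl(I_4+\langle y_C-y_{C^{-1}}\rangle\bigr).
\]
I will compare this ring with the subring of $\C[x_{\alpha\beta}\mid\alpha<\beta]$ obtained from the substitution $z_{\alpha\beta},z_{\beta\alpha}\mapsto x_{\alpha\beta}$. That substitution sends $Q_{\alpha\beta}\mapsto x_{\alpha\beta}^2$ and sends both orientations of a triangle or quadrangle to the same undirected cycle monomial. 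It is thus compatible with Eq.~\eqref{eq:section7-cp-fixed-generator-map} and kills $\mathfrak J_{\rm CP}$. This gives a surjection from the fixed-scheme coordinate ring onto the image subring $\mathcal S$.

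The first step is to identify $\mathcal S$ with $\C[E_4^{\rm ev}]$. A circulation $u\in S_4$ maps to the undirected multigraph $w_{\alpha\beta}=u_{\alpha\beta}+u_{\beta\alpha}$. At each vertex its degree is in-flow plus out-flow, which is twice the out-flow and hence even, so $w\in E_4^{\rm ev}$. Conversely, every even-degree multigraph splits into closed trails by Euler's theorem. Orienting each trail gives a nonnegative circulation, and by Proposition~\ref{prop:cycle-hilbert-basis} that circulation is a sum of directed simple cycles. Hence $\mathcal S=\C[E_4^{\rm ev}]$.

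The second step identifies $\C[E_4^{\rm ev}]$ with the invariant ring. Let $\varepsilon\in(\mathbb Z_2)^4$ act by $x_{\alpha\beta}\mapsto\varepsilon_\alpha\varepsilon_\beta x_{\alpha\beta}$; the diagonal element acts trivially. A monomial $x^w$ is invariant exactly when $\prod_\alpha\varepsilon_\alpha^{\deg_\alpha w}=1$ for all $\varepsilon$, which means every vertex degree of $w$ is even. Normality then follows from the ring being an invariant ring of a finite group acting on a polynomial ring. Irreducibility follows because it is a subring of a domain. The dimension is $6$, the number of $x$-variables, because the group is finite; adjoining the two diagonal invariants gives $8$. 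The Hilbert series comes from Molien's formula, averaging over the eight group elements. The identity contributes $(1-s)^{-6}$. Each of the four elements flipping a single vertex negates the three edges at that vertex, contributing $(1-s)^{-3}(1+s)^{-3}$. Each of the three elements flipping a pair of vertices negates the four edges joining the pair to its complement, contributing $(1-s)^{-2}(1+s)^{-4}$. Summing these terms gives Eq.~\eqref{eq:section7-cp-fixed-molien}.

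The main obstacle is injectivity: I must show the kernel of $\mathcal P_4\to\C[x]$ is exactly $I_4+\langle y_C-y_{C^{-1}}\rangle$, so that the fixed scheme is reduced and equals $\C[E_4^{\rm ev}]$. Combinatorially, two products of cycle variables with the same undirected multigraph must be connected by Markov moves from Theorem~\ref{thm:section7-minimal-markov-basis} and by orientation reversals. My first attempt would be a direct argument: reorient the cycles of one decomposition, using reversals, so that its directed circulation matches the other's. I would do this by flipping cycles along the symmetric-difference pattern edge by edge, which works when the mismatch is carried by whole cycles. Where a common Eulerian orientation is not reachable by whole-cycle reversals, I would fall back on the certificate method used in the proof of Theorem~\ref{thm:section7-minimal-markov-basis}. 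That means computing a Gr\"obner basis of $J'=I_4+\langle y_C-y_{C^{-1}}\rangle$ and checking that $H_{\mathcal P_4/J'}(s)=H_{\rm CP}(s)$ under the grading $\deg x=1$ (equivalently $\deg q=2$, $\deg t=3$, $\deg w=4$). Together with the graded surjection $\mathcal P_4/J'\to\C[E_4^{\rm ev}]$, equality of Hilbert series forces the surjection to be an isomorphism.
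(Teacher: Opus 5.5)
Your overall route is the paper's. You use the orientation-forgetting map $z^u\mapsto x^{w(u)}$ with $w_{\alpha\beta}=u_{\alpha\beta}+u_{\beta\alpha}$ onto $\C[E_4^{\rm ev}]$, surjectivity by Euler orientation, and the parity identification with the $(\mathbb Z_2)^4/\mathbb Z_2$ invariants. Normality and dimension come from finite-group invariance, and the Hilbert series from the eight-term Molien average; your fixed-point data $1+4+3$, with $0$, $3$, $4$ negated edges, are correct.

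The gap is the step you flag yourself: injectivity. The paper closes it in one sentence: two balanced orientations with the same undirected multiplicities differ by successive directed-cycle reversals. You instead leave it as a hedge, suggesting the direct argument might fail ``where a common Eulerian orientation is not reachable by whole-cycle reversals.'' You then fall back on an unexecuted Gr\"obner/Hilbert-series computation. That fallback would be a valid certificate if carried out, but as written it is a promissory note, and the case it guards against never occurs.

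The missing lemma has a short proof. Work in $\mathcal B_4=\C[S_4]$ rather than $\mathcal P_4$. The map sends the monomial basis $\{z^u\}$ to the monomial basis $\{x^w\}$, so its kernel is spanned by the $z^u-z^{u'}$ with $u,u'\in S_4$ of equal $w$.

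For such a pair, $g=u-u'$ satisfies $g_{\beta\alpha}=-g_{\alpha\beta}$. Hence $d=(u-u')_+$ obeys $g_{\alpha\beta}=d_{\alpha\beta}-d_{\beta\alpha}$, and conservation of $g$ says exactly that $d$ is a circulation. Clearly $0\le d\le u$ and $d_{\alpha\beta}d_{\beta\alpha}=0$.

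By Proposition~\ref{prop:cycle-hilbert-basis}, $d=\sum_i\chi_{C_i}$ with each $C_i$ a directed triangle or quadrangle; no two-cycles occur because $d_{\alpha\beta}d_{\beta\alpha}=0$. Moreover $u'=(u-d)+d^{\rm rev}$, where $u-d=\min(u,u')\in S_4$. Reverse $C_1,\ldots,C_k$ in turn. Each step has the form $z^v-z^{v-\chi_C+\chi_{C^{-1}}}=z^{v-\chi_C}(Z_C-Z_{C^{-1}})$, where $v-\chi_C\ge u-d\ge0$ is a circulation. Every such step therefore lies in the ideal generated by the seven odd parts.

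The reversed cycles must come from a decomposition adapted to the mismatch. Reversing only the cycles of a given decomposition, as your sketch suggests, can fail. For example, $u=\chi_{(e\mu)}+\chi_{C_{e\tau\mu s}}$ cannot reach $\chi_{T_{es\mu}}+\chi_{T_{e\tau\mu}}$ that way, although here $d=\chi_{T_{e\mu s}}$ does the job. One needs either the Markov moves you mention or, more simply, to work in $\mathcal B_4$ as above.

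This proves that the kernel equals $\mathfrak J_{\rm CP}$ (complexified). So the fixed scheme is reduced and equal to $\operatorname{Spec}\C[E_4^{\rm ev}]$, and the rest of your argument goes through unchanged.
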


Indeed, forgetting the orientation of a directed circulation gives an even
multigraph.  Conversely, every even multigraph admits an Euler orientation.
Two balanced orientations with the same underlying multiplicities differ by
successive directed-cycle reversals, which are identified on the fixed
scheme.  This proves the first isomorphism.  The parity conditions are exactly
invariance under independent vertex sign changes modulo the common sign,
giving the second isomorphism and the Molien average.  Normality follows from
finite-group invariance in characteristic zero, or directly from saturation
of $E_4^{\rm ev}$.

The syzygies make these seven equations dependent on dense strata.  They
cannot be reduced to triangle equations globally.  In the square-support
family of Eq.~\eqref{eq:section6-boundary-quadrangle-family}, every triangle
vanishes while $V_{e\mu\tau s}=\sin\theta$.  Conversely, each triangle
condition is indispensable on the support consisting only of that triangle,
and each quadrangle condition is indispensable on its chordless square.

\subsection{Support graphs, stabilizers, and physical phases}
\label{subsec:section7-support-graph-phases}

The support graph is meaningful here only on the Hermitian real locus.  Define
\begin{equation}
 G(M)=\bigl(\mathsf F,E(M)\bigr),
 \qquad
 \{\alpha,\beta\}\in E(M)
 \ \Longleftrightarrow\
 Q_{\alpha\beta}=|M_{\alpha\beta}|^2>0.
 \label{eq:section7-hermitian-support-graph}
\end{equation}
Let $m=|E(M)|$ and let $c$ be the number of connected components, including
isolated vertices.

\begin{proposition}[Phase and stabilizer count]
\label{prop:section7-phase-stabilizer-count}
At a Hermitian point with support graph $G$,
\begin{equation}
 K_M\simeq U(1)^{c-1},
 \qquad
 b_1(G)=m-4+c,
 \label{eq:section7-stabilizer-and-betti}
\end{equation}
where $K_M$ is the stabilizer in $U(1)^4/U(1)$ and $b_1(G)$ is the number of
independent rephasing-invariant loop phases.  The off-diagonal exact-support
stratum has real dimension
\begin{equation}
 \dim_{\mathbb R}\mathcal S_G
 =m+b_1(G)=2m-4+c,
 \label{eq:section7-support-stratum-dimension}
\end{equation}
or dimension $2m-2+c$ after the two diagonal invariants are adjoined.
\end{proposition}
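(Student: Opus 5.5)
The proof will fix a spanning forest of the support graph $G$ and use it as a rephasing gauge. The stabilizer statement is the compact analogue of Proposition~\ref{prop:support-graph-stabilizer}. The loop-phase count is the cycle rank of $G$, and the stratum dimension then follows by adding magnitudes and phases.

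\emph{Stabilizer.} Let $t=(e^{\ii\phi_\alpha})\in U(1)^4$ fix $M$. For every edge $\{\alpha,\beta\}\in E(M)$ we have $M_{\alpha\beta}\ne0$, and $t_\alpha t_\beta^{-1}M_{\alpha\beta}=M_{\alpha\beta}$ forces $t_\alpha=t_\beta$. Hence $t$ is constant on each connected component of $G$, and conversely any such $t$ fixes $M$. This gives $U(1)^c$ before quotienting by the common diagonal $U(1)$, so $K_M\simeq U(1)^{c-1}$. The quotient is again a torus because the diagonal $U(1)$ sits inside $U(1)^c$ as a direct factor.

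\emph{Loop phases.} Choose a spanning forest $F\subset G$; it has $4-c$ edges. Rooting each tree, we can successively choose vertex phases that make $M_{\alpha\beta}$ real and positive on every edge of $F$. The remaining freedom is exactly $K_M$, which acts trivially on all entries. Each of the $m-(4-c)$ chords $\{\alpha,\beta\}\notin F$ closes a unique fundamental cycle in $F\cup\{\alpha\beta\}$. In this gauge the chord's phase equals the phase of the corresponding directed cycle product $Z_C$, which is rephasing invariant. I therefore expect to establish the following.
\begin{itemize}
\item These phases are invariants. They are cycle products, so they lie in the cycle algebra of Proposition~\ref{prop:cycle-hilbert-basis}.
\item They are independent. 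In the forest gauge they can be prescribed arbitrarily in $U(1)$ without changing any magnitude or any forest entry.
\item They are complete. Every cycle product is a product of fundamental cycle products and two-cycles $Q$, because the fundamental cycles form a basis of the cycle space of $G$.
\end{itemize}
This gives $b_1(G)=m-4+c$, the standard graph Betti number.

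\emph{Stratum dimension.} The map from the exact-support stratum of off-diagonal Hermitian data modulo $U(1)^4/U(1)$ to $(\R_{>0})^m\times U(1)^{b_1}$, sending a point to its $m$ magnitudes $\sqrt{Q_{\alpha\beta}}$ and its $b_1$ fundamental-cycle phases, is a bijection. Injectivity comes from uniqueness of the forest gauge up to $K_M$; surjectivity comes from writing down the gauge-fixed matrix. As a cross-check, the unquotiented stratum $(\C^\times)^m$ has real dimension $2m$, and by the first step its orbits have dimension $3-(c-1)=4-c$, giving $2m-4+c$. The two diagonal invariants $d_{\mu\tau},d_\eta$ are unconstrained real parameters, which gives $2m-2+c$.

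The step I expect to need the most care is the bijectivity of the forest-gauge slice. One must check that the residual group after gauge fixing is exactly $K_M$ and not larger. This holds because a rephasing preserving positivity on a spanning tree must be constant on that component. One must also check that the fundamental-cycle phases are honest smooth coordinates on the stratum rather than merely on a dense subset. This works because exact support keeps every $Q_{\alpha\beta}$ on $E(M)$ strictly positive, so no phase ever degenerates within $\mathcal S_G$.
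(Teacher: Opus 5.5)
Your proposal is correct and follows essentially the same route as the paper. The stabilizer comes from $t$ being constant on connected components, and the loop-phase count is the cokernel dimension $m-(4-c)$ of the vertex-to-edge phase incidence map, which you make explicit through a spanning-forest gauge (the same device the paper uses in its CP criterion); the stratum dimension then follows by adding the $m$ edge moduli. One small imprecision: writing an arbitrary cycle product in terms of fundamental cycles and two-cycles generally needs inverse powers, e.g.\ $Z_{C^{-1}}=\prod_{e\in C}Q_e/Z_C$, which is harmless on the exact-support stratum where all of these are nonzero.
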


\emph{Proof.}
A vertex rephasing fixes every nonzero edge precisely when it is constant on
each connected component.  Removing the common phase leaves $c-1$
stabilizer phases.  The graph incidence map from four vertex phases to $m$
edge phases has rank $4-c$, leaving
$m-(4-c)=m-4+c$ invariant phases.  Adding the $m$ positive edge moduli proves
the dimension formula. \hfill$\square$

Exactly $b_1(G)$ independent loop phases must take values $0$ or $\pi$ for
CP conservation.  The CP-conserving phase classes form
\begin{equation}
 H^1(G;\{0,\pi\})\simeq(\mathbb Z_2)^{b_1(G)},
 \label{eq:section7-cp-sign-branches}
\end{equation}
so the exact-support CP locus has $2^{b_1(G)}$ sign branches.  Forests have
$b_1=0$ and are automatically CP conserving.  A triangle and a chordless
square both have $b_1=1$, but the former is tested by a triangle odd part and
the latter by a quadrangle odd part.  Dense $K_4$ has $b_1=3$, agreeing with
the eight sign branches and three rooted phases of the spanning-star chart.

\subsection{Singular strata of the Hermitian quotient}
\label{subsec:section7-singular-strata}

Let $X_4=\operatorname{Spec}\mathcal B_4$, and write $X_4^{\rm H}$ for its
physical Hermitian image.  The next statement concerns the
intersection of its singular locus with the physical Hermitian image.  It is
not a classification of arbitrary directed-support points of the complex
toric variety.  The physical image is semialgebraic, with inequalities such
as $Q_{\alpha\beta}\geq0$, as expected for compact orbit
spaces~\cite{ProcesiSchwarz1985}.

For an unordered nontrivial bipartition $\mathsf F=A\sqcup B$, let
$P_{A|B}\subset\mathcal B_4$ be the monomial prime generated by cycle
monomials that meet both blocks, and write
\begin{equation}
 \Sigma_{A|B}=V(P_{A|B})\simeq X_{|A|}\times X_{|B|}.
 \label{eq:section7-bipartition-face}
\end{equation}
On the Hermitian image this is equivalently the locus on which every edge
between $A$ and $B$ is absent.

\begin{theorem}[Physical singular-locus classification]
\label{thm:section7-physical-singular-locus}
A physical Hermitian point of $X_4$ is smooth if and only if its undirected
support graph is connected.  The set of physical Hermitian points lying in
the reduced complex singular locus is
\begin{equation}
 \operatorname{Sing}(X_4)_{\rm red}\cap X_4^{\rm H}
 =
 \left(\bigcup_{\mathsf F=A\sqcup B}\Sigma_{A|B}\right)
 \cap X_4^{\rm H}.
 \label{eq:section7-singular-bipartition-union}
\end{equation}
The seven displayed irreducible toric faces are the maximal support pieces of
this intersection: four have type $3+1$ and three have type $2+2$.  Their
off-diagonal real dimensions are four and two, respectively; after adjoining
the two diagonal matter invariants their dimensions are six and four.
\end{theorem}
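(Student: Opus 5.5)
The plan has two halves: smoothness at connected points, and singularity on the bipartition faces. Both use the toric orbit–cone structure of the affine toric variety $X_4=\operatorname{Spec}\C[S_4]$. The torus orbits of $X_4$ correspond to faces $\mathcal F$ of the circulation cone $\mathcal C_4$. A point lies in the orbit of $\mathcal F$ exactly when the cycle monomials that are nonzero there are those whose exponents lie in $\mathcal F$. We will use the standard criterion that $X_4$ is smooth along the orbit of $\mathcal F$ if and only if the localized semigroup $S_4+\mathbb Z(S_4\cap\mathcal F)$ is isomorphic to $\mathbb Z^a\times\mathbb N^b$. Equivalently, the quotient cone $\mathcal C_4/\operatorname{span}\mathcal F$ must be simplicial and unimodular.

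For a Hermitian point, $z_{\beta\alpha}=z_{\alpha\beta}^*$, so the set of nonzero directed edges is bidirected. It is the bidirected support graph $G$. The corresponding face consists of the circulations supported on $G$, and its span is the full circulation lattice of $G$, of rank $2m-4+c$ (compare Proposition~\ref{prop:section7-phase-stabilizer-count}). Thus the whole question reduces to: for which bidirected $G\subseteq K_4$ is the quotient of $\mathcal C_4$ by the circulations of $G$ smooth?

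\emph{Connected case.} Suppose $G$ is connected, so $c=1$ and $G$ contains a spanning tree $\Gamma$. We work in the localization $\C[S_4][Z_C^{-1}:C\subseteq G]$; since every edge of $G$ appears in a nonzero two-cycle $Q$, this is the same as inverting all $z_{\alpha\beta}$ with $\{\alpha,\beta\}\in E(G)$. Repeat the spanning-star gauge argument of Proposition~\ref{prop:section6-star-localization}: rephase so that every edge $\alpha\to\beta$ of $\Gamma$ has $z_{\alpha\beta}=1$. Then every remaining directed edge $\gamma\to\delta$, whether or not it lies in $G$, is represented by the invariant cycle obtained by closing it with the tree path $\delta\rightsquigarrow\gamma$. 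These $12-3=9$ fundamental-cycle elements form a lattice basis of $\ker_{\mathbb Z}B_4$, for the same unimodularity reason used for Eq.~\eqref{eq:section6-laurent-torus-chart}. After the localization, the invariant ring is therefore a polynomial ring in the chords outside $G$ tensored with a Laurent ring in the chords inside $G$, hence smooth. The step that needs care is showing that every invariant monomial lies in this ring with nonnegative exponents on the non-$G$ chords. It does: in the gauge the monomial is just $\prod z_{\gamma\delta}^{u_{\gamma\delta}}$ over the non-tree edges, and the exponents $u\geq0$ are exactly those of the original monomial.

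\emph{Disconnected case.} If $G$ is disconnected, its components induce a bipartition $A\sqcup B$ with no $A$--$B$ edges in $G$. We must show the point is singular. The stabilizer is positive dimensional (Proposition~\ref{prop:support-graph-stabilizer}), but, as remarked after that proposition, this alone does not give singularity, so we argue directly with the Zariski tangent space. Localizing at the face of $G$ reduces the question to the transverse cone generated by circulations that use at least one $A$--$B$ edge, taken modulo circulations inside $A$ and inside $B$. This cone contains the two-cycles $Q_{ab}$ for $a\in A$, $b\in B$. It also contains the cross cycles $a\to b\to a'\to b'\to a$, which in the quotient equal $(a\to b)+(b'\to a)$ type sums. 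Count its Hilbert basis. For the $3+1$ type, the cross edges $a\to b$ and $b\to a$ for the three $a\in A$ give a Hilbert basis consisting of all pairs $(a\to b)+(b\to a')$, namely $9$ elements. These span a cone of dimension only $6-1=5$, since the single vertex $b$ imposes one conservation relation. For the $2+2$ type, the cross edges again form a balanced transportation-type cone with more Hilbert basis elements than its dimension. A simplicial unimodular cone has exactly $\dim$ Hilbert basis elements, so in both cases the local ring is not regular.

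\emph{Dimensions.} The off-diagonal real dimensions follow from Proposition~\ref{prop:section7-phase-stabilizer-count}. The maximal $3+1$ support is $K_3\sqcup\{\text{pt}\}$, with $m=3$, $c=2$, hence $2m-4+c=4$. The maximal $2+2$ support is two disjoint edges, with $m=2$, $c=2$, hence dimension $2$. Adding the two diagonal invariants gives $6$ and $4$. Irreducibility of each face $\Sigma_{A|B}\simeq X_{|A|}\times X_{|B|}$ follows because it is the product of two toric varieties. The union statement follows since every disconnected support lies inside some $\Sigma_{A|B}$. Of the three steps, the singularity count in the disconnected case is the main obstacle.
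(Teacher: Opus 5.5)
Your argument takes a genuinely different route from the paper's, and it is essentially sound. The paper proves both directions with Luna's \'etale slice theorem. At a connected Hermitian support the orbit is closed with trivial stabilizer, so the quotient is smooth. At a two-component support the stabilizer $\C^\times$ acts on the cross-edge normal directions as $\C^r_{+1}\oplus\C^r_{-1}$, $r=|A||B|$, whose quotient is the cone of $r\times r$ matrices of rank at most one. You instead use the orbit--face correspondence of the normal toric variety together with freeness of the localized semigroup.

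In the connected case this gives more than the paper states: an explicit chart $\C[S_4][Q_G^{-1}]\simeq\C[\mathbb Z^{9-b}\times\mathbb N^{b}]$, with $b=12-2m$, valid on the whole open set where the $G$-edge moduli are nonzero. It extends the spanning-star localization and needs neither orbit closedness nor Luna. To make it airtight, note that this localization is the semigroup ring of $\{w\in\ker_{\mathbb Z}B_4: w_e\ge 0\ \text{for}\ e\notin G\}$, because adding a large multiple of the $G$ two-cycles makes any such $w$ nonnegative. In the two-component case your transverse semigroup $\{(u,v)\in\mathbb N^r\times\mathbb N^r:\sum u=\sum v\}$ is exactly the semigroup of the paper's slice quotient $\C[u_iv_j]$. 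Your count of $r^2$ Hilbert-basis elements against rank $2r-1$ is therefore the same obstruction, read combinatorially.

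The step that does not hold as written is your reduction for supports with three or more components, i.e.\ types $2+1+1$ and $1+1+1+1$. Localizing at the face $\mathcal F_G$ inverts only cycles supported on $G$. The quotient semigroup at such a point is therefore $\{y\in\mathbb N^{E_4\setminus G}: y\ \text{balanced at every connected component of}\ G\}$, not a block-balanced cross-edge cone. Quotienting by all circulations inside $A$ and inside $B$ is the localization at the larger face $\mathcal F_{K_A\sqcup K_B}$, which is a different orbit; at the diagonal toric vertex nothing is inverted at all.

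You need one more line to close this. One option is the paper's specialization step: $\mathcal F_G\subseteq\mathcal F_{K_A\sqcup K_B}$ places the orbit of $\mathcal F_G$ in the closure of the two-component orbit, and $\operatorname{Sing}(X_4)$ is closed. The other is a direct count. At the vertex, $S_4$ has $20$ Hilbert-basis elements against rank $9$. For $2+1+1$, the contracted bidirected multigraph has $17$ simple directed cycles against rank $8$.

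For two-component supports whose blocks are not complete, the surviving within-block non-edges contribute only free $\mathbb N$ factors. There your phrasing is imprecise but harmless.
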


\emph{Proof.}
If $G(M)$ is connected, the effective rephasing torus has trivial stabilizer.
Hermiticity supplies both opposite weights on every support edge, so the
complexified orbit is closed.  A spanning tree gives a local gauge slice:
its three nonzero edge phases are fixed, while every remaining complex edge
is an ordinary local coordinate.  Equivalently, Luna's etale slice theorem
has trivial slice group, so the quotient is smooth~\cite{Luna1973}.

Suppose first that $G(M)$ has exactly two connected components $A$ and $B$.
Their relative complexified phase is a stabilizer $\C^\times$.  With
$r=|A||B|$, the normal directions that turn on cross-component edges form
\begin{equation}
 \C^r_{+1}\oplus\C^r_{-1}.
 \label{eq:section7-cross-edge-slice}
\end{equation}
Its invariant quotient is generated by $u_i v_j$ and is the affine cone of
$r\times r$ matrices of rank at most one, cut out by their $2\times2$ minors.
At the vertex the cone has dimension $2r-1$ and tangent-space dimension
$r^2$, so it is singular for $r\geq2$.  Here $r=3$ for $3+1$ and $r=4$ for
$2+2$.  The slice theorem therefore makes the original quotient singular.
If the support has more than two components, choose a bipartition of its
connected components.  Turning on arbitrarily small internal edges within
each block, without adding an edge across the cut, gives two-component
Hermitian supports that specialize to the original point.  The nearby
quotient points are singular by the determinantal-slice argument.  Since
$\operatorname{Sing}(X_4)$ is Zariski closed, the original point is singular
as well.

Every disconnected graph lies in a face obtained by bipartitioning its
connected components, while no connected graph lies in such a face.  There
are $\frac12(2^4-2)=7$ unordered nontrivial bipartitions.  A $3+1$ face is
the three-vertex cycle quotient times an isolated vertex and has dimension
four and codimension five in $X_4$; a $2+2$ face has its two internal moduli,
dimension two, and codimension seven. \hfill$\square$

The support types are
\begin{equation}
\begin{array}{c|c|c|c}
 c&\text{component type}&b_1(G)&\text{ambient quotient point}\\ \hline
 1&4&m-3&\text{smooth}\\
 2&3+1&m-2&\text{singular}\\
 2&2+2&0&\text{singular}\\
 3&2+1+1&0&\text{singular}\\
 4&1+1+1+1&0&\text{singular}.
\end{array}
\label{eq:section7-support-strata-table}
\end{equation}
Each fixed exact-support orbit-type stratum is itself a smooth manifold; the
last column says whether it lies in the smooth or singular locus of the
ambient quotient.  Moreover, $Q_{\alpha\beta}=0$ may be read as
\emph{the edge is absent} only on the Hermitian image, where it forces
$M_{\alpha\beta}=M_{\beta\alpha}=0$.  Over the complex toric scheme,
$Q_{\alpha\beta}=z_{\alpha\beta}z_{\beta\alpha}=0$ need not set both directed
coordinates to zero.

\subsection{CP, toric singularity, and spectral degeneracy}
\label{subsec:section7-three-loci}

Three exceptional sets must be kept separate:
\begin{align}
 \Sigma_{\rm CP}
 &=V_{\mathbb R}(\mathfrak J_{\rm CP}),
 \label{eq:section7-cp-locus}\\
 \Sigma_{\rm tor}
 &=\operatorname{Sing}(X_4)_{\rm red}\cap X_4^{\rm H},
 \label{eq:section7-toric-singular-locus}\\
 \Sigma_{\rm spec}
 &=\left\{\prod_{i<j}(\lambda_i-\lambda_j)^2=0\right\}.
 \label{eq:section7-spectral-discriminant-locus}
\end{align}
They encode, respectively, reality up to rephasing, enhancement of the
rephasing stabilizer, and failure of simple spectral coordinates.  A generic
real full-support matrix lies in $\Sigma_{\rm CP}$ but is torically smooth
and spectrally nondegenerate.  A connected chordless square with a nonreal
quadrangle invariant is CP violating but torically smooth.  A generic
Hermitian triangle block with a nonreal triangle invariant plus an isolated
flavor is torically singular, can violate CP through its triangle phase, and
can have a simple spectrum.  A diagonal
matrix with four distinct entries is at the toric vertex but is spectrally
nondegenerate.  Conversely,
\begin{equation}
 M=b(\bm J-\bm1),\qquad b\in\mathbb R\setminus\{0\},
 \label{eq:section7-degenerate-full-support-example}
\end{equation}
where $\bm J$ is the all-ones matrix, has connected full support and is
torically smooth, while the eigenvalue $-b$ has multiplicity three.

The finite syzygy presentation and support stratification complete the global
algebraic analysis of the standard $3+1$ flow.  They show which identities
are intrinsic to the exact standard spurion and which features depend on a
dense chart.  The next section studies how this structure changes when
additional diagonal directions, off-diagonal NSI, or radiative corrections
deform the matter flow.

\section{Controlled NSI and radiative deformations}
\label{sec:controlled-deformations}

The word \emph{deformation} has two distinct meanings in this section.  A new
fixed matter spurion changes the translation subspace but still defines an
exact affine flow, so its invariant algebra can be computed without a small
parameter.  By contrast, comparison with the standard diagonal algebra is
often organized in a small NSI or loop coefficient.  We keep these statements
separate: the dressed polynomial invariants below are exact, whereas spectral
eigenvector expansions are asserted only away from degeneracies and with an
explicit norm--gap condition.

We consider neutral-current vector NSI in propagation.  Source and detector
NSI alter external production or detection amplitudes rather than the local
propagation Hamiltonian, and scalar interactions deform a mass term with a
different energy dependence.  They are not described by the additive
Hamiltonian algebra developed here.  This separation is standard in the NSI
formalism~\cite{Ohlsson2013,FarzanTortola2018}.  Moreover, an arbitrary
Hermitian propagation matrix is a low-energy parametrization, not by itself a
claim of a viable ultraviolet completion: electroweak gauge invariance and
charged-lepton observables impose additional correlations and often strong
constraints~\cite{GavelaHernandezOtaWinter2009,DavidsonGorbahn2020}.

\subsection{Composition-resolved propagation spurions}
\label{subsec:section8-nsi-spurions}

At energies where a four-fermion description is appropriate, write the
neutral-current vector interaction as
\begin{equation}
 \mathcal L_{\rm NSI}^{\rm NC}
 =-2\sqrt2\,G_{\rm F}
 \sum_{\alpha,\beta}\sum_{f=e,u,d}\sum_{C=L,R}
 \epsilon_{\alpha\beta}^{fC}
 (\overline\nu_\alpha\gamma^\rho P_L\nu_\beta)
 (\overline f\gamma_\rho P_C f),
 \label{eq:section8-nsi-lagrangian}
\end{equation}
and set
$\epsilon^{fV}=\epsilon^{fL}+\epsilon^{fR}$.  Coherent forward scattering in
an unpolarized medium depends only on this vector combination.  Hermiticity
requires
\begin{equation}
 \epsilon_{\beta\alpha}^{fV}
 =(\epsilon_{\alpha\beta}^{fV})^*.
 \label{eq:section8-nsi-hermiticity}
\end{equation}
Up to the normalization convention used for $M=2EH$, the active-sector matter
matrix at fixed composition is
\begin{equation}
 P(Y)=E_e+\sum_{f=e,u,d}Y_f\epsilon^{fV},
 \qquad Y_f=\frac{N_f}{N_e},
 \label{eq:section8-effective-nsi-direction}
\end{equation}
with identity-valued pieces removed after embedding in the trace quotient.
For a $3+1$ model the standard active--sterile neutral-current difference and
possible sterile NSI entries are included in the same Hermitian spurion.  The
familiar matter-NSI normalization and the importance of composition-dependent
linear combinations are reviewed in
Refs.~\cite{BiggioBlennowFernandezMartinez2009,FarzanTortola2018}.

It is algebraically cleaner not to divide by $N_e$.  Introduce independent
density coefficients $a_f$ and trace-projected Hermitian directions $P_f$:
\begin{equation}
 M(\bm a)=M_0+\sum_f a_fP_f,
 \qquad
 \cD_f=\sum_{\alpha,\beta}(P_f)_{\alpha\beta}
 \frac{\partial}{\partial m_{\alpha\beta}}.
 \label{eq:section8-composition-resolved-flow}
\end{equation}
All $\cD_f$ commute because they are constant translations.  Thus Theorem
\ref{thm:translation-rephasing-kernel} already gives the exact universal
relative algebra, while Proposition~\ref{prop:linear-translation-quotient}
gives the fixed-spurion translation quotient.  Off-diagonal entries do not
destroy exact integrability; they destroy the special separation between a
diagonal translation quotient and the uncolored cycle algebra of $M$ alone.
After a numerical off-diagonal tuple is fixed, the relevant rephasing group is
the stabilizer~\eqref{eq:fixed-spurion-stabilizer}, not automatically the full
flavor torus.

For electrically neutral ordinary matter, $N_p=N_e$ and
\begin{equation}
 Y_u=2+Y_n,\qquad Y_d=1+2Y_n,
 \qquad Y_n=\frac{N_n}{N_e}.
 \label{eq:section8-neutral-composition-ratios}
\end{equation}
Consequently a fixed electron fraction selects one effective ray, whereas an
open range of neutron fractions spans at most two composition directions for
vector interactions with $e,u,d$.  This elementary observation matters for
invariants: an identity valid along one ray need not belong to the joint
kernel for a varying composition.

In the normalization of Sec.~\ref{subsec:standard-31-compositions}, the
composition-resolved $3+1$ directions may be written, modulo the identity, as
\begin{align}
 A_e&=\overline E_e+\epsilon^{eV}+2\epsilon^{uV}+\epsilon^{dV},
 \label{eq:section8-electron-direction}\\
 A_n&=\frac12\overline E_s+\epsilon^{uV}+2\epsilon^{dV},
 \label{eq:section8-neutron-direction}\\
 M_{\rm mat}&=a_e(A_e+Y_nA_n).
 \label{eq:section8-neutral-31-matter}
\end{align}
At fixed $Y_n$ this is one direction; for arbitrary neutral compositions it
is the two-dimensional span of $A_e$ and $A_n$, unless they are accidentally
dependent.  Active-only NSI matrices in these equations are padded by a zero
sterile row and column.  Active--sterile entries require additional light
fields or operators and are not obtained from the Standard-Model-field SMEFT
by simply relabeling an active coefficient.

\subsection{Exact diagonal-anchored NSI dressing}
\label{subsec:section8-exact-dressing}

The relation between the standard diagonal algebra and a controlled
off-diagonal deformation can be made exact.  Let
$p_1,\ldots,p_\ell\in\mathfrak d_n^0$ be linearly independent diagonal
directions and let $q_A\in\mathfrak o_n$ be off-diagonal spurions.  Diagonal
NSI pieces may first be absorbed into the $p_A$.  Choose diagonal linear forms
$s_A\in(\mathfrak d_n^0)^*$ satisfying
\begin{equation}
 s_A(p_B)=\delta_{AB},
 \qquad s_A(q_B)=0,
 \label{eq:section8-dual-slices}
\end{equation}
and define
\begin{equation}
 P_A(\varepsilon)=p_A+\varepsilon q_A,
 \qquad
 \cD_A^{(\varepsilon)}=\cD_{p_A}+\varepsilon\cD_{q_A}.
 \label{eq:section8-deformed-derivations}
\end{equation}
The $q_A$ are retained as simultaneously rephasing spurions until the end.

There is also a useful straightening formula before separating diagonal and
off-diagonal pieces.  Regard the independent unperturbed directions as a map
$P:\C^\ell\to V_n$, the full tied perturbations as
$N:\C^\ell\to V_n$, and choose a linear slice
$X:V_n\to\C^\ell$ with $XP=\bm1_\ell$.  On a chart where
\begin{equation}
 A_\varepsilon=\bm1_\ell+\varepsilon XN
 \quad\text{is invertible},
 \label{eq:section8-general-straightening-chart}
\end{equation}
define
\begin{equation}
 \Psi_\varepsilon
 =\bm1_{V_n}-\varepsilon N A_\varepsilon^{-1}X.
 \label{eq:section8-general-straightening-map}
\end{equation}
Direct multiplication gives
\begin{equation}
 \Psi_\varepsilon(P+\varepsilon N)=P,
 \qquad
 \Psi_\varepsilon^{-1}=\bm1_{V_n}+\varepsilon NX.
 \label{eq:section8-general-straightening-identities}
\end{equation}
Thus $I(\Psi_\varepsilon M)$ exactly straightens every invariant $I$ of the
unperturbed joint flow.  The chart condition is not intrinsically singular;
another dual slice can be chosen whenever $P+\varepsilon N$ remains
independent.  A genuine rank change of the translation span is different: it
changes the quotient dimension.  For purely off-diagonal $N$ and a diagonal
slice, $XN=0$ and the rational chart formula reduces to the global polynomial
shear proved next.

\begin{theorem}[Exact NSI shear]
\label{thm:section8-exact-nsi-shear}
On the polynomial algebra in $M$ and the external spurion coordinates, set
\begin{equation}
 \Phi_\varepsilon
 =\exp\!\left[-\varepsilon
   \sum_{A=1}^{\ell}s_A(M)\cD_{q_A}\right].
 \label{eq:section8-shear-automorphism}
\end{equation}
The exponential is finite on every polynomial and
\begin{equation}
 \cD_A^{(\varepsilon)}\Phi_\varepsilon
 =\Phi_\varepsilon\cD_{p_A}.
 \label{eq:section8-intertwining-relation}
\end{equation}
It follows that $\Phi_\varepsilon$ is an algebra isomorphism
\begin{equation}
 \Phi_\varepsilon:
 \bigcap_A\ker\cD_{p_A}
 \xrightarrow{\ \simeq\ }
 \bigcap_A\ker\cD_A^{(\varepsilon)}.
 \label{eq:section8-kernel-isomorphism}
\end{equation}
The map is equivariant under simultaneous flavor rephasing of $M$ and every
$q_A$.
\end{theorem}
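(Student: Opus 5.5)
The plan is to write $\Phi_\varepsilon=\exp(-\varepsilon N)$ with
\begin{equation*}
 N=\sum_{B=1}^{\ell}s_B(M)\,\cD_{q_B},
\end{equation*}
and to establish the required commutators first, so that the intertwining relation follows from a short algebraic computation rather than from term-by-term expansion.

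\textbf{Finiteness and basic commutators.} Each $\cD_{q_B}$ translates only off-diagonal coordinates, because $q_B\in\mathfrak o_n$, and it annihilates the spurions. Each $s_B(M)$ is a linear form in the diagonal coordinates, so $\cD_{q_A}s_B(M)=s_B(q_A)=0$ by Eq.~\eqref{eq:section8-dual-slices}. Two consequences follow.
\begin{itemize}
 \item[(a)] The summands $s_B(M)\cD_{q_B}$ commute pairwise.
 \item[(b)] $N$ lowers the off-diagonal $M$-degree by one while leaving the diagonal part untouched, except for multiplication by diagonal forms.
\end{itemize}
Hence $N$ is locally nilpotent: on a polynomial of off-diagonal degree $d$ we have $N^{d+1}=0$. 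So the exponential is a finite sum and defines an algebra automorphism, since $N$ is a derivation, with inverse $\exp(\varepsilon N)$.

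\textbf{Intertwining.} We need $[\cD_{p_A},N]$ and $[\cD_{q_A},N]$. The derivation $\cD_{p_A}$ acts only on diagonal coordinates, so $\cD_{p_A}s_B(M)=s_B(p_A)=\delta_{AB}$. The operators $\cD_{p_A}$ and $\cD_{q_B}$ commute, being constant translations. Therefore $[\cD_{p_A},N]=\cD_{q_A}$. By the computation above $[\cD_{q_A},N]=0$, so $\cD_{q_A}$ commutes with $\Phi_\varepsilon$. Since $\operatorname{ad}_N(\cD_{p_A})=-\cD_{q_A}$ commutes with $N$, the series $e^{-\varepsilon N}\cD_{p_A}e^{\varepsilon N}=\cD_{p_A}+\varepsilon[\cD_{p_A},N]$ terminates after the linear term; each series is finite on polynomials, so this is legitimate. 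This gives
\begin{equation*}
 \cD_{p_A}\Phi_\varepsilon=\Phi_\varepsilon(\cD_{p_A}-\varepsilon\cD_{q_A}).
\end{equation*}
Adding $\varepsilon\cD_{q_A}\Phi_\varepsilon=\Phi_\varepsilon\,\varepsilon\cD_{q_A}$ yields $\cD_A^{(\varepsilon)}\Phi_\varepsilon=\Phi_\varepsilon\cD_{p_A}$. The kernel isomorphism follows immediately. If $\cD_{p_A}F=0$ for all $A$, then $\cD_A^{(\varepsilon)}\Phi_\varepsilon F=0$; conversely, applying $\Phi_\varepsilon^{-1}$ to an element of the deformed kernel lands in the undeformed kernel. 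Multiplicativity comes from $\Phi_\varepsilon$ being the exponential of a derivation.

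\textbf{Equivariance.} Under simultaneous rephasing, $s_B(M)$ is invariant because it is diagonal. The operator $\cD_{q_B}=\sum_{\alpha\beta}(q_B)_{\alpha\beta}\,\partial/\partial m_{\alpha\beta}$ is invariant because the coefficient $(q_B)_{\alpha\beta}$ and the vector field $\partial/\partial m_{\alpha\beta}$ carry opposite weights. Hence $N$, and therefore $\Phi_\varepsilon$, commutes with the torus action.

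The main obstacle is bookkeeping in the commutator step. One must verify that $\cD_{q_A}$ really commutes with the coefficient functions $s_B(M)$; this is exactly where $s_A(q_B)=0$ and the diagonality of the slices are used. One must also verify that the Baker--Campbell--Hausdorff series stops at first order. Both facts rest on the off-diagonal/diagonal splitting, so I would state them as a small lemma before the main computation.
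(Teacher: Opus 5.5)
Your proposal is correct and follows the paper's proof essentially step for step. You use the same locally nilpotent operator $\sum_B s_B(M)\cD_{q_B}$ and the same two commutators, $[\cD_{p_A},N]=\cD_{q_A}$ and $[N,\cD_{q_A}]=0$, so the conjugation series terminates at first order; your equivariance argument is also the paper's, combining rephasing invariance of the diagonal slices with covariance of $\cD_{q_A}$. Your off-diagonal-degree argument for finiteness and your explicit two-way kernel argument simply spell out what the paper states in one line each.
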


\emph{Proof.}
Because $\cD_{q_A}s_B=0$, the operator in the exponential is locally
nilpotent on each finite-degree polynomial.  Put
$\mathcal A=\sum_Bs_B\cD_{q_B}$.  The constant translations commute and
\begin{equation}
 [\cD_{p_A},\mathcal A]=\cD_{q_A},
 \qquad [\mathcal A,\cD_{q_A}]=0.
 \label{eq:section8-shear-commutators}
\end{equation}
The Baker--Campbell--Hausdorff series therefore terminates after its first
commutator and gives
$\cD_A^{(\varepsilon)}=\Phi_\varepsilon\cD_{p_A}\Phi_\varepsilon^{-1}$,
which proves Eqs.~\eqref{eq:section8-intertwining-relation} and
\eqref{eq:section8-kernel-isomorphism}.  Finally, every $s_A$ is diagonal and
rephasing invariant, while $\cD_{q_A}$ transforms covariantly with $q_A$.
Hence the shear commutes with the simultaneous torus action. \hfill$\square$

In coordinates the theorem simply replaces
\begin{equation}
 \widehat M_\varepsilon
 =M-\varepsilon\sum_As_A(M)q_A,
 \qquad
 \widehat z_{\alpha\beta}
 =z_{\alpha\beta}
  -\varepsilon\sum_As_A(M)(q_A)_{\alpha\beta}.
 \label{eq:section8-dressed-matrix}
\end{equation}
Indeed, under
$M\mapsto M+\sum_Aa_A(p_A+\varepsilon q_A)$ the dressed matrix changes only
by $\sum_Aa_Ap_A$.  Therefore every invariant of the diagonal flow, evaluated
on $\widehat M_\varepsilon$, is an exact invariant of the NSI-deformed flow.
No spectral gap and no assumption that $\varepsilon$ is small is used in this
polynomial statement.  Smallness enters only when the result is expanded or
when eigenvectors are followed perturbatively.

The theorem also fixes the apparent first-order ``cohomology'' issue from
Eq.~\eqref{eq:intro-cohomological-equation}.  For
$I\in\cap_A\ker\cD_{p_A}$,
\begin{equation}
 I_\varepsilon
 =I-\varepsilon\sum_As_A\cD_{q_A}I
 +\frac{\varepsilon^2}{2}
  \left(\sum_As_A\cD_{q_A}\right)^2I+\cdots
 \label{eq:section8-exact-correction-series}
\end{equation}
is a finite exact solution.  Thus there is no obstruction in the unrestricted
polynomial ring.  Obstructions become meaningful only after a smaller
correction class has been declared.  Because the substitution
\eqref{eq:section8-dressed-matrix} is linear in $M$, it preserves total
$M$ degree; total-degree truncation by itself is therefore not an obstruction
when it already contains the original invariant.

\subsection{Cycle response, colored cycles, and unchanged syzygies}
\label{subsec:section8-cycle-response}

For clarity take one diagonal direction $p$, one off-diagonal spurion $q$, and
a slice $s(p)=1$.  If $C=(\alpha_1\alpha_2\cdots\alpha_L\alpha_1)$ is an
oriented simple cycle, write
\begin{equation}
 Z_C(M)=\prod_{j=1}^{L}z_{\alpha_j\alpha_{j+1}},
 \qquad \alpha_{L+1}=\alpha_1.
 \label{eq:section8-cycle-definition}
\end{equation}
The old generator has exact drift
\begin{equation}
 \cD_{p+\varepsilon q}Z_C
 =\varepsilon\cD_qZ_C
 =\varepsilon\sum_{j=1}^{L}
 q_{\alpha_j\alpha_{j+1}}
 \prod_{k\ne j}z_{\alpha_k\alpha_{k+1}}.
 \label{eq:section8-cycle-drift}
\end{equation}
Each summand is a rephasing-invariant colored cycle with one $q$ edge and
$L-1$ $M$ edges.  The exact dressed generator is
\begin{align}
 \widehat Z_C
 &=Z_C(\widehat M_\varepsilon)
 =\prod_{j=1}^{L}
  (z_{\alpha_j\alpha_{j+1}}
       -\varepsilon s q_{\alpha_j\alpha_{j+1}})\nonumber\\
 &=\sum_{k=0}^{L}\frac{(-\varepsilon s)^k}{k!}
   \cD_q^kZ_C.
 \label{eq:section8-dressed-cycle}
\end{align}
This response hierarchy terminates at the cycle length.  On the Hermitian
form, for an unoriented edge one obtains
\begin{align}
 \widehat Q_{\alpha\beta}
 &=Q_{\alpha\beta}
 -\varepsilon s
  (q_{\alpha\beta}z_{\beta\alpha}
   +z_{\alpha\beta}q_{\beta\alpha})
 +\varepsilon^2s^2q_{\alpha\beta}q_{\beta\alpha}.
 \label{eq:section8-dressed-edge}
\end{align}

Two facts must not be conflated.  First, the undressed cycle coordinates are
no longer constant along an off-diagonal NSI flow.  Second, their toric
binomial identities do \emph{not} fail: those identities are algebraic
consequences of writing cycles as monomials in edge variables and hold at
every matrix point.  The dressed variables
$\widehat z_{\alpha\beta}$ are another set of directed edge coordinates, so
their cycles satisfy the same toric ideal exactly.  A consistently truncated
order-$N$ expansion of a dressed identity has residual
$\mathcal O(\varepsilon^{N+1})$; a lower-order residual signals inconsistent
truncation rather than physical breaking of a syzygy.

The complete universal relative algebra contains more than these dressed
$M$-only cycles.  Simultaneous rephasing also permits $q$-only cycles and mixed
colored cycles, for example
\begin{equation}
 \widehat z_{\alpha\beta}q_{\beta\alpha},
 \qquad
 \widehat z_{\alpha\beta}q_{\beta\gamma}q_{\gamma\alpha}.
 \label{eq:section8-mixed-colored-cycles}
\end{equation}
Balanced monomials in all edge colors form the colored circulation semigroup.
Therefore the dressed standard cycles give an exact copy of the old cycle
sector, but they are not claimed to generate the entire universal
Hamiltonian--spurion ring or an exceptional numerical fixed-spurion fiber.

For CP, Eq.~\eqref{eq:section8-cycle-drift} yields
\begin{equation}
 \frac{d}{da}\Im Z_C(M(a))
 =\varepsilon\Im\cD_qZ_C(M(a)).
 \label{eq:section8-cp-odd-source}
\end{equation}
The source is a sum of CP-odd colored cycles.  Simultaneous CP conservation of
the pair $(M,q)$ means that one flavor rephasing makes both Hermitian matrices
real; it is tested by the imaginary parts of all balanced colored cycles, not
by the uncolored cycles of $M$ alone.  Conversely, an instantaneous
Hamiltonian may acquire a nonzero rephasing-odd cycle from a complex NSI
spurion even if the vacuum matrix is real.  This Hamiltonian reality criterion
is still distinct from an experimentally measured neutrino--antineutrino
asymmetry, which also contains the CP-asymmetric medium.

For completeness, the simultaneous-reality statement follows from the
graph-phase criterion.  Delete zero edges, choose a spanning forest of the
union of the colored supports, and use flavor phases to make one nonzero color
on every forest edge real.  Reality of the two-edge colored cycles fixes the
relative phases of all other colors on the same edge, while reality of the
fundamental colored cycles fixes every chord phase modulo $\pi$.  Hence all
nonzero entries of every color are real in that one rephasing.  The converse is
immediate, since every balanced colored monomial is then real.  The argument
applies component by component and includes disconnected or vanishing-edge
strata.

\subsection{Restricted correction spaces and rank support}
\label{subsec:section8-restricted-corrections}

Let $\mathcal C_{\rm adm}$ be a declared linear space of admissible first-order
corrections, such as the old edge-only cycle ring, a fixed edge multidegree, a
chosen Laurent-monomial span, or the tangent space to a factorized spectral
ansatz.  Equation~\eqref{eq:intro-cohomological-equation} has a solution
$J\in\mathcal C_{\rm adm}$ if and only if
\begin{equation}
 -\cD_qI\in\cD_p(\mathcal C_{\rm adm}).
 \label{eq:section8-restricted-solvability}
\end{equation}
Thus the precise obstruction is the class of $-\cD_qI$ in
$\operatorname{coker}(\cD_p|_{\mathcal C_{\rm adm}})$.  In the unrestricted ring it
vanishes because $J=-s\cD_qI$ is available.  This formulation prevents a
failure of a preferred ansatz from being misidentified as nonexistence of an
exact polynomial invariant.

There is a parallel support criterion for the Krylov--Pl\"ucker invariants of
Sec.~\ref{sec:krylov-exterior}.  Suppose the undeformed matter tuple has common
effective support $S$ and
\begin{equation}
 P_A=c_A\bm1+B_AS^\dagger,
 \qquad Q_A=c'_A\bm1+B'_AS^\dagger.
 \label{eq:section8-support-preserving-deformation}
\end{equation}
Then
$P_A+\varepsilon Q_A=(c_A+\varepsilon c'_A)\bm1+
(B_A+\varepsilon B'_A)S^\dagger$ has the same common effective right support,
and every exterior-power invariant constructed from $S$ remains exact.  If
some $Q_A$ has a component outside that support, this common-support theorem no
longer guarantees closure of the old block.  Generic leakage gives the source
described above, but a special minor $\kappa$ can still survive when
$\cD_{Q_A}\kappa=0$.  Enlarging $S$ to the span of the old and new row supports
restores the exact higher-rank theorem, while the Pl\"ucker expansion generally
contains additional terms.

This gives a sharp factorization test.  If an undeformed spectral expression
is a separated tensor $F_0=A_0(\lambda)B_0(U)$, then a first-order correction
can remain separated only if
\begin{equation}
 F_1=A_1(\lambda)B_0(U)+A_0(\lambda)B_1(U).
 \label{eq:section8-segre-tangent-condition}
\end{equation}
At a smooth nonzero rank-one point, so that $A_0\ne0$ and $B_0\ne0$,
Eq.~\eqref{eq:section8-segre-tangent-condition} is the tangent-space condition
for the rank-one Segre locus.  New independent Pl\"ucker channels give a
component transverse to this tangent space and therefore obstruct
factorization within that restricted class, even though the exact dressed
polynomial invariant continues to exist.  Vanishing-factor points lie on
exceptional cone strata and require a tangent-cone analysis; the displayed
condition is not asserted to be necessary there.  This is the
deformation-theoretic form of the rank-one versus higher-rank classification
of Sec.~V.

For the distinguished $3+1$ determinant
\begin{equation}
 \mathcal D_{es}=z_{e\mu}z_{s\tau}-z_{e\tau}z_{s\mu},
 \label{eq:section8-raw-krylov-determinant}
\end{equation}
the exact first leakage polynomial is
\begin{equation}
 L_{es}=q_{e\mu}z_{s\tau}+z_{e\mu}q_{s\tau}
       -q_{e\tau}z_{s\mu}-z_{e\tau}q_{s\mu}.
 \label{eq:section8-krylov-leakage}
\end{equation}
For one tied direction the dressed covariant is
\begin{equation}
 \widehat{\mathcal D}_{es}
 =\mathcal D_{es}-\varepsilon sL_{es}
 +\varepsilon^2s^2
  (q_{e\mu}q_{s\tau}-q_{e\tau}q_{s\mu}),
 \label{eq:section8-dressed-krylov-determinant}
\end{equation}
and its Hermitian norm is an exact scalar invariant.  Its spectral expansion,
however, diagonalizes the auxiliary straightened matrix
$\widehat M_\varepsilon$, not the physical instantaneous $M$.  The dressing
therefore does not reinstate a lost one-term factorization in the physical
matter eigenvalues and mixing matrix.

\subsection{Independently varying NSI directions}
\label{subsec:section8-independent-nsi}

Small numerical strength does not make an independently variable direction a
tied perturbation.  Consider
\begin{equation}
 M(\bm a,\bm b)=M_0+\sum_Aa_Ap_A
 +\varepsilon\sum_Bb_Bq_B,
 \label{eq:section8-independent-nsi-flow}
\end{equation}
with the $b_B$ varied independently.  For every $\varepsilon\ne0$, an exact
common invariant obeys
\begin{equation}
 \cD_{p_A}F=0,
 \qquad \cD_{q_B}F=0.
 \label{eq:section8-independent-nsi-kernel}
\end{equation}
Consequently an analytic continuation
$F_\varepsilon=I+\mathcal O(\varepsilon)$ of a standard invariant can exist
only if $\cD_{q_B}I=0$ for every $B$.  No correction proportional to the
diagonal slices can cancel a derivative imposed by an independent flow.
Before rephasing, the quotient dimension is
\begin{equation}
 (n^2-1)-\dim\Span\{p_A,q_B\},
 \label{eq:section8-independent-nsi-dimension}
\end{equation}
so it can jump at $\varepsilon=0$ even for arbitrarily weak NSI.  After a
numerical off-diagonal tuple is fixed, its residual stabilizer must again be
used.  This is the genuine exact-invariant loss that must be distinguished
from the shear of a tied density direction.

\subsection{Fixed composition versus independently varying matter}
\label{subsec:section8-composition-intersection}

Let $P(y)=P_0+\sum_{r=1}^d y_rP_r$ and suppose $y$ ranges over a nonempty open
set.  Linearity gives the exact intersection identity
\begin{equation}
 \bigcap_{y\ \mathrm{in\ an\ open\ set}}\ker\cD_{P(y)}
 =\bigcap_{r=0}^{d}\ker\cD_{P_r}
 =\ker\Ga(\Span\{P_0,\ldots,P_d\}).
 \label{eq:section8-open-composition-intersection}
\end{equation}
To prove the nontrivial inclusion, evaluate
$\cD_{P_0}I+\sum_ry_r\cD_{P_r}I=0$ as a polynomial identity in $y$; every
coefficient must vanish.  Therefore a continuum of fixed-composition
invariants is equivalent to the common invariant algebra of all independent
species directions.  A finite collection of compositions has the analogous
statement with the linear span of the sampled directions.

For a spatially varying medium, write
$M(x)=M_0+\sum_fa_f(x)P_f$.  Every element of the joint kernel satisfies
\begin{equation}
 \frac{d}{dx}I(M(x))
 =\sum_f\dot a_f(x)\cD_{P_f}I=0.
 \label{eq:section8-profile-composition-invariant}
\end{equation}
A quantity invariant only for one fixed ratio need not obey this equation when
the composition changes.  As before, Eq.~\eqref{eq:section8-profile-composition-invariant}
is a statement about the local Hamiltonian path; it does not remove path
ordering from the flavor-evolution operator.  For antineutrinos the matter
directions change sign and are complex conjugated.  Sign reversal preserves
their span, so the conjugate relative algebra has the same dimension and
generator structure.

The same species-by-species requirement sharpens the generalized
mass-ordering degeneracy.  Oscillation probabilities are unchanged under the
pointwise transformation
\begin{equation}
 M'(x)=-M(x)^*+c(x)\bm1,
 \qquad c(x)\in\mathbb R,
 \label{eq:section8-generalized-ordering-map}
\end{equation}
because it complex conjugates the evolution operator up to an overall phase.
If the density functions are independent, the transformation must map every
matter spurion separately, modulo the identity.  A degeneracy tuned for one
fixed composition therefore need not survive a varying neutron fraction; this
is the discrete counterpart of replacing a one-ray kernel by the joint
kernel~\cite{ColomaSchwetz2016}.

\subsection{Uniform spectral control and degeneracies}
\label{subsec:section8-spectral-control}

Polynomial identities require no eigenvalue labeling.  To control their
spectral pullbacks, however, consider on a compact density domain
\begin{equation}
 M_\varepsilon(a)=M_0(a)+\varepsilon A(a),
 \qquad A(a)=A(a)^\dagger,
 \label{eq:section8-spectral-perturbation}
\end{equation}
and suppose the unperturbed eigenvalues are simple with uniform gap
\begin{equation}
 g=\inf_a\min_{i\ne j}|\lambda_i(a)-\lambda_j(a)|>0,
 \qquad
 |\varepsilon|\sup_a\|A(a)\|_2<\frac{g}{2}.
 \label{eq:section8-gap-condition}
\end{equation}
Weyl's inequality gives the nonperturbative bound
\begin{equation}
 |\lambda_i(\varepsilon,a)-\lambda_i(0,a)|
 \le |\varepsilon|\|A(a)\|_2.
 \label{eq:section8-weyl-bound}
\end{equation}
With $\Pi_i$ the unperturbed spectral projectors, the uniformly valid
first-order formulas are
\begin{align}
 \lambda_i(\varepsilon)
 &=\lambda_i+\varepsilon\Tr(\Pi_iA)+\mathcal O(\varepsilon^2),
 \label{eq:section8-eigenvalue-response}\\
 \Pi_i(\varepsilon)
 &=\Pi_i+\varepsilon\sum_{j\ne i}
 \frac{\Pi_jA\Pi_i+\Pi_iA\Pi_j}{\lambda_i-\lambda_j}
 +\mathcal O(\varepsilon^2).
 \label{eq:section8-projector-response}
\end{align}
The remainder is controlled by powers of
$|\varepsilon|\|A\|_2/g$.  At a degeneracy, individual eigenvectors and
one-dimensional projectors need not be analytic; separated cluster projectors
remain the appropriate variables.  The polynomial shear invariants and toric
relations remain exact across the degeneracy.  This is another reason not to
identify a spectral-coordinate singularity with failure of the underlying
invariant algebra.  Uniformity follows by representing each separated
projector as a Riesz contour integral of the resolvent and expanding
$(z-M_0-\varepsilon A)^{-1}$ in its uniformly convergent Neumann series on
contours whose distance from the unperturbed spectrum is fixed below by the
gap.  The first resolvent coefficient gives
Eq.~\eqref{eq:section8-projector-response}, and the remaining geometric tail
gives the stated powers of $|\varepsilon|\|A\|_2/g$.

Control of the full path-ordered evolution uses a different small parameter.
Let $S_\varepsilon(L)$ and $S_0(L)$ be the two unitary evolution operators and
\begin{equation}
 \delta H(x)=\frac{\varepsilon}{2E}
  \sum_Aa_A(x)q_A,
 \qquad
 \eta_L=\int_0^Ldx\,\|\delta H(x)\|_2.
 \label{eq:section8-duhamel-control-parameter}
\end{equation}
Duhamel's formula and unitarity imply
\begin{equation}
 \|S_\varepsilon(L)-S_0(L)\|_2\le\eta_L,
 \qquad
 |P_{\alpha\beta}^{(\varepsilon)}-P_{\alpha\beta}^{(0)}|
 \le2\eta_L.
 \label{eq:section8-duhamel-bounds}
\end{equation}
Thus $\eta_L\ll1$ controls oscillation probabilities without constant-density
or adiabatic assumptions, whereas Eq.~\eqref{eq:section8-gap-condition}
controls instantaneous eigenprojectors.  Neither condition is needed for the
exact polynomial identities.

\subsection{Radiative matter potentials and RG closure}
\label{subsec:section8-radiative-rg}

Electroweak loops provide a controlled Standard Model deformation of the
matter potential.  In ordinary unpolarized matter and in the charged-lepton
mass basis, the coherent Standard Model correction is flavor diagonal after
the universal identity part is removed.  It contains both a correction to the
electron charged-current potential and a small flavor-nonuniversal
neutral-current contribution.  The classic flavor-dependent calculation is
Ref.~\cite{BotellaLimMarciano1987}.
Accordingly, for each composition direction one may write
\begin{equation}
 p_f^{\rm 1\ell}
 =p_f^{(0)}+\kappa r_f+\mathcal O(\kappa^2),
 \qquad \kappa\sim\frac{\alpha}{4\pi},
 \qquad r_f\in\mathfrak d_n^0.
 \label{eq:section8-one-loop-diagonal-direction}
\end{equation}
The numerical correction to a conventionally normalized potential can be at
the percent level, with its quoted percentage depending on the chosen input
scheme.  Recent long-baseline studies therefore include it when extracting
small NSI effects~\cite{HuangOhlssonVihonenZhou2025}.

If the composition is fixed, Eq.~\eqref{eq:section8-one-loop-diagonal-direction}
defines one rotated diagonal ray.  The number of diagonal invariants is
unchanged and every off-diagonal edge, triangle, and quadrangle generator
remains exactly constant.  If the coefficients of $p_f^{(0)}$ and $r_f$ vary
independently, the relevant space is
\begin{equation}
 W_{\rm rad}=\Span\{p_f^{(0)},r_f\}_{f},
 \label{eq:section8-radiative-span}
\end{equation}
and Theorem~\ref{thm:complete-diagonal-flow-algebra} applies with
$\ell=\dim W_{\rm rad}$.  Hence radiative corrections lower the diagonal
invariant count only when they add an independently variable direction.  They
can nevertheless raise the effective matrix support rank of Sec.~IV and spoil
a special single-term spectral factorization even when the density flow is
still one dimensional.

For the physical neutral-composition $3+1$ family, a complete one-loop
calculation still multiplies only the two independent electron- and
neutron-density profiles.  It need not, however, preserve the tree
normalization inside the $(x_e,x_\tau,x_s)$ diagonal quotient.  Write the two
renormalized directions as
\begin{equation}
 \bm p_e=(u_e,u_\tau,u_s),
 \qquad
 \bm p_n=(v_e,v_\tau,v_s),
 \label{eq:section8-general-radiative-plane-vectors}
\end{equation}
so that, with $a_n=a_eY_n$,
\begin{equation}
 M_{\rm mat}^{\rm 1\ell}
 =a_e\sum_{i=e,\tau,s}u_i\overline E_i
  +a_n\sum_{i=e,\tau,s}v_i\overline E_i.
 \label{eq:section8-general-radiative-31}
\end{equation}
The coefficients include charged-current rescaling and the
flavor-universal active neutral-current correction, which cannot be discarded
in active--sterile oscillations.  When $\bm p_e$ and $\bm p_n$ are independent,
their exact common diagonal invariant is, up to normalization,
\begin{align}
 d_{\rm rad}^{\rm full}
 &=(\bm p_e\mathbin\times\bm p_n)\mathbin\cdot(x_e,x_\tau,x_s)\nonumber\\
 &=(u_\tau v_s-u_sv_\tau)x_e
 +(u_sv_e-u_ev_s)x_\tau
 +(u_ev_\tau-u_\tau v_e)x_s.
 \label{eq:section8-general-radiative-invariant}
\end{align}
Thus the complete physical correction rotates the existing
electron--neutron plane but does not introduce a third independent density in
ordinary matter.  The coefficients in
Eqs.~\eqref{eq:section8-general-radiative-plane-vectors}--
\eqref{eq:section8-general-radiative-invariant} must be taken from a specified
one-loop input scheme.

It is useful to display separately the leading flavor-nonuniversal
$\tau-\mu$ approximation used in the matter-flow literature
\cite{XingZhu2022}.  Keep the tree normalization of the electron and sterile
directions and absorb the remaining flavor-universal and electron-potential
renormalizations into the density coefficients.  At this stated level of
approximation,
\begin{equation}
 \bm p_e^{\rm lead}=(1,r_e,0),
 \qquad
 \bm p_n^{\rm lead}=\left(0,r_n,\frac12\right),
 \label{eq:section8-leading-radiative-plane-vectors}
\end{equation}
where
\begin{align}
 r(Y_n)&=r_e+r_nY_n
 =\kappa\left[(1+Y_n)L-1-\frac23Y_n\right],
 \label{eq:section8-leading-radiative-ratio}\\
 \kappa&=\frac{3G_{\rm F}m_\tau^2}{2\sqrt2\pi^2},
 \qquad L=\log\frac{m_W^2}{m_\tau^2}.
 \label{eq:section8-leading-radiative-coefficients}
\end{align}
Equivalently,
\begin{equation}
 M_{\rm mat}^{\rm lead}
 =a_e(\overline E_e+r_e\overline E_\tau)
  +a_n\left(\frac12\overline E_s+r_n\overline E_\tau\right),
 \qquad a_n=a_eY_n,
 \label{eq:section8-physical-radiative-31}
\end{equation}
and the cross-product formula reduces to
\begin{equation}
 d_{\rm rad}^{\rm lead}=x_\tau-r_ex_e-2r_nx_s.
 \label{eq:section8-physical-radiative-invariant}
\end{equation}
Equations~\eqref{eq:section8-physical-radiative-31} and
\eqref{eq:section8-physical-radiative-invariant} are therefore a controlled
leading normal form, not the complete one-loop potential.  All
$\mathcal B_4$ cycles and all 55 Markov relations remain unchanged in either
description.  At fixed $\eta_n=a_n/a_e=Y_n$, a convenient pair for the leading
normal form is
\begin{equation}
 d_\tau^{\rm lead}=x_\tau-(r_e+\eta_nr_n)x_e,
 \qquad d_s^{(\eta_n)}=x_s-\frac{\eta_n}{2}x_e.
 \label{eq:section8-fixed-composition-radiative-invariants}
\end{equation}

The effective support rank is not perturbatively continuous.  For the raw
diagonal direction
\begin{equation}
 P_{\rm rad}=\operatorname{diag}(1,0,\rho,\eta)
 \quad\text{modulo }\bm1,
 \label{eq:section8-radiative-rank-example}
\end{equation}
one has
\begin{equation}
 r_{\rm eff}=4-\max_{\lambda\in\{1,0,\rho,\eta\}}
 \operatorname{mult}_{P_{\rm rad}}(\lambda).
 \label{eq:section8-radiative-effective-rank}
\end{equation}
For generic $\eta\notin\{0,1\}$, the tree value $\rho=0$ gives
$r_{\rm eff}=2$, while any nonzero
$\rho\notin\{1,\eta\}$ gives $r_{\rm eff}=3$.  For $n=4$, the full radiative
spurion then has $qr_{\rm eff}>4$ for $q\ge2$, so the common-support theorem
does not certify a nontrivial exterior object of that order.  This is not a
nonexistence claim: accidental lower-support representatives and every
polynomial $\kappa$ satisfying $\cD_{P_{\rm rad}}\kappa=0$ survive.  In
particular, the coordinate-support determinant $\mathcal D_{es}$ in
Eq.~\eqref{eq:section8-raw-krylov-determinant} uses only off-diagonal entries
and remains exact under any diagonal potential, as do all off-diagonal cycles.
The discontinuity affects the general common-support certification and its
special spectral factorization, not the polynomial cycle algebra.

Finite one-loop matching of the Standard Model potential is distinct from
renormalization-group running of NSI Wilson coefficients.  Let
$P_A(\mu)$ be trace-projected matter spurions and let
$\mathcal R_\mu=\mu d/d\mu$ act on their coefficients.  Since
$\mathcal R_\mu$ does not translate $M$,
\begin{equation}
 [\mathcal R_\mu,\cD_A]=\cD_{\mathcal R_\mu P_A}.
 \label{eq:section8-rg-translation-commutator}
\end{equation}
This gives an exact closure criterion.

\begin{proposition}[RG closure]
\label{prop:section8-rg-closure}
Suppose, in the trace quotient,
\begin{equation}
 \mathcal R_\mu P_A=\Gamma_A{}^BP_B.
 \label{eq:section8-rg-span-closure}
\end{equation}
Then the joint kernel $\cap_A\ker\cD_A$ is preserved by $\mathcal R_\mu$.  If
instead
$\mathcal R_\mu P_A=\Gamma_A{}^BP_B+R_A^\perp$, the failure is sourced by the new
directions:
\begin{equation}
 \cD_A(\mathcal R_\mu I)=-\cD_{R_A^\perp}I
 \qquad (\cD_BI=0\ \text{for all }B).
 \label{eq:section8-rg-nonclosure-source}
\end{equation}
\end{proposition}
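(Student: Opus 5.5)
The whole argument rests on the commutator identity~\eqref{eq:section8-rg-translation-commutator}, so I would first establish it carefully and then read off both claims. Work on the universal coordinate ring, where the spurion entries $p_{A,\alpha\beta}$ are coordinates. Let $\mathcal R_\mu$ be the derivation that acts on these coordinates through the given running, $\mathcal R_\mu(p_{A,\alpha\beta})=(\mathcal R_\mu P_A)_{\alpha\beta}$, and annihilates every $m_{\alpha\beta}$. Evaluate the commutator of the two derivations on generators. On $m_{\alpha\beta}$ one finds $\mathcal R_\mu\cD_A m_{\alpha\beta}-\cD_A\mathcal R_\mu m_{\alpha\beta}=\mathcal R_\mu p_{A,\alpha\beta}=(\mathcal R_\mu P_A)_{\alpha\beta}$. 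On spurion coordinates both orderings vanish, because $\cD_A$ kills spurions and $\mathcal R_\mu$ maps spurions into spurion expressions. A commutator of derivations is again a derivation, and a derivation is fixed by its values on generators, so $[\mathcal R_\mu,\cD_A]=\cD_{\mathcal R_\mu P_A}$. Here $\cD_X$ is linear in $X$, and $\cD_{\bm 1}$ acts as zero on the trace quotient; this is what allows the decomposition to be stated only modulo the identity. If one prefers fixed-fiber language, read $\mathcal R_\mu$ as differentiation along the one-parameter family of spurions. The identity is then the statement that $\partial_\mu$ of $\sum p_{A}\partial_m$ equals $\sum(\partial_\mu p_A)\partial_m$.

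For the closure claim, take $I$ with $\cD_BI=0$ for all $B$. Then
\begin{equation*}
\cD_A(\mathcal R_\mu I)=\mathcal R_\mu\cD_AI-[\mathcal R_\mu,\cD_A]I=-\cD_{\mathcal R_\mu P_A}I=-\Gamma_A{}^B\cD_BI=0 .
\end{equation*}
Linearity of $X\mapsto\cD_X$ gives the third equality. Hence $\mathcal R_\mu I$ lies in the joint kernel.

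Rephasing invariance also needs checking if one wants $\mathcal R_\mu$ to preserve $\cU_{n,k}$, because the coefficients $\Gamma_A{}^B$ might otherwise carry flavor weight. I would note that it holds when the running map is torus-equivariant, and say that the statement only concerns the joint kernel.

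For the non-closure claim, the same computation with $\mathcal R_\mu P_A=\Gamma_A{}^BP_B+R_A^\perp$ leaves only the extra term, $\cD_A(\mathcal R_\mu I)=-\cD_{R_A^\perp}I$. This is exactly~\eqref{eq:section8-rg-nonclosure-source}.

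The only real obstacle is justifying the commutator identity in the fixed-spurion setting, where spurions are numbers rather than coordinates. I would handle it by working universally and specializing only at the end. Alternatively, I would treat $\mu$ as an extra parameter, with $I$ depending polynomially on $M$ and smoothly on $\mu$; the $\mu$-derivative then commutes with $\partial/\partial m_{\alpha\beta}$, and the identity follows from the product rule.
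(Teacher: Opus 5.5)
Your proposal is correct and follows the paper's proof: both rest on the commutator identity $[\mathcal R_\mu,\cD_A]=\cD_{\mathcal R_\mu P_A}$, the rearrangement $\cD_A\mathcal R_\mu I=\mathcal R_\mu\cD_AI-[\mathcal R_\mu,\cD_A]I$, and linearity of $X\mapsto\cD_X$ to split off the $\Gamma_A{}^B$ part from the transverse term. The paper justifies the commutator only by noting that $\mathcal R_\mu$ does not translate $M$; your check on generators in the universal ring spells out that justification.
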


\emph{Proof.}
From Eq.~\eqref{eq:section8-rg-translation-commutator},
$\cD_A\mathcal R_\mu I=\mathcal R_\mu\cD_AI-[\mathcal R_\mu,\cD_A]I$.
The first term vanishes.  Under Eq.~\eqref{eq:section8-rg-span-closure} the
second is a linear combination of $\cD_BI$; with a transverse component only
Eq.~\eqref{eq:section8-rg-nonclosure-source} remains. \hfill$\square$

In SMEFT and LEFT language the coefficients obey equations of the form
\begin{equation}
 \mu\frac{dC_i}{d\mu}
 =\frac{1}{16\pi^2}\gamma_{ij}C_j.
 \label{eq:section8-wilson-rge}
\end{equation}
Operator mixing can therefore rotate a closed matter-spurion space or generate
a genuinely new direction.  Complete one-loop running across SMEFT and LEFT
has been implemented in NSI analyses~\cite{TerolCalvoTortolaVicente2020}.
The proposition isolates the algebraic content of that running; it does not
replace matching, threshold corrections, or phenomenological bounds.

\subsection{A dressed standard $3+1$ template}
\label{subsec:section8-dressed-31}

Return to the raw fixed-composition $3+1$ direction
$(1,0,0,\eta)$ in the order $(e,\mu,\tau,s)$ and let $q=q^\dagger$ be purely
off diagonal.  With the difference coordinates of
Eq.~\eqref{eq:radiative-difference-coordinates}, choose
\begin{equation}
 s=x_e,
 \qquad
 -d_{\mu\tau}=x_\tau,
 \qquad
 d_s^{(\eta)}=x_s-\eta x_e.
 \label{eq:section8-31-slice}
\end{equation}
For
$M(a)=M_0+a[p_0+\varepsilon q]$, the two diagonal forms remain exact and the
dressed edges are
\begin{equation}
 \widehat z_{\alpha\beta}
 =z_{\alpha\beta}-\varepsilon x_eq_{\alpha\beta}.
 \label{eq:section8-31-dressed-edges}
\end{equation}
Define
\begin{align}
 \widehat Q_{\alpha\beta}
 &=\widehat z_{\alpha\beta}\widehat z_{\beta\alpha},
 \label{eq:section8-31-dressed-q}\\
 \widehat T_{\alpha\beta\gamma}
 &=\widehat z_{\alpha\beta}\widehat z_{\beta\gamma}
   \widehat z_{\gamma\alpha},
 \label{eq:section8-31-dressed-t}\\
 \widehat C_{\alpha\beta\gamma\delta}
 &=\widehat z_{\alpha\beta}\widehat z_{\beta\gamma}
   \widehat z_{\gamma\delta}\widehat z_{\delta\alpha}.
 \label{eq:section8-31-dressed-c}
\end{align}
They are constant along the deformed flow and obey, without approximation,
the complete 55-binomial Markov basis of
Sec.~\ref{subsec:section7-markov-basis}.
Consequently the abstract standard $3+1$ toric presentation survives as a
dressed relative algebra, even though none of the undressed edge cycles that
contains a translated NSI edge need remain constant.

The statement is covariant under simultaneous rephasing of $(M,q)$.  If a
numerical $q$ with connected support is held fixed, its pointwise rephasing
stabilizer can be trivial; one must not reinterpret the dressed relative
generators as invariants of the full torus acting on $M$ alone.  This example
therefore realizes all three levels of the framework: an exact affine
translation quotient, an exact spurion-covariant dressed toric algebra, and a
controlled expansion relative to the standard diagonal generators.

\subsection{Hierarchy of survival statements}
\label{subsec:section8-survival-hierarchy}

The results of this section can be summarized as follows.
\begin{enumerate}
 \item A fixed or independently varying set of Hermitian NSI directions always
 defines commuting exact translations.  Before rephasing, its invariant ring
 is the coordinate ring of the linear quotient by their span.
 \item Diagonal corrections leave the entire off-diagonal cycle algebra
 unchanged.  They change the diagonal invariant count only by increasing the
 dimension of the independently varying diagonal span.
 \item An off-diagonal NSI tied to independent diagonal anchor directions has
 an exact dressed algebra obtained by the shear of
 Theorem~\ref{thm:section8-exact-nsi-shear}.  The old cycles drift at order
 $\varepsilon$, while their toric syzygies remain algebraically exact.
 \item An NSI coefficient varied independently of every diagonal anchor is a
 new translation direction even when numerically small.  Exact survivors must
 lie in its kernel already, and the quotient dimension can jump at zero
 coupling.
 \item Perturbations that preserve the common support retain the same
 rank-$r$ exterior-power invariants.  Enlarging the support removes this
 certificate and generally destroys a one-term spectral factorization.
 Special representatives survive exactly when $\cD_Q\kappa=0$; minors such as
 $\mathcal D_{es}$ explicitly survive diagonal deformations.
 \item Polynomial identities cross spectral degeneracies without difficulty;
 perturbative eigenprojectors require a separated spectral cluster and a norm
 bound such as Eq.~\eqref{eq:section8-gap-condition}.
 \item Radiative or EFT running preserves the matter-invariant bundle exactly
 when the running spurions close on their current span.  Operator mixing outside
 that span supplies the controlled breaking term in
 Eq.~\eqref{eq:section8-rg-nonclosure-source}.
\end{enumerate}

These distinctions turn ``NSI breaking'' into a set of testable algebraic
questions: whether a direction is diagonal, whether several densities are
independent, whether a common Krylov support survives, whether the requested
correction class is large enough, and whether radiative running closes on the
same spurion span.  Section~IX supplies exact tests across generic and
exceptional support strata, together with numerical flows at generic
benchmarks.

\section{Numerical and symbolic verification}
\label{sec:numerical-symbolic-verification}

The proofs in the preceding sections are independent of numerical sampling.
This section supplies a second layer of validation aimed at signs, ordering
conventions, boundary strata, and reproducibility.  The accompanying source
bundle separates exact rational certificates from floating-point stress
tests.  An exact failure raises an exception immediately; a floating test
records a reported absolute or scale-normalized residual and fails at its
declared tolerance.

\subsection{Reproducibility protocol}
\label{subsec:section9-reproducibility}

The complete suite is launched by
\texttt{python verification/verify\_all.py}.  It records the Python, NumPy,
and SciPy versions, fixed seeds, script hashes, declared tolerances, exact
certificate counts, the reported maxima, and sample accounting for the listed
flow grids in machine-readable JSON files.  The release run used Python
3.12.13, NumPy 2.3.5, and SciPy 1.17.0.  The exact-certificate polynomial,
determinant, rank, fiber, and Buchberger calculations use integers or
\texttt{fractions.Fraction};
NumPy and SciPy are confined to explicitly numerical blocks.  The principal
floating-point seed is 20260903.

The $K_4$ calculation is deliberately duplicated.  One program constructs the
graph and cycle columns, expands the five displayed representatives under
$S_4$, and validates their orbits, toric fibers, and Jacobians.  A second exact
binomial Buchberger implementation uses the
fixed lexicographic variable order of
Eq.~\eqref{eq:section7-buchberger-variable-order}.  The two routes share no
stored list of the 55 relations.  Direct circulation enumeration, the
$A_3$ flow-theta parametrization, and the rational Hilbert series are also
compared coefficient by coefficient through edge degree 18.

\subsection{Exact certificates}
\label{subsec:section9-exact-certificates}

Table~\ref{tab:section9-exact-ledger} summarizes the exact layer.  The
common-support test uses $n=7$, $r=2$, $q=3$, and $k=2$, with a noncoordinate support, two
noncommuting support matrices, and nonzero scalar shifts.  All seven maximal
minors of the $6\times7$ Krylov matrix are evaluated at three rational
two-parameter matter points.  The insertion formula for
$\partial_AM^p$ supplies an independent entrywise check of
Eq.~\eqref{eq:exact-block-row-flow}.  Square tests at $(6,2,3)$ and $(5,1,5)$
verify the higher-rank and rank-one determinant cases.

For the Pl\"ucker test, every four-column minor in a $(5,2,2)$ example is
computed both as a direct alternant determinant and as the signed ordered-
partition sum of Theorem~\ref{thm:general-krylov-plucker-expansion}.  A
rational orthogonal flavor transformation then tests every Cauchy--Binet
pullback minor independently.  The rank-one Vandermonde product is exact.
At rank two, an exact pair-collision example has $F_{2,2}=2$, demonstrating
that one spectral gap is not a universal divisor.  The three Pl\"ucker
boundary factorizations, a uniform-matroid cancellation point, and the
pair-versus-triple degeneracy distinction are checked separately.

\begin{table}[t]
\centering
\small
\caption{Exact verification ledger.  ``Exact'' means equality over the
integers, rationals, or Gaussian rationals; no numerical tolerance is used.}
\label{tab:section9-exact-ledger}
\begin{tabular}{@{}p{0.06\linewidth}p{0.25\linewidth}p{0.42\linewidth}p{0.15\linewidth}@{}}
\toprule
ID & Object & Certificate & Result\\
\midrule
D & diagonal flows & $n=5$, rank-two annihilators and fixed-versus-independent $3+1$ nullspaces & exact equality\\
K & rank-$r$ Krylov & six row-flow identities, seven nonsquare minors, and two square determinants & exact constancy\\
P & Pl\"ucker pullback & five ordered-partition identities and five Cauchy--Binet identities & exact equality\\
F & factorization strata & rank-one product, three rank-two boundary products, collision and cancellation witnesses & exact equality\\
T & $K_4$ toric ideal & 20 cycles, 55 two-monomial fibers, orbit sizes $4,12,24,3,12$ & exact counts\\
H & Hilbert certificate & direct circulations, $A_3$ theta series, and initial ideal through degree 18 & identical coefficients\\
S & physical strata & exact $55\times20$ Jacobian on all 64 support graphs & $38$ smooth, $26$ singular\\
N & NSI and radiation & general straightening, tied shear, independent rank jump, and diagonal survivors & exact equality\\
\bottomrule
\end{tabular}
\end{table}

The toric programs reconstruct the cycle counts $(6,8,6)$, the exact ranks
\begin{equation}
 \rank B_4=3,
 \qquad \rank A_4=9,
 \qquad \operatorname{ht}I_4=11,
 \label{eq:section9-k4-rank-certificate}
\end{equation}
and the 55 indispensable fibers in weighted counts $16,24,15$ at degrees
$6,7,8$.  The archived Buchberger order gives 123 binomials and 107 minimal
initial monomial generators.  A second insertion order gives 161 unreduced
binomials but the same 107 minimal initial generators and the same Hilbert
numerator, as expected because an unreduced completion size is not canonical.
The three Hilbert routes give
\begin{multline}
 (h_0,\ldots,h_{18})=
 (1,0,6,8,27,48,112,192,378,624,1092,1728,2802,\nonumber\\
 4248,6516,9528,13983,19824,28090),
 \label{eq:section9-hilbert-coefficients}
\end{multline}
which agrees with Eq.~\eqref{eq:section6-univariate-hilbert-series}.

For the singular-locus test, every present Hermitian edge receives a distinct
positive integer, all twenty cycle coordinates are evaluated, and the
Jacobian of the 55 binomials is row reduced over the rationals.  All 38
connected labeled graphs have rank 11.  The 26 disconnected graphs have
ranks $0$, $2$, or $7$, never 11.  This exhaustive support check reproduces
Theorem~\ref{thm:section7-physical-singular-locus}; the theorem's slice proof,
rather than this finite list of representative points, controls every point
of each stratum.  Exact CP tests include a rephased-real full support, an
isolated CP-odd triangle, and a chordless square.  The last has vanishing triangle
odd parts and a nonzero quadrangle odd part, confirming that the seven global
conditions in Eq.~\eqref{eq:section7-global-cp-equations} cannot be replaced
globally by triangle conditions alone.

\subsection{Physics-shaped floating-point stress tests}
\label{subsec:section9-floating-tests}

The numerical benchmark is an illustrative algebraic stress point, not a
fit.  It uses
\begin{equation}
 (m_1^2,m_2^2,m_3^2,m_4^2)
 =(0,7.49\times10^{-5},2.513\times10^{-3},1)\ \mathrm{eV}^2,
 \label{eq:section9-benchmark-masses}
\end{equation}
The mixing convention is fixed by
\begin{equation}
 U=R_{34}\widetilde R_{24}(\theta_{24},\delta_{24})
 \widetilde R_{14}(\theta_{14},\delta_{14})R_{23}
 \widetilde R_{13}(\theta_{13},\delta_{13})R_{12},
 \label{eq:section9-mixing-convention}
\end{equation}
where each complex rotation is the identity off the $(i,j)$ subspace and
\begin{equation}
 \left.\widetilde R_{ij}(\theta_{ij},\delta_{ij})\right|_{ij}
 =\begin{pmatrix}
 c_{ij}&s_{ij}e^{-\ii\delta_{ij}}\\
 -s_{ij}e^{\ii\delta_{ij}}&c_{ij}
 \end{pmatrix},
 \qquad R_{ij}=\widetilde R_{ij}(\theta_{ij},0).
 \label{eq:section9-complex-rotation}
\end{equation}
with
$\theta_{12}=33.45^\circ$, $\theta_{13}=8.62^\circ$,
$\theta_{23}=48.6^\circ$, $(\theta_{14},\theta_{24},\theta_{34})=(8^\circ,
6^\circ,10^\circ)$, and nonzero phases
$(\delta_{13},\delta_{14},\delta_{24})=(230^\circ,70^\circ,140^\circ)$.
It produces a unitary, full-support, CP-violating $M_0$ and avoids accidental
matroid boundaries.

For a constant-density scale we use
\begin{equation}
 a_e=1.52\times10^{-4}\ \mathrm{eV}^2
 \left(\frac{\rho}{{\rm g\,cm}^{-3}}\right)Y_e
 \left(\frac{E}{{\rm GeV}}\right),
 \qquad a_n=a_eY_n,
 \qquad a_s=\frac{a_n}{2}.
 \label{eq:section9-matter-normalization}
\end{equation}
At $\rho=2.85\ \mathrm{g\,cm}^{-3}$, $Y_e=0.5$, $E=2.5$ GeV, and
$Y_n=1$, this gives $a_e=a_n=5.415\times10^{-4}\ \mathrm{eV}^2$ and
$a_s=2.7075\times10^{-4}\ \mathrm{eV}^2$.  An exact rational
$(a_e,a_s)$ grid separately checks the joint kernel: $d_{\mu\tau}$ remains
constant, whereas $d_s^{(\eta)}$ remains at its vacuum value only on the ray
$a_s=\eta a_e$.  This separates fixed composition from independently varying
densities without identifying the algebraic grid with the single physical
benchmark point.

The tied propagation-NSI stress test uses active-block Hermitian spurions of
unit spectral norm and $\varepsilon=0.03$.  Along 101 points of the ray
$a_n=a_e$, the raw triangle and quadrangle drift by $3.49\%$ and $2.70\%$,
while their exactly sheared versions remain constant at relative residuals
$4.5\times10^{-16}$ and $4.3\times10^{-16}$.  Varying the NSI coefficients
independently instead gives the exact translation-rank jump $2\to4$ and is
tested against the joint-kernel condition, not against the tied shear.

For radiative matter, the leading normal-form coefficients are
\begin{equation}
 r_e=2.62136\times10^{-5},
 \qquad r_n=2.75328\times10^{-5}.
 \label{eq:section9-leading-radiative-numbers}
\end{equation}
Both the general cross-product invariant
Eq.~\eqref{eq:section8-general-radiative-invariant} and the leading invariant
Eq.~\eqref{eq:section8-physical-radiative-invariant} are checked on independent
two-density grids.  The full-plane coefficient vectors used by the code are
deterministic generic algebra-test directions, not a complete one-loop matching
result in a specified input scheme.  Only the quoted $(r_e,r_n)$ pair
instantiates the qualified leading $\tau-\mu$ normal form.  The test retains
the distinction between these
coefficients and a scheme-dependent percent-level renormalization of the
electron potential.

\begin{table}[t]
\centering
\small
\caption{Floating-point regression results.  Residuals are maxima over the
stated grids; the exact polynomial statements are certified separately in
Table~\ref{tab:section9-exact-ledger}.}
\label{tab:section9-floating-ledger}
\begin{tabular}{@{}p{0.33\linewidth}p{0.20\linewidth}p{0.18\linewidth}p{0.16\linewidth}@{}}
\toprule
Check & Sample & Maximum result & Acceptance\\
\midrule
spectral Krylov/Pl\"ucker pullback & two 7-point square flows; one nonsquare matrix (5 minors) & $2.58\times10^{-14}$ & $\leq10^{-10}$\\
mass-label and phase covariance & 14 transformations & $5.50\times10^{-15}$ & $\leq10^{-10}$\\
tied-NSI dressed triangle & 101 matter points & $4.5\times10^{-16}$ relative & $\leq10^{-12}$\\
tied-NSI dressed quadrangle & 101 matter points & $4.3\times10^{-16}$ relative & $\leq10^{-12}$\\
Duhamel probability bound & $L=1300$ km, $E=2.5$ GeV & $5.80\times10^{-3}$ & $\leq2\eta_L+10^{-12}$\\
\bottomrule
\end{tabular}
\end{table}

The path-ordered evolution test uses the same propagation NSI at constant
density.  It finds $\eta_L=3.472\times10^{-2}$,
$\max_{\alpha\beta}|\Delta P_{\alpha\beta}|=5.792\times10^{-3}$, and hence
the channel changes lie below the bound $2\eta_L=6.944\times10^{-2}$.  This
single-energy calculation tests Eq.~\eqref{eq:section8-duhamel-bounds}; it is
not an experimental sensitivity forecast.

The largest numerical residual in the algebraic spectral suite is
$2.58\times10^{-14}$.  The exact pair- and triple-degeneracy certificates are
evaluated in flavor coordinates.  The floating spectral tests use absolute
residuals at generic nondegenerate points; the tied-NSI entries use relative
residuals only at benchmark cycles verified to be nonzero.

\subsection{What the computations establish}
\label{subsec:section9-interpretation}

The exact computations reproduce the finite certificates used in the toric
presentation and independently check representative instances of the
principal general identities listed in Table~\ref{tab:section9-exact-ledger}.
The floating tests verify the spectral dictionary,
covariance conventions, and perturbative control estimates.  The all-rank,
all-degree, and all-stratum conclusions continue to follow from the proofs:
finite sampling illustrates those conclusions and detects implementation
errors, while exact algebra establishes them.

\section{Conclusions}
\label{sec:conclusions}

Matter dependence in neutrino oscillations admits a unified algebraic
description as commuting affine translations of the flavor-basis Hamiltonian
together with flavor rephasing.  The physical polynomial invariants are the
joint kernel of these actions, or equivalently the coordinate algebra of the
corresponding quotient.  This formulation distinguishes constants along a
chosen density path, globally regular polynomial generators, and the more
restrictive property of a one-term spectral factorization.

For arbitrary independent diagonal matter directions, the invariant ring
splits into the polynomial algebra of diagonal annihilators and the toric
algebra of balanced off-diagonal cycles, with dimension
$n(n-1)-\ell$.  When the matter spurions have common effective rank $r$, the
block-Krylov rows evolve by a strictly lower-block-triangular system.  If
$qr\leq n$, the resulting maximal minors are exact exterior covariants, and
their spectral pullback is
the complete Krylov--Pl\"ucker expansion.  Rank one yields the universal
Vandermonde product.  In the square case $n=qr$ with $q,r\geq2$, generic
higher rank has no analogous separated Pl\"ucker--spectral factor and no
universal pairwise gap divisor.  The exhaustive $4\times4$, rank-two
reducibility classification at fixed Pl\"ucker point identifies precisely the
three nonzero Pl\"ucker-boundary factorizations and the identically vanishing
stratum.

For the standard fixed-composition $3+1$ fiber, the global off-diagonal algebra is generated
minimally by six two-cycles, eight directed triangles, and six directed
quadrangles.  Its prime toric ideal has height eleven and a unique minimal
Markov basis of 55 indispensable binomials.  On the Hermitian form these give
31 CP-even and 24 CP-odd real equations.  The seven triangle and quadrangle
odd parts provide a global CP criterion, including chordless-square supports
that are invisible to triangle phases.  The physical quotient is smooth
exactly on connected support graphs; its singular intersection is the union
of the seven maximal bipartition faces.  These global results explain why the
familiar eleven quantities work on dense sign-resolved charts but do not form
a global polynomial generating set.

Additional matter physics fits naturally into the same hierarchy.  Several
composition profiles require the joint kernel of the corresponding species
directions.  In the universal simultaneous-spurion algebra, off-diagonal
propagation NSI tied to diagonal anchors admit an exact spurion-covariant shear
and hence an exactly dressed cycle algebra before specialization to a fixed
spurion.  Independently varying NSI shrink the quotient discontinuously at
nonzero coupling precisely when their directions enlarge the translation
span.  Diagonal radiative corrections preserve every off-diagonal cycle while
rotating the diagonal annihilator subspace and possibly lifting the effective
Krylov support rank; the displayed radiative coefficients are a leading normal
form, not a complete one-loop or EFT matching calculation.  Spectral
perturbation and full evolution are controlled by distinct
parameters: a uniform eigengap for instantaneous projectors and the Duhamel
integral for path-ordered probabilities.

The resulting framework supplies algebraic inputs and exact consistency
relations that can be incorporated into reconstruction, density and energy
averaging, and NSI analyses.  Natural extensions are automated invariant generation for
larger flavor graphs, factorization over general realizable matroid strata,
SMEFT--LEFT matching of the matter-spurion space across thresholds, and
propagation algebras for non-Hermitian, open-system, or neutrino
self-interaction Hamiltonians.  The present work does not perform a global
sterile/NSI fit or sensitivity forecast, treat source/detector NSI or establish
a UV completion, remove path ordering, or by itself resolve generalized
ordering degeneracies.  In this form, the nonlinear response of
effective masses and mixing parameters is organized by a low-dimensional
affine geometry in the flavor basis.

\bibliographystyle{unsrt}
\bibliography{low_rank_matter_flows}

\end{document}